\providecommand{\U}[1]{\protect\rule{.1in}{.1in}}
\providecommand{\U}[1]{\protect\rule{.1in}{.1in}}
\newtheorem{theorem}{Theorem}
\newtheorem{corollary}[theorem]{Corollary}
\newtheorem{definition}{Definition}
\newtheorem{lemma}{Lemma}
\newtheorem{remark}{Remark}
\def\bi{\begin{itemize}}
\def\ei{\end{itemize}}
\def\be{\begin{equation}}
\def\ee{\end{equation}}
\def\bea{\begin{eqnarray}}
\def\eea{\end{eqnarray}}
\def\ben{\begin{eqnarray*}}
\def\een{\end{eqnarray*}}
\def\>{\rangle}
\def\<{\langle}
\newcommand{\1} I
\def\*{\star}
\def\0{{\mathbf{0}}}
\def\1{{\mathbf{1}}}
\def\2{{\mathbf{2}}}
\def\3{{\mathbf{3}}}
\def\4{{\mathbf{4}}}
\def\5{{\mathbf{5}}}
\def\6{{\mathbf{6}}}
\def\7{{\mathbf{7}}}
\def\8{{\mathbf{8}}}
\def\9{{\mathbf{9}}}
\begin{document}

\title{Trading classical communication, quantum communication, and entanglement in
quantum Shannon theory}
\author{Min-Hsiu Hsieh and Mark M. Wilde\thanks{Min-Hsiu Hsieh is with the ERATO-SORST
Quantum Computation and Information Project, Japan Science and Technology
Agency 5-28-3, Hongo, Bunkyo-ku, Tokyo, Japan and Mark M. Wilde conducted this
research with the Centre for Quantum Technologies, National University of
Singapore, 3 Science Drive 2, Singapore 117543, the Electronic Systems
Division, Science Applications International Corporation, 4001 North Fairfax
Drive, Arlington, Virginia, USA\ 22203, and the School of Computer Science,
McGill University, Montreal, Quebec, Canada H3H 2S5 (E-mail:
minhsiuh@gmail.com and mark.wilde@mcgill.ca)}}
\maketitle

\begin{abstract}
We give trade-offs between classical communication, quantum communication, and
entanglement for processing information in the Shannon-theoretic setting. We
first prove a \textquotedblleft unit-resource\textquotedblright\ capacity
theorem that applies to the scenario where only the above three noiseless
resources are available for consumption or generation. The optimal strategy
mixes the three fundamental protocols of teleportation, super-dense coding,
and entanglement distribution. We then provide an achievable rate region and a
matching multi-letter converse for the \textquotedblleft direct
static\textquotedblright\ capacity theorem. This theorem applies to the
scenario where a large number of copies of a noisy bipartite state are
available (in addition to consumption or generation of the above three
noiseless resources). Our coding strategy involves a protocol that we name the
\textit{classically-assisted state redistribution protocol }and the three
fundamental protocols. We finally provide an achievable rate region and a
matching mutli-letter converse for the \textquotedblleft direct
dynamic\textquotedblright\ capacity theorem. This theorem applies to the
scenario where a large number of uses of a noisy quantum channel are available
in addition to the consumption or generation of the three noiseless resources.
Our coding strategy combines the \textit{classically-enhanced father protocol}
with the three fundamental unit protocols.

\end{abstract}

\begin{IEEEkeywords}quantum Shannon theory, quantum communication, classical
communication, entanglement, entanglement-assisted quantum coding, direct
dynamic capacity theorem, direct static capacity theorem\end{IEEEkeywords}

\section{Introduction}

The publication of Shannon's classic article in 1948 formally marks the
beginning of information theory \cite{Shannon48}. Shannon's article states two
fundamental theorems: the source coding theorem and the channel coding
theorem. The source coding theorem concerns processing of a \textit{static}
resource---an information source that emits a symbol from an alphabet where
each symbol occurs with some probability. The proof of the theorem appeals to
the asymptotic setting where many copies of the static resource are available,
i.e., the information source emits a large number of symbols. The result of
the source coding theorem is a tractable lower bound on the compressibility of
the static resource. On the other hand, the channel coding theorem applies to
a \textit{dynamic} resource. An example of a dynamic resource is a noisy
bit-flip channel that flips each input bit with a certain probability. The
proof of the channel coding theorem again appeals to the asymptotic setting
where a sender consumes a large number of independent and identically
distributed (IID)\ uses of the channel to transmit information to a receiver.
The result of the channel coding theorem is a tractable upper bound on the
transmission rate for the dynamic resource.

Quantum Shannon theory has emerged in recent years as the quantum
generalization of Shannon's information theory. Schumacher established a
quantum source coding theorem that is a \textquotedblleft
quantized\textquotedblright\ version of Shannon's source coding theorem
\cite{JS94,Sch95}. Schumacher's static resource is a quantum information
source that emits a given quantum state with a certain probability. Holevo,
Schumacher, and Westmoreland followed by proving that the Holevo information
of a quantum channel is an achievable rate for transmitting classical
information over a noisy quantum channel \cite{Hol98,SW97}. Lloyd, Shor, and
Devetak then proved that the coherent information of a quantum channel is an
achievable rate for transmitting quantum data over a noisy quantum channel
\cite{Lloyd96,Shor02,Devetak03}. Both of these quantum channel coding theorems
exploit a dynamic resource---a noisy quantum channel that connects a sender to
a receiver.

Entanglement is a \textit{static resource} shared between a sender and
receiver. It is \textquotedblleft static\textquotedblright\ because a sender
and receiver cannot exploit entanglement alone to generate either classical
communication or quantum communication or both. However, they can exploit
entanglement and classical communication to communicate quantum
information---this protocol is the well-known teleportation protocol
\cite{BBCJPW93}. The super-dense coding protocol \cite{BW92} doubles the
classical capacity of a noiseless quantum channel by exploiting entanglement
in addition to the use of the noiseless quantum channel. These two protocols
and others demonstrate that entanglement is a valuable resource in quantum
information processing.

Several researchers have shown how to process entanglement in the asymptotic
setting where a large number of identical copies of an entangled state or a
large number of independent uses of a noisy channel are available to generate
entanglement. Bennett \textit{et al}. proved a coding theorem for entanglement
concentration that determines how much entanglement (in terms of maximally
entangled states) a sender and receiver can generate from pure bipartite
states \cite{BBPS96}. The reverse problem of entanglement dilution
\cite{LP99,HW02,HL02} shows that entanglement is not an inter-convertible
static resource (i.e, simulating pure bipartite states from maximally
entangled states requires a sublinear amount of classical communication). A
dynamic resource can also generate entanglement. Devetak proved a coding
theorem that determines how much entanglement a sender and receiver can
generate by sending quantum states through a noisy quantum channel
\cite{Devetak03}.

Quantum Shannon theory began with the aforementioned single-resource coding
theorems
\cite{JS94,Sch95,Hol98,SW97,Lloyd96,Shor02,Devetak03,BBPS96,LP99,HW02,HL02,DW03c}
and has advanced to include double-resource coding theorems---their
corresponding protocols either generate two different resources or they
generate one resource with the help of another
\cite{BDSW96,BSST01,Hol01a,HHHLT01,DW02,DS03,DW03b,DHW03,arx2004shor,DHW05RI}.
The result of each of these scenarios was an achievable two-dimensional
trade-off region for the resources involved in the protocols.

Quantum information theorists have organized many of the existing protocols
into a family tree \cite{DHW03,DHW05RI,ADHW06FQSW}. Furthermore, Devetak
\textit{et al}. proposed the resource inequality framework that establishes
many classical and quantum coding theorems as inter-conversions between
\textit{non-local information resources}~\cite{DHW03,DHW05RI}. The language of
resource inequalities provides structural insights into the relationships
between coding theorems in quantum Shannon theory and greatly simplifies the
development of new coding schemes. An example of one of the resource
trade-offs is the so-called \textquotedblleft father\textquotedblright%
\ achievable rate region. The father protocol exploits a noisy quantum channel
and shared noiseless entanglement to generate noiseless quantum communication.
The father achievable rate region illustrates trade-offs between entanglement
consumption and quantum communication.

In this article, we advance quantum Shannon theory to the triple resource
setting by giving the full triple trade-offs for both the static and dynamic
scenarios. This triple trade-off solution represents one of the most general
scenarios considered in quantum Shannon theory. Here, we study the interplay
of the most important noiseless resources in the theory of quantum
communication: namely, classical communication, quantum communication, and
entanglement, with general noisy resources.

The noisy static resource that we consider here is a shared noisy bipartite
state, and the dynamic resource that we consider is a noisy quantum channel.
We again appeal to the asymptotic setting where a large number of independent
copies or uses of the respective static or dynamic noisy resource are
available. For both the static and dynamic scenarios, we assume that the
sender and receiver either consume or generate noiseless classical
communication, noiseless quantum communication, and noiseless entanglement in
addition to the consumption of the noisy resource. The result is a
three-dimensional achievable rate region that gives trade-offs for the three
noiseless resources in both the static and dynamic scenarios. The rate of a
noiseless resource is negative if a protocol consumes the corresponding
resource, and its rate is positive if a protocol generates the resource. The
above interpretation of a negative rate first appeared in
Refs.~\cite{nature2005horodecki,cmp2007HOW} with the state merging protocol
and with subsequent appearance, for example, in
Refs.~\cite{DHW05RI,YD07QI,devetak:230501}. The present article's solution for
the static and dynamic scenarios contains both negative and positive rates.

Our current formulas characterizing the triple trade-off capacity regions are
alas of a \textquotedblleft multi-letter\textquotedblright\ nature, a problem
that plagues many results in quantum Shannon theory. A multi-letter formula is
one that involves an intractable optimization over an arbitrary number of uses
of a channel or a state, as opposed to a more desirable \textquotedblleft
single-letter\textquotedblright\ formula that involves a tractable
optimization over a single use of a channel or state. In principle, a
multi-letter characterization is an optimal solution, but the multi-letter
nature of our characterization of the capacity region implies that there may
be a slight room for improvement in the formulas when considering an
optimization over a finite number of uses---sometimes suboptimal protocols can
lead to an optimal characterization of a capacity region when taking the limit
over an arbitrary number of uses of a channel or a state.

Despite the multi-letter nature of our characterization in the general case,
we can find examples of shared states and quantum channels for which the
regions single-letterize. We show that the static region single-letterizes for
the special case of an \textquotedblleft erased state,\textquotedblright\ a
state that two parties obtain by sending one half of a maximally entangled
Bell state through an erasure channel. Our argument for single-letterization
is similar to an argument we presented in Ref.~\cite{HW08GFP}\ for the erasure
channel. We also show that the dynamic region single-letterizes for the case
of a qubit dephasing channel. This proof builds on earlier work in
Refs.~\cite{HW08GFP,BHTW10} to show that single-letterization holds. In a
later work~\cite{WH10}, we build on the efforts in Ref.~\cite{HW08GFP,BHTW10}
to give a concise, direct argument for single-letterization of the dynamic
capacity region for the full class of Hadamard channels.

An interesting aspect of this article is that we employ basic topological
arguments and \textit{reductio ad absurdum} arguments in the multi-letter
converse proofs of the triple trade-off capacity theorems. To our knowledge,
the mathematical techniques that we use are different from prior techniques in
the classical information theory literature or the quantum Shannon theoretic
literature (though, there are some connections to the techniques in
Ref.~\cite{DHW05RI}). Many times, we apply well-known results in quantum
Shannon theory to reduce a converse proof to that of a previously known
protocol (for example, we apply the well-known result that forward classical
communication does not increase the quantum capacity of a quantum channel
\cite{BDSW96,BKN98}). For the constructive part of the coding theorems, there
is no need to employ asymptotic arguments such as typical subspace techniques
\cite{NC00}\ or the operator Chernoff bound \cite{WinterPhD} because the
resource inequality framework is sufficient to prove the coding theorems. The
simplicity of our arguments perhaps reflects the maturation of the field of
quantum Shannon theory.

One benefit of the direct dynamic capacity theorem is that we are able to
answer a question concerning the use of entanglement-assisted coding
\cite{BDH06,BDH06IEEE,HBD07,arx2007wildeEA,arx2007wildeEAQCC,HBD08QLDPC,arx2008wildeUQCC,arx2008wildeGEAQCC}%
\ versus the use of teleportation. We show exactly how entanglement-assisted
coding is superior to mere teleportation. We consider this result an important
corollary of the results in this article because it is rare that quantum
Shannon theory gives insight into practical error correction schemes.

We structure this article as follows. In the next section, we establish some
definitions and notation that prove useful for later sections.
Section~\ref{sec:summary}\ briefly summarizes our three main theorems:\ the
unit resource capacity region, the direct static capacity region, and the
direct dynamic capacity theorem. In Section~\ref{sec:unit-resource}, we prove
the optimality of a unit resource capacity region. The unit resource capacity
region does not include a static or dynamic resource, but includes the
resources of noiseless classical communication, noiseless quantum
communication, and noiseless entanglement only. The unit resource capacity
theorem shows that a mixed strategy combining teleportation \cite{BBCJPW93},
super-dense coding \cite{BW92}, and entanglement distribution \cite{DHW03}\ is
optimal whenever a static or dynamic noisy resource is not available.
Section~\ref{sec:static}\ states and proves the direct static capacity
theorem. This theorem determines the trade-offs between the three noiseless
resources when a noisy static resource is available. Section~\ref{sec:dynamic}%
\ states and proves the direct dynamic capacity theorem. This theorem
determines the triple trade-offs when a noisy dynamic resource is available.
We end with a discussion of the results in this article and future open problems.

\section{Definitions and Notation}

\label{sec:notation}We first establish some notation before proceeding to the
main theorems. We review the notation for the three noiseless unit resources
and that for resource inequalities. We establish some notation for handling
geometric objects such as lines, quadrants, and octants in the
three-dimensional space of classical communication, quantum communication, and entanglement.

The three fundamental resources are noiseless classical communication,
noiseless quantum communication, and noiseless entanglement. Let $\left[
c\rightarrow c\right]  $ denote one \textit{cbit} of noiseless forward
classical communication, let $\left[  q\rightarrow q\right]  $ denote one
\textit{qubit} of noiseless forward quantum communication, and let $\left[
qq\right]  $ denote one \textit{ebit} of shared noiseless entanglement
\cite{DHW03,DHW05RI}. The ebit is a maximally entangled state%
\[
\left\vert \Phi^{+}\right\rangle ^{AB}\equiv(\left\vert 00\right\rangle
^{AB}+\left\vert 11\right\rangle ^{AB})/\sqrt{2},
\]
shared between two parties $A$ and $B$ who bear the respective names Alice and
Bob. The ebit $\left[  qq\right]  $ is a unit static resource and both the
cbit $\left[  c\rightarrow c\right]  $ and the qubit $\left[  q\rightarrow
q\right]  $ are unit dynamic resources.

We consider two noisy resources:\ a noisy static resource and a noisy dynamic
resource. Let $\rho^{AB}$ denote the noisy static resource:\ a noisy bipartite
state shared between Alice and Bob. Let $\mathcal{N}^{A^{\prime}\rightarrow
B}$ denote a noisy dynamic resource:\ a noisy quantum channel that connects
Alice to Bob. Throughout this article, the dynamic resource $\mathcal{N}%
^{A^{\prime}\rightarrow B}$ is a completely positive and trace-preserving map
that takes density operators in the Hilbert space of Alice's system
$A^{\prime}$ to Bob's system $B$.

Resource inequalities are a compact, yet rigorous, way to state coding
theorems in quantum Shannon theory \cite{DHW03,DHW05RI}. In this article, we
formulate resource inequalities that consume the above noisy resources and
either consume or generate the noiseless resources. An example from
Refs.~\cite{DHW03,DHW05RI} is the following \textquotedblleft
mother\textquotedblright\ resource inequality:%
\[
\langle\rho^{AB}\rangle+\left\vert Q\right\vert \left[  q\rightarrow q\right]
\geq E\left[  qq\right]  .
\]
It states that a large number $n$\ of copies of the state $\rho^{AB}$ and
$n\left\vert Q\right\vert $ uses of a noiseless qubit channel are sufficient
to generate $nE$ ebits of entanglement while tolerating an arbitrarily
small\ error in the fidelity of the produced ebits. The rates $Q$ and $E$ of
respective qubit channel consumption and entanglement generation are entropic
quantities:%
\begin{align*}
\left\vert Q\right\vert  &  =\frac{1}{2}I\left(  A;E\right)  ,\\
E &  =\frac{1}{2}I\left(  A;B\right)  .
\end{align*}
See Ref.~\cite{Yard05a}\ for definitions of entropy and mutual information.
The entropic quantities are with respect to a state $\left\vert \psi
\right\rangle ^{EAB}$ where $\left\vert \psi\right\rangle ^{EAB}$ is a
purification of the noisy static resource state $\rho^{AB}$ and $E$ is the
purifying reference system (it should be clear when $E$ refers to the
purifying system and when it refers to the rate of entanglement generation).
We take the convention that the rate $Q$ is negative and $E$ is positive
because the protocol consumes quantum communication and generates entanglement
(this convention is the same as in
Refs.~\cite{nature2005horodecki,cmp2007HOW,DHW05RI,YD07QI,devetak:230501}).

We make several geometric arguments throughout this article because the static
and dynamic capacity regions lie in a three-dimensional space with points that
are rate triples $\left(  C,Q,E\right)  $. $C$ represents the rate of
classical communication, $Q$ the rate of quantum communication, and $E$ the
rate of entanglement consumption or generation. Let $L$ denote a line, $Q$ a
quadrant, and $O$ an octant in this space (it should be clear from context
whether $Q$ refers to quantum communication or \textquotedblleft
quadrant\textquotedblright). For example, $L^{-00}$ denotes a line going in
the direction of negative classical communication:%
\[
L^{-00}\equiv\left\{  \alpha\left(  -1,0,0\right)  :\alpha\geq0\right\}  .
\]
$Q^{0+-}$ denotes the quadrant where there is zero classical communication,
generation of quantum communication, and consumption of entanglement:%
\[
Q^{0+-}\equiv\left\{  \alpha\left(  0,1,0\right)  +\beta\left(  0,0,-1\right)
:\alpha,\beta\geq0\right\}  .
\]
$O^{+-+}$ denotes the octant where there is generation of classical
communication, consumption of quantum communication, and generation of
entanglement:%
\[
O^{+-+}\equiv\left\{
\begin{array}
[c]{c}%
\alpha\left(  1,0,0\right)  +\beta\left(  0,-1,0\right)  +\gamma\left(
0,0,1\right) \\
:\alpha,\beta,\gamma\geq0
\end{array}
\right\}  .
\]

It proves useful to have a \textquotedblleft set addition\textquotedblright%
\ operation between two regions $A$ and $B$:%
\[
A+B\equiv\{a+b:a\in A,b\in B\}.
\]
The following relations hold%
\begin{align*}
Q^{0+-}  &  =L^{0+0}+L^{00-},\\
O^{+-+}  &  =L^{+00}+L^{0-0}+L^{00+},
\end{align*}
by using the above definition. Set addition of the same line gives the line
itself, e.g., $L^{+00}+L^{+00}=L^{+00}$. This set equality holds because of
the definition of set addition and the definition of the line. A similar
result also holds for addition of the same quadrant or octant. We define the
set subtraction of two regions $A$ and $B$ as follows:%
\[
A-B\equiv\{a-b:a\in A,b\in B\}.
\]
According to this definition, it follows that $L^{+00}\subseteq L^{+00}%
-L^{+00}=L^{\pm00}$ where $L^{\pm00}$ represents the full line of classical communication.

\section{Summary of Results}

\label{sec:summary}We first provide an accessible overview of the main results
in this article. The interested reader can then delve into later sections of
the article for mathematical details of the proofs.

\subsection{The Unit Resource Capacity Region}

Our first result determines what rates are achievable when there is no noisy
resource---the only resources available are noiseless classical communication,
noiseless quantum communication, and noiseless entanglement. We provide a
three-dimensional \textquotedblleft unit resource\textquotedblright\ capacity
region that lives in a three-dimensional space with points $\left(
C,Q,E\right)  $.

Three important protocols relate the three fundamental noiseless resources.
These protocols are teleportation (TP)\ \cite{BBCJPW93}, super-dense coding
(SD)\ \cite{BW92}, and entanglement distribution (ED)\ \cite{DHW03}. We can
express these three protocols as resource inequalities. The resource
inequality for teleportation is
\begin{equation}
2[c\rightarrow c]+[qq]\geq\lbrack q\rightarrow q], \label{TP}%
\end{equation}
where the meaning of the resource inequality is as before---the protocol
consumes the resources on the left in order to produce the resource on the
right. Super-dense coding corresponds to the following inequality:%
\begin{equation}
\lbrack q\rightarrow q]+[qq]\geq2[c\rightarrow c], \label{SD}%
\end{equation}
and entanglement distribution is as follows:%
\begin{equation}
\lbrack q\rightarrow q]\geq\lbrack qq]. \label{ED}%
\end{equation}
A sender implements ED\ by transmitting half of a locally prepared Bell state
$\left\vert \Phi^{+}\right\rangle $ through a noiseless qubit channel.

In any trade-off problem, we have the achievable rate region and the capacity
region. The \textit{achievable rate region} is the set of all rate triples
that one can achieve with a specific, known protocol. The \textit{capacity
region} divides the line between what is physically achievable and what is
not---there is no method to achieve any point outside the capacity region. We
define it with respect to a given quantum information processing task. In our
development below, we consider the achievable rate region and the capacity
region of the three unit resources of noiseless classical communication,
noiseless quantum communication, and noiseless entanglement.

\begin{definition}
Let $\widetilde{\mathcal{C}}_{\text{U}}$ denote the unit resource achievable
rate region. It consists of all the rate triples $(C,Q,E)$ obtainable from
linear combinations of the above protocols: TP, SD, and ED.
\end{definition}

The development in Section~\ref{sec:unit-resource} below demonstrates that the
achievable rate region $\widetilde{\mathcal{C}}_{\text{U}}$ in the above
definition is equivalent to all rate triples satisfying the following
inequalities:%
\begin{equation}
C+Q+E\leq0,\ \ \ \ Q+E\leq0,\ \ \ \ C+2Q\leq0.\label{utriple}%
\end{equation}

\begin{definition}
The unit resource capacity region $\mathcal{C}_{\text{U}}$ is the closure of
the set of all points $(C,Q,E)$ in the $C,Q,E$ space satisfying the following
resource inequality:%
\begin{equation}
0\geq C[c\rightarrow c]+Q[q\rightarrow q]+E[qq].
\label{eq:unit-capacity-region}%
\end{equation}

\end{definition}

The above notation may seem confusing at first glance until we establish the
convention that a resource with a negative rate implicitly belongs on the
left-hand side of the resource inequality.

Theorem~\ref{ut} below is our first main result, giving the optimal
three-dimensional capacity region for the three unit resources.

\begin{theorem}
\label{ut} The unit resource capacity region $\mathcal{C}_{\text{U}}$ is
equivalent to the unit resource achievable rate region $\widetilde
{\mathcal{C}}_{\text{U}}$:%
\[
\mathcal{C}_{\text{U}}=\widetilde{\mathcal{C}}_{\text{U}}.
\]

\end{theorem}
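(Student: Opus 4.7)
My plan is to identify $\widetilde{\mathcal{C}}_{\text{U}}$ explicitly with the polyhedral set described by (\ref{utriple}), and then establish the two inclusions $\widetilde{\mathcal{C}}_{\text{U}}\subseteq\mathcal{C}_{\text{U}}$ and $\mathcal{C}_{\text{U}}\subseteq\widetilde{\mathcal{C}}_{\text{U}}$ in turn, with the latter being the main obstacle.

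For the explicit polyhedral description, I would read off from (\ref{TP})--(\ref{ED}) the three protocol rate vectors $v_{\text{TP}}=(-2,+1,-1)$, $v_{\text{SD}}=(+2,-1,-1)$, and $v_{\text{ED}}=(0,-1,+1)$ in the $(C,Q,E)$ space, so that $\widetilde{\mathcal{C}}_{\text{U}}$ is by definition the nonnegative cone generated by these three vectors. A direct computation shows that for any $\alpha,\beta,\gamma\ge 0$ the point $(C,Q,E)=\alpha v_{\text{TP}}+\beta v_{\text{SD}}+\gamma v_{\text{ED}}$ satisfies $C+Q+E=-2\alpha\le 0$, $Q+E=-2\beta\le 0$, and $C+2Q=-2\gamma\le 0$. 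Conversely, given $(C,Q,E)$ obeying all three inequalities, the explicit coefficients $\alpha=-(C+Q+E)/2$, $\beta=-(Q+E)/2$, and $\gamma=-(C+2Q)/2$ are nonnegative and reproduce $(C,Q,E)$, pinning down $\widetilde{\mathcal{C}}_{\text{U}}$ as exactly the polytope defined by (\ref{utriple}).

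The direction $\widetilde{\mathcal{C}}_{\text{U}}\subseteq\mathcal{C}_{\text{U}}$ is then immediate: since TP, SD, and ED each realize their own resource inequality, any nonnegative combination yields a bona fide composite protocol whose rate vector satisfies the capacity resource inequality (\ref{eq:unit-capacity-region}); closure is automatic because the cone generated by three vectors is already closed.

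The main obstacle is the converse $\mathcal{C}_{\text{U}}\subseteq\widetilde{\mathcal{C}}_{\text{U}}$, for which I would argue by reductio ad absurdum, one bound at a time. Suppose some achievable triple $(C_0,Q_0,E_0)$ violates, say, $C+2Q\le 0$. Then by composing this protocol with a carefully chosen nonnegative amount of super-dense coding---selected so that the combined $Q$-component is zeroed out while a strictly positive classical surplus remains---one obtains a protocol producing classical communication at a positive rate using only shared entanglement as a nontrivial input, contradicting the elementary fact that a bipartite entangled state has a fixed reduced state on each marginal and hence cannot by itself convey any classical information. A violation of $Q+E\le 0$ would be composed with entanglement distribution (and possibly super-dense coding) to manufacture net entanglement out of nothing but classical communication, contradicting the LOCC monotonicity of entanglement; and a violation of $C+Q+E\le 0$ would be composed with teleportation to reduce to one of the two preceding impossibilities. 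The delicate step in each case is picking the composition coefficients so that the resulting ``impossible'' protocol lands squarely in the regime of a standard no-go theorem, which is exactly the reductio ad absurdum strategy flagged in the introduction.
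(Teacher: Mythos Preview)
Your polyhedral identification of $\widetilde{\mathcal{C}}_{\text{U}}$ and the direct-coding inclusion are correct and match the paper. The converse sketch, however, has a genuine gap in the third inequality.

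For $C+Q+E\le 0$ you propose to compose a violating triple with teleportation and ``reduce to one of the two preceding impossibilities.'' This cannot work. Observe that both linear functionals $C+2Q$ and $Q+E$ are \emph{invariant} under adding any multiple of $v_{\text{TP}}=(-2,1,-1)$ (indeed $-2+2\cdot 1=0$ and $1+(-1)=0$), so teleportation never converts a violation of $C+Q+E\le 0$ into a violation of either of the other two bounds. More decisively, the two no-go facts you invoke (entanglement alone cannot signal; LOCC cannot create entanglement) are by themselves too weak: the point $(C,Q,E)=(3/2,-1,0)$ satisfies $C+2Q\le 0$ and $Q+E\le 0$ but violates $C+Q+E\le 0$, and a short computation shows that $(3/2,-1,0)+\alpha v_{\text{TP}}+\beta v_{\text{SD}}+\gamma v_{\text{ED}}$ never lands in the region forbidden by either postulate for any $\alpha,\beta,\gamma\ge 0$. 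What actually excludes this point is the Holevo bound (one cbit per qubit in the absence of entanglement), and the paper invokes it explicitly: in octants $(+,-,+)$ and $(+,-,-)$ the paper composes with SD or ED to reach the $E=0$ plane and then applies Holevo to obtain $C+Q+E\le 0$. Your argument needs this third ingredient; the two postulates do not suffice.

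A smaller issue: your named compositions for the other two bounds only work for one sign of $Q_0$. Super-dense coding has $Q$-component $-1$, so it can zero out $Q$ only when $Q_0\ge 0$; if $C_0+2Q_0>0$ with $Q_0<0$ you must instead feed $|Q_0|$ units of teleportation (which is legitimate since $C_0>2|Q_0|$) to reach $(C_0+2Q_0,0,E_0+Q_0)$ and then invoke Postulate~1. Similarly, for $Q+E\le 0$ with $Q_0<0$ it is teleportation, not ED or SD, that zeros out $Q$. The paper avoids this ad hoc sign-splitting by organizing the converse octant by octant, which is exactly the systematic case analysis your ``delicate step'' alludes to but does not carry out.
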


The complete proof is in Section~\ref{sec:unit-resource}. It involves several
proofs by contradiction that apply to each octant of the $\left(
C,Q,E\right)  $ space. It exploits two postulates: 1)\ ebits alone cannot
generate cbits or qubits and 2)\ cbits alone cannot generate ebits or qubits.

\subsection{Direct Static Capacity Region}

Our second result provides a solution to the scenario when a noisy static
resource is available in addition to the three noiseless resources. We
determine a three-dimensional \textquotedblleft direct
static\textquotedblright\ capacity region that gives multi-letter formulas for
the full trade-off between the three fundamental noiseless resources.

\begin{definition}
The direct static capacity region $\mathcal{C}_{\text{DS}}(\rho^{AB})$ of a
noisy bipartite state $\rho^{AB}$ is a three-dimensional region in the
$(C,Q,E)$ space. It is the closure of the set of all points $(C,Q,E)$
satisfying the following resource inequality:%
\begin{equation}
\langle\rho^{AB}\rangle\geq C[c\rightarrow c]+Q[q\rightarrow q]+E[qq].
\label{eq:direct-static-cap-def}%
\end{equation}

\end{definition}

The rates $C$, $Q$, and $E$ can either be negative or positive with the same
interpretation as in the previous section.

We first introduce a new protocol that proves to be useful in determining the
achievable rate region for the static case. We name this protocol
\textquotedblleft classically-assisted quantum state
redistribution.\textquotedblright

\begin{lemma}
\label{thm:CAM-RI}The following \textquotedblleft classically-assisted quantum
state redistribution\textquotedblright\ resource inequality holds%
\begin{multline}
\langle\rho^{AB}\rangle+\frac{1}{2}I(A^{\prime};E|E^{\prime}X)_{\sigma
}[q\rightarrow q]+I(X;E|B)_{\sigma}[c\rightarrow c]\label{GMP}\\
\geq\frac{1}{2}\left(  I(A^{\prime};B|X)_{\sigma}-I(A^{\prime};E^{\prime
}|X)_{\sigma}\right)  [qq]
\end{multline}
for a static resource $\rho^{AB}$ and for any remote instrument $\mathcal{T}%
^{A\rightarrow A^{\prime}X}$. In the above resource inequality, the state
$\sigma^{XA^{\prime}BEE^{\prime}}$ is defined by%
\begin{equation}
\sigma^{XA^{\prime}BEE^{\prime}}\equiv\widetilde{\mathcal{T}}(\psi
^{ABE}),\label{eq:instrument-state}%
\end{equation}
where $\left\vert \psi\right\rangle \left\langle \psi\right\vert ^{ABE}$ is
some purification of $\rho^{AB}$ and $\widetilde{\mathcal{T}}^{A\rightarrow
A^{\prime}E^{\prime}X}$ is an extension of $\mathcal{T}^{A\rightarrow
A^{\prime}X}$.
\end{lemma}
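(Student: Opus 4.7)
My plan is to realize the resource inequality (\ref{GMP}) as a composition of three steps: first apply the instrument's isometric extension at Alice (a free local operation); then share the classical outcome $X$ with Bob at cost $I(X;E|B)$ cbits; finally, conditioned on $X$, run Devetak--Yard state redistribution to transfer Alice's $A'$ to Bob, which generates the claimed number of ebits as a signed cost and consumes the claimed number of qubit channels. This ordering matches the name ``classically-assisted quantum state redistribution'' given in the statement.

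For step (i), I would fix a Stinespring isometry $V:A\to A'E'X$ for $\widetilde{\mathcal{T}}$; Alice applies $V$ to her share of $|\psi\rangle^{ABE}$, and dephasing $X$ produces the state $\sigma^{XA'BEE'}$ of \eq{instrument-state}, with $X$ classical at Alice, $A'$ quantum at Alice, $B$ at Bob, and $EE'$ in the environment. For step (iii), conditioned on each $x$ the pure state $|\sigma_x\rangle^{A'E'BE}$ admits Devetak--Yard state redistribution of $A'$ from Alice to Bob at qubit cost $\tfrac{1}{2}I(A';E|B)_{\sigma_x}$ and signed entanglement cost $\tfrac{1}{2}[I(A';E')_{\sigma_x}-I(A';B)_{\sigma_x}]$, where a negative cost means ebit generation. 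Because each $|\sigma_x\rangle^{A'E'BE}$ is pure, the identities $H(A'|EB)_{\sigma_x}=-H(A'|E')_{\sigma_x}$ and $H(A'|EE')_{\sigma_x}=-H(A'|B)_{\sigma_x}$ yield the conditional equality $I(A';E|B)_{\sigma_x}=I(A';E|E')_{\sigma_x}$; averaging over $x$ then gives the qubit rate $\tfrac{1}{2}I(A';E|E'X)$ and ebit rate $\tfrac{1}{2}[I(A';B|X)-I(A';E'|X)]$ that appear in the lemma.

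The main obstacle is step (ii): justifying that the classical transmission of $X$ can be performed at the reduced rate $I(X;E|B)=H(X|B)-H(X|BE)$ rather than the naive cq-Slepian--Wolf rate $H(X|B)$ that would arise from simply sending $X$ to Bob using his quantum side information $B$. The savings of $H(X|BE)$ cbits must come from coherently interleaving the $X$-transmission with the quantum resources already deployed in the state-redistribution step, paralleling the way private-classical-capacity and secret-key-distillation rates for bipartite states are derived by subtracting the environment's information. I would make this precise by combining a Devetak--Winter packing lemma for classical data with the decoupling-based decoder of state redistribution, and close the protocol with standard \textsc{ed}/\textsc{tp}/\textsc{sd} resource trades to reconcile any residual rates.
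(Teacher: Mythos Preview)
Your state-redistribution step (iii) and the accompanying entropic identities are correct and coincide with the paper's use of Devetak--Yard redistribution. There is, however, an internal inconsistency in your setup: you place $E'$ ``in the environment,'' yet the redistribution rates you quote, $\tfrac12 I(A';E|E')$ for qubits and $\tfrac12[I(A';E')-I(A';B)]$ for ebits, require $E'$ to be held by Alice as sender side information. The paper keeps $E'$ at Alice for exactly this reason.

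The substantive gap is step (ii). If Alice first applies the instrument and then attempts to transmit the realized outcome $X$ to Bob, the converse to classical data compression with quantum side information forces a rate of at least $H(X|B)$, strictly larger than $I(X;E|B)=H(X|B)-H(X|BE)$. Your proposed ``coherent interleaving with a Devetak--Winter packing lemma'' does not circumvent this, because the redistribution decoder in step (iii) is itself conditioned on $X$ and therefore cannot supply Bob with extra side information about $X$ before $X$ is known to him. The paper's mechanism is different in kind: rather than applying the instrument and sending the outcome, Alice and Bob \emph{simulate} the instrument via Winter's measurement compression combined with Devetak--Winter quantum-side-information decoding (what the paper calls ICQSI). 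This simulation costs $I(X;E|B)$ cbits together with $H(X|BE)$ bits of \emph{common randomness}. After running state redistribution, the output resource is pure (ebits), so the common randomness can be eliminated by derandomization, leaving the net cbit cost $I(X;E|B)$. The ingredients your plan is missing are thus (a) instrument/measurement compression in place of transmitting a pre-realized $X$, and (b) the common-randomness-plus-derandomization step that accounts for the $H(X|BE)$ savings.
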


The above quantities $I(A^{\prime};E|E^{\prime}X)_{\sigma}$, $I(X;E|B)_{\sigma
}$, and $I(A^{\prime};B|X)_{\sigma}-I(A^{\prime};E^{\prime}|X)_{\sigma}$ are
entropic quantities that are taken with respect to the state $\sigma
^{XA^{\prime}BEE^{\prime}}$. These quantities give the rates of resource
consumption or generation in the above protocol. We refer the reader to
Refs.~\cite{DHW05RI,HW08GFP} for definitions of the above entropic quantities
and the definition of a \textquotedblleft quantum
instrument.\textquotedblright\ The above protocol generalizes the mother
protocol \cite{DHW05RI}, noisy teleportation \cite{DHW05RI}, noisy super-dense
coding \cite{DHW05RI}, the entanglement distillation protocol \cite{BDSW96},
and the grandmother protocol \cite{DHW05RI}.

\begin{definition}
\label{def:CASR-achievable}The classically-assisted state redistribution
\textquotedblleft one-shot\textquotedblright\ achievable rate region
$\widetilde{\mathcal{C}}_{\text{CASR}}^{(1)}(\rho^{AB})$ is as follows:%
\[
\widetilde{\mathcal{C}}_{\text{CASR}}^{(1)}(\rho^{AB})\equiv\bigcup
_{\widetilde{\mathcal{T}}}\left(
\begin{array}
[c]{c}%
-I(X;E|B)_{\sigma},-\frac{1}{2}I(A^{\prime};E|E^{\prime}X)_{\sigma},\\
\frac{1}{2}(I(A^{\prime};B|X)_{\sigma}-I(A^{\prime};E^{\prime}|X)_{\sigma})
\end{array}
\right)  ,
\]
where $\sigma$ is defined as above and the union is over all instruments
$\widetilde{\mathcal{T}}$. The classically-assisted state redistribution
achievable rate region $\widetilde{\mathcal{C}}_{\text{CASR}}(\rho^{AB})$ is
the following multi-letter regularization of the one-shot region:%
\begin{equation}
\widetilde{\mathcal{C}}_{\text{CASR}}(\rho^{AB})\equiv\overline{\bigcup
_{k=1}^{\infty}\frac{1}{k}\widetilde{\mathcal{C}}_{\text{CASR}}^{(1)}%
((\rho^{AB})^{\otimes k})}. \label{DSG}%
\end{equation}

\end{definition}

Below we state our second main result, the direct static capacity theorem.

\begin{theorem}
\label{thm:direct-static}The direct static capacity region $\mathcal{C}%
_{\text{DS}}(\rho^{AB})$ is equivalent to the direct static achievable rate
region $\widetilde{\mathcal{C}}_{\emph{DS}}(\rho^{AB})$:%
\[
\mathcal{C}_{\text{DS}}(\rho^{AB})=\widetilde{\mathcal{C}}_{\text{DS}}%
(\rho^{AB}).
\]
The direct static achievable rate region $\widetilde{\mathcal{C}}_{\text{DS}%
}(\rho^{AB})$ is the set addition of the classically-assisted state
redistribution achievable rate region $\widetilde{\mathcal{C}}_{\text{CASR}}$
and the unit resource achievable rate region $\widetilde{\mathcal{C}%
}_{\text{U}}$:%
\begin{equation}
\widetilde{\mathcal{C}}_{\text{DS}}(\rho^{AB})\equiv\widetilde{\mathcal{C}%
}_{\text{CASR}}(\rho^{AB})+\widetilde{\mathcal{C}}_{\text{U}}.
\label{eq:direct-static-ach-def}%
\end{equation}

\end{theorem}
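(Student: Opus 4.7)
The plan is to establish the equality of the two regions by two containments: an achievability (direct) part and a converse (multi-letter) part.

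For the achievability containment $\widetilde{\mathcal{C}}_{\text{DS}}(\rho^{AB}) \subseteq \mathcal{C}_{\text{DS}}(\rho^{AB})$, I would simply combine Lemma~\ref{lemma:CAM-RI} with the three fundamental resource inequalities \eqref{TP}, \eqref{SD}, and \eqref{ED}. By the definition of set addition, any $(C,Q,E) \in \widetilde{\mathcal{C}}_{\text{DS}}(\rho^{AB})$ can be written as $r_{\text{CASR}} + r_{\text{U}}$ with $r_{\text{CASR}} \in \widetilde{\mathcal{C}}_{\text{CASR}}(\rho^{AB})$ and $r_{\text{U}} \in \widetilde{\mathcal{C}}_{\text{U}}$. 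The first component is realized by a classically-assisted state redistribution protocol for some instrument $\widetilde{\mathcal{T}}$ (possibly applied to $k$ copies and the regularized rate taken), while $r_{\text{U}}$ is realized by a nonnegative linear combination of TP, SD, and ED. Concatenating these resource inequalities yields a protocol that, consuming $\langle \rho^{AB}\rangle$, realizes the claimed triple, which is exactly what \eqref{eq:direct-static-cap-def} demands.

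The converse containment $\mathcal{C}_{\text{DS}}(\rho^{AB}) \subseteq \widetilde{\mathcal{C}}_{\text{DS}}(\rho^{AB})$ is the substantive part and follows the topological \emph{reductio ad absurdum} strategy alluded to in the introduction. Suppose for contradiction that some rate triple $(C,Q,E)$ lies in $\mathcal{C}_{\text{DS}}(\rho^{AB})$ but strictly outside the closed convex region $\widetilde{\mathcal{C}}_{\text{DS}}(\rho^{AB})$. The idea is to use the three unit protocols as a \emph{change of basis} for the rate space: by composing the hypothetical protocol with appropriate amounts of TP, SD, and ED, we can move along the lines $L^{\pm 00}$, $L^{0\pm 0}$, $L^{00\pm}$ subject to the constraints of Theorem~\ref{ut}. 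Because $\widetilde{\mathcal{C}}_{\text{DS}}(\rho^{AB}) = \widetilde{\mathcal{C}}_{\text{CASR}}(\rho^{AB}) + \widetilde{\mathcal{C}}_{\text{U}}$, a rate triple outside this set must remain outside after translation along any element of $\widetilde{\mathcal{C}}_{\text{U}}$. Hence there exists a unit-resource translation that reduces the three-dimensional violation to a violation of a two-dimensional trade-off region whose converse is already known: specifically, the boundary of the CASR region is the envelope of the converses for the mother protocol, noisy teleportation, noisy super-dense coding, the entanglement distillation protocol, and the grandmother protocol, each of which corresponds to zeroing out one or two of the noiseless resources.

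The main obstacle I anticipate is organizing this reduction cleanly: one has to argue that the family of unit-resource translations is rich enough to bring \emph{every} point outside $\widetilde{\mathcal{C}}_{\text{DS}}(\rho^{AB})$ onto a face for which a matching single- or double-resource converse applies, and that the $k\to\infty$ regularization in \eqref{DSG} is preserved under these reductions. Concretely, I would (i)~enumerate the bounding faces of $\widetilde{\mathcal{C}}_{\text{DS}}(\rho^{AB})$, identifying for each face which unit protocols map the violated triple onto it; (ii)~verify by the standard properties of resource inequalities that adding or subtracting TP, SD, ED preserves achievability within the rate cone of Theorem~\ref{ut}; and (iii)~for each face invoke the relevant known converse (e.g., the hashing-bound type converse for entanglement distillation, the converse of the mother protocol from Ref.~\cite{DHW05RI}, and the noisy TP/SD converses) to derive the contradiction. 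Once this case analysis is complete, the contrapositive gives $\mathcal{C}_{\text{DS}}(\rho^{AB}) \subseteq \widetilde{\mathcal{C}}_{\text{DS}}(\rho^{AB})$, and combining both containments establishes the theorem.
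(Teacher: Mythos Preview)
Your achievability argument and the overall converse strategy---reduce each part of the three-dimensional region, via unit-protocol translations, to a two-dimensional trade-off whose converse is already known---match the paper's approach, which organizes the reduction octant by octant rather than face by face. In particular, the paper handles the $(+,-,+)$ and $(-,+,+)$ octants exactly as you sketch: define a projection $f$ along $L_{\text{SD}}$ or $L_{\text{TP}}$ onto a quadrant, invoke the known noisy super-dense coding or noisy teleportation converse there, and pull back via an inverse map $\hat f$.

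There is, however, a genuine gap in your plan. You assume that the CASR boundary is entirely the ``envelope of the converses for the mother protocol, noisy teleportation, noisy super-dense coding, the entanglement distillation protocol, and the grandmother protocol.'' It is not: the $(-,-,+)$ octant is precisely where CASR lives natively, and none of those prior converses yields the bounds involving $I(A';E|E'X)$ and $I(A';E'|X)$ that characterize it (the $E'$ system comes from state redistribution and is absent from the grandmother). The paper supplies a fresh information-theoretic converse for this octant (Alicki--Fannes plus data processing on a generic instrument-based protocol), and without it the case analysis does not close. A second, smaller gap is in the entanglement-consuming octants $(+,-,-)$ and $(-,+,-)$: the paper does not use a simple projection there but instead proves additivity lemmas of the form $\mathcal{C}_{\text{DS}}^{+-0}(\rho\otimes\Phi^{|E|})\subseteq\widetilde{\mathcal{C}}_{\text{DS}}^{+-0}(\rho)+\widetilde{\mathcal{C}}_{\text{DS}}^{+-0}(\Phi^{|E|})$, exploiting that SD and TP induce linear bijections between the mother/distillation quadrant and the noisy SD/noisy TP quadrant. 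Your ``translate by a unit protocol'' picture does not quite capture this step, since one must argue about the capacity region of the \emph{augmented} resource $\rho\otimes\Phi^{|E|}$ and then split it, rather than simply sliding a single achievable point.
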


The complete proof is in Section~\ref{sec:static}. The meaning of the theorem
is that it is possible to obtain all achievable points in the direct-static
capacity region by combining only four protocols:\ classically-assisted state
redistribution, SD, TP, and ED.

\subsection{Direct dynamic trade-off}

Our third result is a three-dimensional \textquotedblleft direct
dynamic\textquotedblright\ capacity region that gives the full trade-off
between the three fundamental noiseless resources when a noisy dynamic
resource is available.

\begin{definition}
The direct dynamic capacity region $\mathcal{C}_{\text{DD}}\left(
\mathcal{N}\right)  $ of a noisy channel $\mathcal{N}^{A^{\prime}\rightarrow
B}$ is a three-dimensional region in the $(C,Q,E)$ space defined by the
closure of the set of all points $(C,Q,E)$ satisfying the following resource
inequality:%
\begin{equation}
\langle\mathcal{N}\rangle\geq C[c\rightarrow c]+Q[q\rightarrow q]+E[qq].
\label{eq:direct-dynamic-cap-def}%
\end{equation}

\end{definition}

We first recall a few theorems concerning the classically-enhanced father
protocol \cite{HW08GFP} because this protocol proves useful in determining the
achievable rate region for the dynamic case. Briefly, the classically-enhanced
father protocol gives a way to transmit classical and quantum information over
an entanglement-assisted quantum channel.

\begin{lemma}
\label{gf}The following classically-enhanced father resource inequality holds
\begin{multline}
\langle\mathcal{N}\rangle+\frac{1}{2}I(A;E|X)_{\sigma}[qq]\label{GFP}\\
\geq\frac{1}{2}I(A;B|X)_{\sigma}[q\rightarrow q]+I(X;B)_{\sigma}[c\rightarrow
c],
\end{multline}
for a noisy dynamic resource $\mathcal{N}^{A^{\prime}\rightarrow B}$. In the
above resource inequality, the state $\sigma^{XABE}$ is defined as follows%
\begin{equation}
\sigma^{XABE}\equiv\sum_{x}p\left(  x\right)  \left\vert x\right\rangle
\left\langle x\right\vert ^{X}\otimes U_{\mathcal{N}}(\psi_{x}^{AA^{\prime}}),
\label{DD_sigma}%
\end{equation}
where the states $\psi_{x}^{AA^{\prime}}$ are pure and $U_{\mathcal{N}%
}^{A^{\prime}\rightarrow BE}$ is an isometric extension of $\mathcal{N}$.
\end{lemma}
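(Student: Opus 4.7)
The plan is to prove this resource inequality by superposition coding, combining an outer Holevo--Schumacher--Westmoreland (HSW) classical code with an inner family of father-protocol codes conditioned on the classical letter. The key structural feature of the state $\sigma^{XABE}$ in \eqref{DD_sigma} is that the classical flag $X$ is explicit, so the conditional pure states $|\phi_x\rangle^{ABE}\equiv U_{\mathcal{N}}|\psi_x\rangle^{AA'}$ may be treated as effective channels selected by $X$, and the entropic quantities $\tfrac{1}{2}I(A;B|X)$, $\tfrac{1}{2}I(A;E|X)$, and $I(X;B)$ arise naturally as a classical contribution plus a conditional quantum/entanglement contribution.

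First, I would construct a random outer codebook at rate $R_{C}=I(X;B)_{\sigma}-\delta$ by drawing codewords $x^{n}(m)$ i.i.d.\ from $p(x)$. Applying HSW to the induced output ensemble $\{p(x),\mathcal{N}(\mathrm{Tr}_{A}\,\psi_{x}^{AA'})\}$ yields a POVM that recovers $m$ with vanishing error. Second, conditional on each outer codeword $x^{n}(m)$, I would invoke the father protocol for the pure-state sequence $\{\psi_{x_{i}}^{AA'}\}_{i=1}^{n}$: the standard father resource inequality $\langle\mathcal{N}\rangle+\tfrac{1}{2}I(A;E)_{\phi_{x}}[qq]\geq\tfrac{1}{2}I(A;B)_{\phi_{x}}[q\to q]$, applied conditionally on $X$, delivers quantum communication at rate $\tfrac{1}{2}I(A;B|X)_{\sigma}$ while consuming entanglement at rate $\tfrac{1}{2}I(A;E|X)_{\sigma}$. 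The composite code therefore achieves the triple $(I(X;B),\tfrac{1}{2}I(A;B|X),-\tfrac{1}{2}I(A;E|X))$.

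Decoding proceeds in two stages: Bob first performs the HSW POVM to obtain $\hat{m}$, and then applies the father-protocol decoder associated with the inner code indexed by $x^{n}(\hat{m})$. Averaging over the random outer and inner codebooks followed by a standard expurgation argument produces a deterministic code attaining the stated rates asymptotically, which is exactly the resource inequality \eqref{GFP}.

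The principal obstacle is controlling the back-action of the classical decoding measurement on the quantum information carried by the inner layer; a priori, the HSW POVM could destroy the coherence required by the father decoder. I would resolve this via the gentle-measurement lemma: because the HSW POVM element corresponding to the true codeword succeeds with probability approaching one on the actual channel output, its action is nearly isometric and perturbs the post-measurement state by $o(1)$ in trace distance, so the father decoder continues to succeed with high probability. A secondary technical concern is ensuring that the inner father codebook behaves well uniformly across all typical outer codewords, which is handled by conditional typicality together with a union bound over the polynomially many relevant joint types.
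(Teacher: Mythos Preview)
The paper does not actually prove this lemma; it merely recalls it from Ref.~\cite{HW08GFP} as an established result and then uses it as an input to the direct dynamic capacity theorem. Your proposed proof---superposition coding with an outer HSW code at rate $I(X;B)_\sigma$ and an inner family of father-protocol codes achieving the conditional rates $\tfrac{1}{2}I(A;B|X)_\sigma$ and $\tfrac{1}{2}I(A;E|X)_\sigma$, with the gentle-measurement lemma controlling the disturbance from the first-stage classical decoding---is correct and is essentially the argument given in that reference, so there is nothing to contrast here.
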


The classically-enhanced father protocol generalizes the father protocol
\cite{DHW05RI}, classically-enhanced quantum communication \cite{DS03},
entanglement-assisted classical communication \cite{arx2004shor}, classical
communication \cite{Hol98,SW97}, and quantum communication
\cite{Lloyd96,Shor02,Devetak03}.

\begin{definition}
The \textquotedblleft one-shot\textquotedblright\ classically-enhanced father
achievable rate region $\widetilde{\mathcal{C}}_{\text{CEF}}^{(1)}%
(\mathcal{N})$ is as follows:%
\begin{align*}
&  \widetilde{\mathcal{C}}_{\text{CEF}}^{(1)}(\mathcal{N})\\
&  \equiv\bigcup_{\sigma}\left(  I(X,B)_{\sigma},\frac{1}{2}I(A;B|X)_{\sigma
},-\frac{1}{2}I(A;E|X)_{\sigma}\right)  ,
\end{align*}
where $\sigma$ is defined in (\ref{DD_sigma}). The classically-enhanced father
achievable rate region $\widetilde{\mathcal{C}}_{\text{CEF}}(\mathcal{N})$ is
the following multi-letter regularization of the one-shot region:%
\[
\widetilde{\mathcal{C}}_{\text{CEF}}(\mathcal{N})=\overline{\bigcup
_{k=1}^{\infty}\frac{1}{k}\widetilde{\mathcal{C}}_{\text{CEF}}^{(1)}%
(\mathcal{N}^{\otimes k})}.
\]

\end{definition}

We now state our third main result, the direct dynamic capacity theorem.

\begin{theorem}
\label{thm:direct-dynamic-cap}The direct dynamic capacity region
$\mathcal{C}_{\text{DD}}(\mathcal{N})$ is equivalent to the direct dynamic
achievable rate region $\widetilde{\mathcal{C}}_{\text{DD}}(\mathcal{N})$:%
\[
\mathcal{C}_{\text{DD}}(\mathcal{N})=\widetilde{\mathcal{C}}_{\text{DD}%
}(\mathcal{N}).
\]
The direct dynamic achievable rate region $\widetilde{\mathcal{C}}_{\text{DD}%
}(\mathcal{N})$ is the set addition of the classically-enhanced father
achievable rate region and the unit resource achievable rate region:%
\begin{equation}
\widetilde{\mathcal{C}}_{\text{DD}}(\mathcal{N})\equiv\widetilde{\mathcal{C}%
}_{\text{CEF}}(\mathcal{N})+\widetilde{\mathcal{C}}_{\text{U}}. \label{CEF_DD}%
\end{equation}

\end{theorem}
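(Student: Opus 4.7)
The plan is to split the theorem into the two standard halves: the direct (achievability) inclusion $\widetilde{\mathcal{C}}_{\text{DD}}(\mathcal{N})\subseteq\mathcal{C}_{\text{DD}}(\mathcal{N})$ and the matching converse $\mathcal{C}_{\text{DD}}(\mathcal{N})\subseteq\widetilde{\mathcal{C}}_{\text{DD}}(\mathcal{N})$. Both are structural arguments at the level of resource inequalities, and no fresh coding construction is required beyond Lemma~\ref{gf} and the three unit protocols.

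For achievability, I would argue entirely inside the resource calculus of \cite{DHW05RI}. The classically-enhanced father resource inequality (\ref{GFP}) of Lemma~\ref{gf} realizes an arbitrary point of $\widetilde{\mathcal{C}}_{\text{CEF}}(\mathcal{N})$, while the three unit protocols (\ref{TP}), (\ref{SD}), and (\ref{ED}), taken in nonnegative combinations, realize an arbitrary point of $\widetilde{\mathcal{C}}_{\text{U}}$ (this is precisely the content of the triple of inequalities (\ref{utriple}) that will be proved in Section~\ref{sec:unit-resource}). Summing these two resource inequalities, regularizing as in (\ref{CEF_DD}), and passing to the closure produces a protocol witnessing every triple of the set sum $\widetilde{\mathcal{C}}_{\text{CEF}}(\mathcal{N})+\widetilde{\mathcal{C}}_{\text{U}}$ in the sense of (\ref{eq:direct-dynamic-cap-def}).

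For the converse, I plan to use the reductio ad absurdum strategy advertised in the introduction. Assume $(C,Q,E)\in\mathcal{C}_{\text{DD}}(\mathcal{N})$ lies strictly outside the closed set $\widetilde{\mathcal{C}}_{\text{DD}}(\mathcal{N})$. I would then combine the hypothesized achievable protocol with a carefully chosen nonnegative mixture of TP, SD, and ED, using the line-addition calculus of Section~\ref{sec:notation} (for instance, the decompositions $Q^{0+-}=L^{0+0}+L^{00-}$ and analogous ones in the relevant octants), to translate the offending point onto a face of the target region. On such a face, either the converse of the classically-enhanced father capacity theorem from \cite{HW08GFP} or the unit-resource converse from Theorem~\ref{ut} applies, and the translated point would exceed the already-established capacity of that face, a contradiction. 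The fact that forward classical communication cannot increase the quantum capacity \cite{BDSW96,BKN98} will be invoked to ensure that these translations by unit protocols do not inflate the coherent-information terms used on the CEF boundary.

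The hard part will be the geometric bookkeeping of this octant-by-octant reduction: one must verify that every boundary face of the proposed region $\widetilde{\mathcal{C}}_{\text{CEF}}(\mathcal{N})+\widetilde{\mathcal{C}}_{\text{U}}$ is reachable by an appropriate mixture of the three unit protocols, and that in each case the translated witness lands on a face whose converse has already been established, so that the contradiction can actually be drawn. Complications to watch for are corners where two faces meet (requiring simultaneous adjustment of two unit rates), and the passage from the one-shot regions to the regularized closures in the definitions of $\widetilde{\mathcal{C}}_{\text{CEF}}(\mathcal{N})$ and $\widetilde{\mathcal{C}}_{\text{DD}}(\mathcal{N})$, which must be handled by a standard compactness/limiting argument. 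Once the case analysis is assembled, the converse follows, and combined with the achievability half it yields the claimed equality $\mathcal{C}_{\text{DD}}(\mathcal{N})=\widetilde{\mathcal{C}}_{\text{DD}}(\mathcal{N})$.
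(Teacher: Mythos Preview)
Your achievability half is exactly what the paper does: combine Lemma~\ref{gf} with (\ref{TP})--(\ref{ED}) inside the resource calculus and take the regularized closure. Nothing more is needed there.

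Your converse plan is in the right spirit---the paper also works octant by octant and, for several octants (e.g.\ $(+,+,+)$, $(-,+,+)$, and $(-,+,-)$), uses precisely the ``translate by a unit protocol onto a lower-dimensional face and invoke a known converse'' move you describe. But two ingredients you do not mention turn out to be essential, and without them the case analysis cannot be closed.

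First, the faces you land on are not governed only by the CEF converse of \cite{HW08GFP} and Theorem~\ref{ut}. The paper must invoke a whole battery of previously established two-dimensional converses: the father $(0,+,-)$, classically-enhanced quantum communication $(+,+,0)$, classically-enhanced entanglement generation $(+,0,+)$, entanglement-assisted classical communication $(+,0,-)$, and the classically-assisted regions $(-,+,0)$ and $(-,0,+)$. For the octants that \emph{consume} quantum communication, translating by TP/SD/ED does not bring you back to the CEF octant; instead the paper absorbs the consumed qubits into the channel and proves separate additivity lemmas of the form
\[
\mathcal{C}_{\text{DD}}^{+0+}(\mathcal{N}\otimes\text{id}^{\otimes|Q|})\subseteq\widetilde{\mathcal{C}}_{\text{DD}}^{+0+}(\mathcal{N})+\widetilde{\mathcal{C}}_{\text{DD}}^{+0+}(\text{id}^{\otimes|Q|}),
\]
whose proofs use that entanglement distribution gives a linear bijection between the $(+,+,0)$ and $(+,0,+)$ achievable regions. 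This is more than geometric bookkeeping; it is a structural statement about how the noiseless qubit channel splits off from $\mathcal{N}$ in these quadrants, and your sketch does not account for it.

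Second, for the $(+,-,-)$ octant the paper does \emph{not} reduce to any previously known converse at all. It writes down the most general protocol (Figure~\ref{fig:EQA-class}), applies Fano's inequality and quantum data processing directly, and derives the bounds $C\leq I(AX;B)+2|Q|$ and $C\leq I(X;B)+I(A\rangle BX)+|Q|+|E|$ from scratch, together with an argument that isometric encoders suffice so that the resulting state has the form (\ref{DD_sigma}). Your reductio strategy of translating to a face with a pre-existing converse does not cover this case, and some genuinely new information-theoretic work is required there.
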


The complete proof is in Section~\ref{sec:dynamic}. The meaning of the theorem
is that it is possible to obtain all achievable points in the direct-dynamic
capacity region by combining only four protocols:\ the classically-enhanced
father protocol, super-dense coding, teleportation, and entanglement distribution.

\section{The triple trade-off between unit resources}

\label{sec:unit-resource}We now consider what rates are achievable when there
is no noisy resource---the only resources available are noiseless classical
communication, noiseless quantum communication, and noiseless entanglement. We
prove our first main result: Theorem~\ref{ut}. Recall that this theorem states
that the three-dimensional \textquotedblleft unit resource\textquotedblright%
\ achievable rate region, involving the three fundamental noiseless resources,
is equivalent to the unit resource capacity region.

In the unit resource capacity theorem, we exploit the following geometric
objects that lie in the $(C,Q,E)$ space:

\begin{enumerate}
\item Teleportation is the point $(-2,1,-1)$. The \textquotedblleft line of
teleportation\textquotedblright\ $L_{\text{TP}}$ is the following set of
points:
\begin{equation}
L_{\text{TP}}\equiv\left\{  \alpha\left(  -2,1,-1\right)  :\alpha
\geq0\right\}  . \label{eq:line-TP}%
\end{equation}

\item Super-dense coding is the point $(2,-1,-1)$. The \textquotedblleft line
of super-dense coding\textquotedblright\ $L_{\text{SD}}$ is the following set
of points:
\begin{equation}
L_{\text{SD}}\equiv\left\{  \beta\left(  2,-1,-1\right)  :\beta\geq0\right\}
. \label{eq:line-SD}%
\end{equation}

\item Entanglement distribution is the point $(0,-1,1)$. The \textquotedblleft
line of entanglement distribution\textquotedblright\ $L_{\text{ED}}$ is the
following set of points:
\begin{equation}
L_{\text{ED}}\equiv\left\{  \gamma\left(  0,-1,1\right)  :\gamma\geq0\right\}
. \label{eq:line-ED}%
\end{equation}

\end{enumerate}

Let $\widetilde{\mathcal{C}}_{\text{U}}$ denote the unit resource achievable
rate region. It consists of all linear combinations of the above protocols:%
\begin{equation}
\widetilde{\mathcal{C}}_{\text{U}}\equiv L_{\text{TP}}+L_{\text{SD}%
}+L_{\text{ED}}. \label{ut_C}%
\end{equation}
The following matrix equation gives all achievable triples $(C,Q,E)$ in
$\widetilde{\mathcal{C}}_{\text{U}}$:%
\[%
\begin{bmatrix}
C\\
Q\\
E
\end{bmatrix}
=%
\begin{bmatrix}
-2 & 2 & 0\\
1 & -1 & -1\\
-1 & -1 & 1
\end{bmatrix}%
\begin{bmatrix}
\alpha\\
\beta\\
\gamma
\end{bmatrix}
,
\]
where $\alpha,\beta,\gamma\geq0$. We can rewrite the above equation with its
matrix inverse:%
\[%
\begin{bmatrix}
\alpha\\
\beta\\
\gamma
\end{bmatrix}
=%
\begin{bmatrix}
-1/2 & -1/2 & -1/2\\
0 & -1/2 & -1/2\\
-1/2 & -1 & 0
\end{bmatrix}%
\begin{bmatrix}
C\\
Q\\
E
\end{bmatrix}
,
\]
in order to express the coefficients $\alpha$, $\beta$, and $\gamma$ as a
function of the rate triples $(C,Q,E)$. The restriction of non-negativity of
$\alpha$, $\beta$, and $\gamma$ gives the following restriction on the
achievable rate triples $(C,Q,E)$:%
\begin{align}
C+Q+E  &  \leq0,\label{utriple1}\\
Q+E  &  \leq0,\label{utriple2}\\
C+2Q  &  \leq0. \label{utriple3}%
\end{align}
The above result implies that the achievable rate region $\widetilde
{\mathcal{C}}_{\text{U}}$ in (\ref{ut_C}) is equivalent to all rate triples
satisfying (\ref{utriple1}-\ref{utriple3}). Figure~\ref{fig:Unit} displays the
full unit resource achievable rate region. \begin{figure}[ptb]
\begin{center}
\includegraphics[width=3.5in]{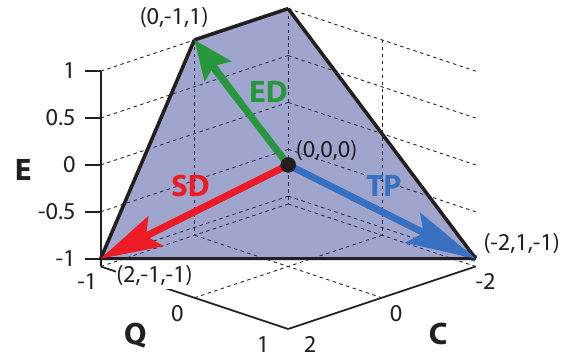}
\end{center}
\caption{(Color online) The above figure depicts the unit resource achievable
region $\widetilde{\mathcal{C}}_{\text{U}}$. The blue, red, and green lines
are the respective lines of teleportation, super-dense coding, and
entanglement distribution defined in (\ref{eq:line-TP}-\ref{eq:line-ED}).
These three lines and their set addition bound the unit resource capacity
region.}%
\label{fig:Unit}%
\end{figure}

Proving Theorem~\ref{ut} involves two steps, traditionally called the
\textit{direct coding theorem} and the \textit{converse theorem}. For this
case, the \textit{direct coding theorem} establishes that the achievable rate
region $\widetilde{\mathcal{C}}_{\text{U}}$ belongs to the capacity region
$\mathcal{C}_{\text{U}}$:%
\[
\widetilde{\mathcal{C}}_{\text{U}}\subseteq\mathcal{C}_{\text{U}}.
\]
The \textit{converse theorem}, on the other hand, establishes the other
inclusion:%
\[
\mathcal{C}_{\text{U}}\subseteq\widetilde{\mathcal{C}}_{\text{U}}.
\]

\subsection{Proof of the Direct Coding Theorem}

The result of the direct coding theorem, that $\widetilde{\mathcal{C}%
}_{\text{U}}\subseteq\mathcal{C}_{\text{U}}$, is immediate from the definition
in (\ref{ut_C})\ of the unit resource achievable rate region $\widetilde
{\mathcal{C}}_{\text{U}}$, the definition in (\ref{eq:unit-capacity-region})
of the unit resource capacity region $\mathcal{C}_{\text{U}}$, and the theory
of resource inequalities in Refs.~\cite{DHW03,DHW05RI}.

\subsection{Proof of the Converse Theorem}

We employ the definition of $\widetilde{\mathcal{C}}_{\text{U}}$ in
(\ref{ut_C}) and consider the eight octants of the $(C,Q,E)$ space
individually in order to prove the converse theorem (that $\mathcal{C}%
_{\text{U}}\subseteq\widetilde{\mathcal{C}}_{\text{U}}$). Let $(\pm,\pm,\pm)$
denote labels for the eight different octants.

It is possible to demonstrate the optimality of each of these three protocols
individually with a contradiction argument (e.g., the contradiction argument
for teleportation is in Ref.~\cite{BBCJPW93}). However, in the converse proof
of Theorem~\ref{ut}, we show that a mixed strategy combining these three
noiseless protocols is optimal.

We accept the following two postulates and exploit them in order to prove the converse:

\begin{enumerate}
\item Entanglement alone cannot generate classical communication or quantum
communication or both.

\item Classical communication alone cannot generate entanglement or quantum
communication or both.
\end{enumerate}

$\boldsymbol{(+,+,+)}$. This octant of $\mathcal{C}_{\text{U}}$ is empty
because a sender and receiver require some resources to implement classical
communication, quantum communication, and entanglement. (They cannot generate
a noiseless resource from nothing!)

$\boldsymbol{(+,+,-)}$. This octant of $\mathcal{C}_{\text{U}}$ is empty
because entanglement alone cannot generate either classical communication or
quantum communication or both.

$\boldsymbol{(+,-,+)}$. The task for this octant is to generate a noiseless
classical channel of $C$ bits and $E$ ebits of entanglement using $|Q|$ qubits
of quantum communication. We thus consider all points of the form $\left(
C,Q,E\right)  $ where $C\geq0$, $Q\leq0$, and $E\geq0$. It suffices to prove
the following inequality:%
\begin{equation}
C+E\leq\left\vert Q\right\vert , \label{pnp}%
\end{equation}
because combining (\ref{pnp}) with $C\geq0$ and $E\geq0$ implies
(\ref{utriple1}-\ref{utriple3}). The achievability of $(C,-|Q|,E)$ implies the
achievability of the point $(C+2E,-|Q|-E,0)$, because we can consume all of
the entanglement with super-dense coding (\ref{SD}). This new point implies
that there is a protocol that consumes $\left\vert Q\right\vert +E$ noiseless
qubit channels to send $C+2E$ classical bits. The following bound then applies%
\[
C+2E\leq\left\vert Q\right\vert +E,
\]
because the Holevo bound \cite{NC00}\ states that we can send only one
classical bit per qubit. The bound in (\ref{pnp}) then follows.

$\boldsymbol{(+,-,-)}$. The task for this octant is to simulate a classical
channel of size $C$ bits using $|Q|$ qubits of quantum communication and $|E|$
ebits of entanglement. We consider all points of the form $\left(
C,Q,E\right)  $ where $C\geq0$, $Q\leq0$, and $E\leq0$. It suffices to prove
the following inequalities:%
\begin{align}
C  &  \leq2|Q|,\label{pnn1}\\
C  &  \leq|Q|+|E|, \label{pnn2}%
\end{align}
because combining (\ref{pnn1}-\ref{pnn2}) with $C\geq0$ implies
(\ref{utriple1}-\ref{utriple3}). The achievability of $(C,-|Q|,-|E|)$ implies
the achievability of $(0,-|Q|+C/2,-|E|-C/2)$, because we can consume all of
the classical communication with teleportation (\ref{TP}). The following bound
applies (quantum communication cannot be positive)%
\[
-\left\vert Q\right\vert +C/2\leq0,
\]
because entanglement alone cannot generate quantum communication. The bound in
(\ref{pnn1}) then follows from the above bound. The achievability of
$(C,-|Q|,-|E|)$ implies the achievability of $(C,-|Q|-|E|,0)$ because we can
consume an extra $\left\vert E\right\vert $ qubit channels with entanglement
distribution (\ref{ED}). The bound in (\ref{pnn2})\ then applies by the same
Holevo bound argument as in the previous octant.

$\boldsymbol{(-,+,+)}$. This octant of $\mathcal{C}_{\text{U}}$ is empty
because classical communication alone cannot generate either quantum
communication or entanglement or both.

$\boldsymbol{(-,+,-)}$. The task for this octant is to simulate a quantum
channel of size $Q$ qubits using $|E|$ ebits of entanglement and $|C|$ bits of
classical communication. We consider all points of the form $\left(
C,Q,E\right)  $ where $C\leq0$, $Q\geq0$, and $E\leq0$. It suffices to prove
the following inequalities:%
\begin{align}
Q  &  \leq\left\vert E\right\vert ,\label{npn1}\\
2Q  &  \leq\left\vert C\right\vert , \label{npn2}%
\end{align}
because combining them with $C\leq0$ implies (\ref{utriple1}-\ref{utriple3}).
The achievability of the point $(-|C|,Q,-|E|)$ implies the achievability of
the point $(-|C|,0,Q-|E|)$, because we can consume all of the quantum
communication for entanglement distribution (\ref{ED}). The following bound
applies (entanglement cannot be positive)%
\[
Q-\left\vert E\right\vert \leq0,
\]
because classical communication alone cannot generate entanglement. The bound
in (\ref{npn1}) follows from the above bound. The achievability of the point
$(-|C|,Q,-|E|)$ implies the achievability of the point $(-|C|+2Q,0,-Q-|E|)$,
because we can consume all of the quantum communication for super-dense coding
(\ref{SD}). The following bound applies (classical communication cannot be
positive)%
\[
-|C|+2Q\leq0,
\]
because entanglement alone cannot create classical communication. The bound in
(\ref{npn2}) follows from the above bound.

$\boldsymbol{(-,-,+)}$. The task for this octant is to create $E$ ebits of
entanglement using $|Q|$ qubits of quantum communication and $|C|$ bits of
classical communication. We consider all points of the form $\left(
C,Q,E\right)  $ where $C\leq0$, $Q\leq0$, and $E\geq0$. It suffices to prove
the following inequality:%
\begin{equation}
E\leq\left\vert Q\right\vert , \label{nnp}%
\end{equation}
because combining it with $Q\leq0$ and $C\leq0$ implies (\ref{utriple1}%
-\ref{utriple3}). The achievability of $(-|C|,-|Q|,E)$ implies the
achievability of $(-|C|-2E,-|Q|+E,0)$, because we can consume all of the
entanglement with teleportation (\ref{TP}). The following bound applies
(quantum communication cannot be positive)%
\[
-|Q|+E\leq0,
\]
because classical communication alone cannot generate quantum communication.
The bound in (\ref{nnp}) follows from the above bound.

$\boldsymbol{(-,-,-)}$. $\widetilde{\mathcal{C}}_{\text{U}}$ completely
contains this octant.

\section{Classically-Assisted Quantum State Redistribution}

Before discussing the direct static trade-off, we overview the
\textit{classically-assisted quantum state redistribution protocol}~introduced
in Section~\ref{sec:summary}. This protocol generates entanglement with the
help of classical communication, quantum communication, and a noisy bipartite
state. It employs techniques from Winter's instrument compression theorem
\cite{Winter01a}, Devetak and Winter's \textquotedblleft classical compression
with quantum side information\textquotedblright\ theorem \cite{DW02}, and the
quantum state redistribution protocol
\cite{devetak:230501,DY06QI,YD07QI,ye:030302,O08}. Thus, we do not give a
full, detailed proof of Lemma~\ref{thm:CAM-RI} (the coding theorem for the
protocol), but instead resort to the resource inequality framework
\cite{DHW05RI}\ for a simple proof of the coding theorem. A full exposition of
this protocol will appear in Ref.~\cite{HW10}. For now, our simple proof of
Lemma~\ref{thm:CAM-RI} appears below.

\begin{proof}
[Proof of Lemma~\ref{thm:CAM-RI}]Consider a state $\rho^{AB}$ shared between
Alice and Bob. Let $\psi^{ABE}$ denote the purification of this state. A
remote instrument $T^{A\rightarrow A^{\prime}X_{B}E^{\prime}}$ acting on this
state produces the state $\sigma^{X_{B}A^{\prime}BEE^{\prime}}$ where%
\[
\sigma^{X_{B}A^{\prime}BEE^{\prime}}\equiv T^{A\rightarrow A^{\prime}%
X_{B}E^{\prime}}\left(  \psi^{ABE}\right)  .
\]
There exists a protocol, \textit{instrument compression with quantum side
information} (ICQSI), that exploits the techniques from
Refs.~\cite{Winter01a,DW02}. It implements the following resource inequality:%
\begin{multline*}
\left\langle \rho^{AB}\right\rangle +I\left(  X_{B};E|B\right)  _{\sigma
}\left[  c\rightarrow c\right]  +H\left(  X_{B}|BE\right)  _{\sigma}\left[
cc\right]  \\
\geq\langle\overline{\Delta}^{X\rightarrow X_{A}X_{B}}\circ T:\rho^{A}\rangle,
\end{multline*}
where $\left[  cc\right]  $ represents the resource of one bit of common
randomness and $\overline{\Delta}^{X\rightarrow X_{A}X_{B}}$ denotes a
classical channel that transmits classical information to Alice and Bob. ICQSI
is similar to Winter's instrument compression protocol in the sense that Alice
and Bob are exploiting classical communication and common randomness to
simulate the action of a quantum instrument, but the difference is that Alice
does not need to send as much classical information as Winter's instrument
compression protocol. Bob can exploit his quantum side information to learn
something about the classical information that Alice is transmitting. The
static version of the HSW\ coding theorem from Ref.~\cite{DW02}\ shows that
Bob can learn $nI\left(  X_{B};B\right)  $ bits about the classical
information that Alice is transmitting, so that she does not have to transmit
the full $nI\left(  X_{B};EB\right)  _{\sigma}$ bits required by the
instrument compression protocol, but instead transmits only $nI\left(
X_{B};E|B\right)  _{\sigma}$ classical bits. It then follows that%
\begin{multline*}
\left\langle \rho^{AB}\right\rangle +I\left(  X_{B};E|B\right)  _{\sigma
}\left[  c\rightarrow c\right]  +H\left(  X_{B}|BE\right)  _{\sigma}\left[
cc\right]  \\
\geq\langle\overline{\Delta}^{X\rightarrow X_{A}X_{B}}(\sigma^{XA^{\prime
}BEE^{\prime}})\rangle,
\end{multline*}
because simulating the quantum instrument is the same as actually performing it.

We now apply the quantum state redistribution protocol discussed in
Refs.~\cite{devetak:230501,DY06QI,YD07QI,ye:030302,O08}. This protocol is
useful here because it makes efficient use of both systems $A^{\prime}$ and
$E^{\prime}$ that Alice possesses after the action of the instrument. We
specifically apply the \textquotedblleft reversed\textquotedblright\ version
of state redistribution outlined on the right hand side of Figure~3 of
Ref.~\cite{DY06QI}, with the substitutions $R\leftrightarrow E$,
$B\leftrightarrow E^{\prime}$, $C\leftrightarrow A^{\prime}$,
$A\leftrightarrow B$. Finally, we can also apply Theorem~4.12 from
Ref.~\cite{DHW05RI}, that shows how convex combinations of static resources
are related to conditioning on classical variables, to get the following
resource inequality:%
\begin{align*}
&  \langle\overline{\Delta}^{X\rightarrow X_{A}X_{B}}(\sigma^{X_{B}A^{\prime}%
})\rangle+\frac{1}{2}I\left(  A^{\prime};E|E^{\prime}X_{B}\right)  _{\sigma
}\left[  q\rightarrow q\right] \\
&  \geq\frac{1}{2}\left(  I\left(  A^{\prime};B|X_{B}\right)  _{\sigma
}-I\left(  A^{\prime};E|X_{B}\right)  _{\sigma}\right)  \left[  qq\right]  .
\end{align*}
Combining the above two resource inequalities gives the following resource
inequality:%
\begin{multline*}
\left\langle \rho^{AB}\right\rangle +I\left(  X_{B};E|B\right)  _{\sigma
}\left[  c\rightarrow c\right]  +H\left(  X_{B}|BE\right)  _{\sigma}\left[
cc\right] \\
+\frac{1}{2}I\left(  A^{\prime};E|E^{\prime}X_{B}\right)  _{\sigma}\left[
q\rightarrow q\right] \\
\geq\frac{1}{2}\left(  I\left(  A^{\prime};B|X_{B}\right)  _{\sigma}-I\left(
A^{\prime};E^{\prime}|X_{B}\right)  _{\sigma}\right)  \left[  qq\right]  .
\end{multline*}
We can then derandomize the protocol and eliminate the common randomness via
Corollary 4.8 of Ref.~\cite{DHW05RI}\ because the output resource is pure. The
result is the following resource inequality:%
\begin{multline*}
\left\langle \rho^{AB}\right\rangle +I\left(  X_{B};E|B\right)  _{\sigma
}\left[  c\rightarrow c\right]  +\frac{1}{2}I\left(  A^{\prime};E|E^{\prime
}X_{B}\right)  _{\sigma}\left[  q\rightarrow q\right] \\
\geq\frac{1}{2}\left(  I\left(  A^{\prime};B|X_{B}\right)  _{\sigma}-I\left(
A^{\prime};E^{\prime}|X_{B}\right)  _{\sigma}\right)  \left[  qq\right]  .
\end{multline*}

\end{proof}

We obtain the classically-assisted state redistribution \textquotedblleft
one-shot\textquotedblright\ achievable rate region $\widetilde{\mathcal{C}%
}_{\text{CASR}}^{(1)}(\rho^{AB})$ and its regularization $\widetilde
{\mathcal{C}}_{\text{CASR}}(\rho^{AB})$ in
Definition~\ref{def:CASR-achievable}\ by performing the protocol with all
possible instruments and taking the union over the resulting achievable rates.

\begin{corollary}
The following \textquotedblleft grandmother\textquotedblright\ protocol of
Ref.~\cite{DHW05RI} results from combining the classically-assisted state
redistribution resource inequality with entanglement distribution of $\frac
{n}{2}I\left(  A^{\prime};E^{\prime}|X_{B}\right)  _{\sigma}$ extra qubits of
quantum communication:%
\begin{multline*}
\left\langle \rho^{AB}\right\rangle +I\left(  X_{B};E|B\right)  _{\sigma
}\left[  c\rightarrow c\right]  +\frac{1}{2}I\left(  A^{\prime};EE^{\prime
}|X_{B}\right)  _{\sigma}\left[  q\rightarrow q\right] \\
\geq\frac{1}{2}I\left(  A^{\prime};B|X_{B}\right)  _{\sigma}\left[  qq\right]
.
\end{multline*}
The above resource inequality requires slightly less classical communication
than the grandmother protocol from Ref.~\cite{DHW05RI}.
\end{corollary}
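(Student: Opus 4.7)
The plan is to derive the grandmother resource inequality by starting from the classically-assisted state redistribution (CASR) resource inequality of Lemma~\ref{thm:CAM-RI} and padding its quantum communication cost with enough extra noiseless qubit channels to convert the ``deficit'' term $-\frac{1}{2}I(A';E'|X_B)_\sigma$ on the right-hand side into a pure ebit rate. Concretely, I would append $\frac{1}{2}I(A';E'|X_B)_\sigma$ invocations of the entanglement distribution resource inequality \eqref{ED} to the CASR resource inequality, combining them term by term using additivity of resource inequalities as formalized in \cite{DHW05RI}.

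The first step is to write out the CASR resource inequality \eqref{GMP} as given, and then add $\frac{1}{2}I(A';E'|X_B)_\sigma \, [q\to q] \geq \frac{1}{2}I(A';E'|X_B)_\sigma \, [qq]$. On the resource-consumption side this increases the qubit cost from $\frac{1}{2}I(A';E|E'X_B)_\sigma$ to
\[
\tfrac{1}{2}I(A';E|E'X_B)_\sigma + \tfrac{1}{2}I(A';E'|X_B)_\sigma,
\]
while on the resource-generation side the entanglement rate becomes
\[
\tfrac{1}{2}\bigl(I(A';B|X_B)_\sigma - I(A';E'|X_B)_\sigma\bigr) + \tfrac{1}{2}I(A';E'|X_B)_\sigma = \tfrac{1}{2}I(A';B|X_B)_\sigma,
\]
which already matches the ebit rate claimed in the corollary.

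The second step is to simplify the combined qubit cost using the chain rule for conditional quantum mutual information,
\[
I(A';EE'|X_B)_\sigma = I(A';E'|X_B)_\sigma + I(A';E|E'X_B)_\sigma,
\]
so that the total qubit rate equals $\frac{1}{2}I(A';EE'|X_B)_\sigma$, matching the claimed form. The classical communication rate $I(X_B;E|B)_\sigma$ is untouched because ED consumes no classical communication. Putting these pieces together via composability of resource inequalities yields exactly the stated grandmother resource inequality.

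There is essentially no obstacle here: the argument is a one-line combination of two already-established resource inequalities followed by an application of the chain rule. The only item worth remarking on is the final comparison with the grandmother protocol of \cite{DHW05RI}: the classical rate here is $I(X_B;E|B)_\sigma$ rather than the $I(X_B;E|B)_\sigma$ plus additional terms arising when one uses Winter's plain instrument compression, and the savings come from exploiting Bob's quantum side information in the HSW step internal to CASR, as already highlighted in the proof of Lemma~\ref{thm:CAM-RI}.
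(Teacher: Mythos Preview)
Your proposal is correct and takes essentially the same approach as the paper: the corollary's own statement already indicates that the proof is obtained by combining the CASR resource inequality \eqref{GMP} with $\frac{1}{2}I(A';E'|X_B)_\sigma$ instances of entanglement distribution, and your write-up simply makes explicit the chain-rule identity $I(A';EE'|X_B)_\sigma = I(A';E'|X_B)_\sigma + I(A';E|E'X_B)_\sigma$ needed to simplify the resulting qubit rate.
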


\section{Direct Static Trade-off}

\label{sec:static}In this section, we prove Theorem~\ref{thm:direct-static},
the direct static capacity theorem. Recall that this theorem determines what
rates are achievable when a sender and receiver consume a noisy static
resource. The parties additionally consume or generate noiseless classical
communication, noiseless quantum communication, and noiseless entanglement.
The theorem determines the three-dimensional \textquotedblleft direct
static\textquotedblright\ capacity region that gives the full trade-off
between the three fundamental noiseless resources when a noisy static resource
is available.

\subsection{Proof of the Direct Coding Theorem}

The direct coding theorem is the statement that the direct static capacity
region contains the direct static achievable rate region:%
\[
\widetilde{\mathcal{C}}_{\text{DS}}(\rho^{AB})\subseteq\mathcal{C}_{\text{DS}%
}(\rho^{AB}).
\]
It follows directly from combining the classically-assisted state
redistribution resource inequality (\ref{GMP}) with TP (\ref{TP}), SD
(\ref{SD}), and ED (\ref{ED}) and considering the definition of the direct
static achievable rate region in (\ref{eq:direct-static-ach-def})\ and the
definition of the direct static capacity region
in\ (\ref{eq:direct-static-cap-def}).

\subsection{Proof of the Converse Theorem}

The converse theorem is the statement that the direct static achievable rate
region $\widetilde{\mathcal{C}}_{\text{DS}}(\rho^{AB})$ contains the direct
static capacity region $\mathcal{C}_{\text{DS}}(\rho^{AB})$:%
\[
\mathcal{C}_{\text{DS}}(\rho^{AB})\subseteq\widetilde{\mathcal{C}}_{\text{DS}%
}(\rho^{AB}).
\]
In order to prove it, we consider one octant of the $\left(  C,Q,E\right)  $
space at a time and use the notation from Section~\ref{sec:notation}. We omit
writing $\rho^{AB}$ in what follows and instead write $\rho$ to denote the
noisy bipartite state $\rho^{AB}$.

The converse proof exploits relations among the previously known capacity
regions: the classically-assisted state redistribution, the mother
\cite{DHW03,DHW05RI}, noisy super-dense coding \cite{HHHLT01,DHW05RI}, noisy
teleportation \cite{DHW03,DHW05RI}, and entanglement distillation
\cite{DHW05RI}. We illustrate the relation between these protocols in
Figure~\ref{fig:Mother-Side}.

\begin{figure*}[ptb]
\begin{center}
\includegraphics[width=1.0\textwidth]{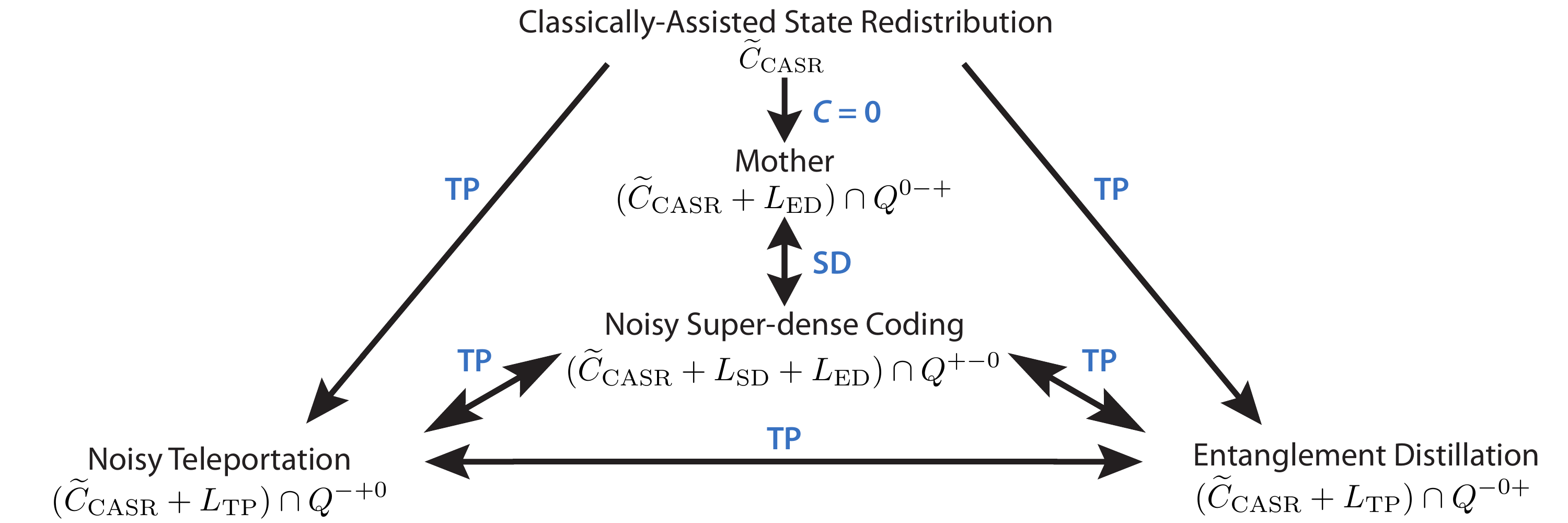}
\end{center}
\caption{The above figure depicts the relations between the capacity regions
for the static case. Any bi-directional arrow represents a bijection between
two capacity regions. Any one-way arrow represents an injection from one
capacity region to another.}%
\label{fig:Mother-Side}%
\end{figure*}

The mother protocol consumes a noisy static resource and noiseless quantum
communication to generate noiseless entanglement. The mother's achievable rate
region $\widetilde{\mathcal{C}}^{0-+}(\rho)$ and capacity region
$\mathcal{C}_{\text{MP}}\left(  \rho\right)  $ lie in the $Q^{0-+}$ quadrant
of the $\left(  C,Q,E\right)  $ space:%
\begin{align}
\widetilde{\mathcal{C}}_{\text{DS}}^{0-+}\left(  \rho\right)   &
=\widetilde{\mathcal{C}}_{\text{DS}}\left(  \rho\right)  \cap Q^{0-+}%
,\label{eq:def-ach-DS-0-+}\\
\mathcal{C}_{\text{DS}}^{0-+}\left(  \rho\right)   &  =\mathcal{C}_{\text{DS}%
}\left(  \rho\right)  \cap Q^{0-+}. \label{eq:def-cap-DS-0-+}%
\end{align}
The above relations establish that the mother's respective regions are special
cases of the direct static achievable rate region $\widetilde{\mathcal{C}%
}_{\text{DS}}\left(  \rho\right)  $\ and the direct static capacity region
$\mathcal{C}_{\text{DS}}\left(  \rho\right)  $. The mother capacity theorem
states that the mother's achievable rate region $\widetilde{\mathcal{C}%
}_{\text{MP}}\left(  \rho\right)  $ is the same as its capacity region:%
\begin{equation}
\widetilde{\mathcal{C}}_{\text{DS}}^{0-+}\left(  \rho\right)  =\mathcal{C}%
_{\text{DS}}^{0-+}\left(  \rho\right)  . \label{eq:mother-capacity}%
\end{equation}
The mother's achievable rate region $\widetilde{\mathcal{C}}_{\text{DS}}%
^{0-+}(\rho)$ is a special case of the classically-assisted state
redistribution achievable rate region $\widetilde{\mathcal{C}}_{\text{CASR}%
}\left(  \rho\right)  $\ (combined with entanglement distribution) where there
is no consumption of classical communication:%
\begin{equation}
\widetilde{\mathcal{C}}_{\text{DS}}^{0-+}\left(  \rho\right)  =(\widetilde
{\mathcal{C}}_{\text{CASR}}\left(  \rho\right)  +L_{\text{ED}})\cap Q^{0-+}.
\label{eq:mother-CAM}%
\end{equation}

The noisy super-dense coding protocol consumes a noisy static resource and
noiseless quantum communication to generate noiseless classical communication.
Its achievable rate region $\widetilde{\mathcal{C}}_{\text{DS}}^{+-0}\left(
\rho\right)  $ and capacity region $\mathcal{C}_{\text{DS}}^{+-0}\left(
\rho\right)  $ lie in the $Q^{+-0}$ quadrant of the $\left(  C,Q,E\right)  $
space:%
\begin{align}
\widetilde{\mathcal{C}}_{\text{DS}}^{+-0}\left(  \rho\right)   &
=\widetilde{\mathcal{C}}_{\text{DS}}\left(  \rho\right)  \cap Q^{+-0}%
,\label{eq:def-ach-DS-+-0}\\
\mathcal{C}_{\text{DS}}^{+-0}\left(  \rho\right)   &  =\mathcal{C}_{\text{DS}%
}\left(  \rho\right)  \cap Q^{+-0}. \label{eq:def-cap-DS-+-0}%
\end{align}
The above relations establish that its respective regions are special cases of
the direct static achievable rate region $\widetilde{\mathcal{C}}_{\text{DS}%
}\left(  \rho\right)  $\ and the direct static capacity region $\mathcal{C}%
_{\text{DS}}\left(  \rho\right)  $. Its capacity theorem states that NSD's
achievable rate region is equivalent to its capacity region:%
\begin{equation}
\widetilde{\mathcal{C}}_{\text{DS}}^{+-0}\left(  \rho\right)  =\mathcal{C}%
_{\text{DS}}^{+-0}\left(  \rho\right)  . \label{eq:NSD-capacity}%
\end{equation}
Its achievable rate region $\widetilde{\mathcal{C}}_{\text{DS}}^{+-0}\left(
\rho\right)  $\ is obtainable from the mother's achievable rate region
$\widetilde{\mathcal{C}}_{\text{DS}}^{0-+}\left(  \rho\right)  $ by combining
it with super-dense coding and keeping the points with zero entanglement:%
\begin{align}
\widetilde{\mathcal{C}}_{\text{DS}}^{+-0}\left(  \rho\right)   &  =\left(
\widetilde{\mathcal{C}}_{\text{DS}}^{0-+}\left(  \rho\right)  +L_{\text{SD}%
}\right)  \cap Q^{+-0},\nonumber\\
&  =\left(  ((\widetilde{\mathcal{C}}_{\text{CASR}}\left(  \rho\right)
+L_{\text{ED}})\cap Q^{0-+})+L_{\text{SD}}\right)  \cap Q^{+-0},
\label{eq:mother-NSD}%
\end{align}
where the second equality comes from (\ref{eq:mother-CAM}).

The noisy teleportation protocol consumes a noisy static resource and
noiseless classical communication to generate noiseless quantum communication.
Its achievable rate region $\widetilde{\mathcal{C}}_{\text{DS}}^{-+0}\left(
\rho\right)  $ and capacity region $\mathcal{C}_{\text{DS}}^{-+0}\left(
\rho\right)  $ lie in the $Q^{-+0}$ quadrant of the $\left(  C,Q,E\right)  $
space:%
\begin{align}
\widetilde{\mathcal{C}}_{\text{DS}}^{-+0}\left(  \rho\right)   &
=\widetilde{\mathcal{C}}_{\text{DS}}\left(  \rho\right)  \cap Q^{-+0}%
,\label{eq:def-ach-DS--+0}\\
\mathcal{C}_{\text{DS}}^{-+0}\left(  \rho\right)   &  =\mathcal{C}_{\text{DS}%
}\left(  \rho\right)  \cap Q^{-+0}. \label{eq:def-cap-DS--+0}%
\end{align}
The above relations establish that its respective regions are special cases of
the direct static achievable rate region $\widetilde{\mathcal{C}}_{\text{DS}%
}\left(  \rho\right)  $\ and the direct static capacity region $\mathcal{C}%
_{\text{DS}}\left(  \rho\right)  $. Its capacity theorem states that its
achievable rate region is equivalent to its capacity region:%
\begin{equation}
\widetilde{\mathcal{C}}_{\text{DS}}^{-+0}\left(  \rho\right)  =\mathcal{C}%
_{\text{DS}}^{-+0}\left(  \rho\right)  . \label{eq:NTP-capacity}%
\end{equation}
Its achievable rate region $\widetilde{\mathcal{C}}_{\text{DS}}^{-+0}\left(
\rho\right)  $\ is obtainable from the classically-assisted state
redistribution's achievable rate region $\widetilde{\mathcal{C}}_{\text{CASR}%
}\left(  \rho\right)  $ by combining it with teleportation and keeping the
points with zero entanglement:%
\begin{equation}
\widetilde{\mathcal{C}}_{\text{DS}}^{-+0}\left(  \rho\right)  =(\widetilde
{\mathcal{C}}_{\text{CASR}}\left(  \rho\right)  +L_{\text{TP}})\cap Q^{-+0}.
\label{eq:mother-NTP}%
\end{equation}

The entanglement distillation protocol consumes a noisy static resource and
noiseless classical communication to generate noiseless entanglement. Its
achievable rate region $\widetilde{\mathcal{C}}_{\text{DS}}^{-0+}(\rho)$ and
capacity region $\mathcal{C}_{\text{DS}}^{-0+}(\rho)$ lie in the $Q^{-0+}$
quadrant of the $(C,Q,E)$ space:%
\begin{align}
\widetilde{\mathcal{C}}_{\text{DS}}^{-0+}(\rho)  &  =\widetilde{\mathcal{C}%
}_{\text{DS}}(\rho)\cap Q^{-0+},\label{eq:def-ach-DS--0+}\\
\mathcal{C}_{\text{DS}}^{-0+}(\rho)  &  =\mathcal{C}_{\text{DS}}(\rho)\cap
Q^{-0+}. \label{eq:def-cap-DS--0+}%
\end{align}
The above relations show that its respective regions are special cases of the
direct static achievable rate region $\widetilde{\mathcal{C}}_{\text{DS}}%
(\rho)$ and the direct static capacity region $\mathcal{C}_{\text{DS}}(\rho)$.
Its capacity theorem states that its achievable rate region is equivalent to
its capacity region:%
\begin{equation}
\widetilde{\mathcal{C}}_{\text{DS}}^{-0+}(\rho)=\mathcal{C}_{\text{DS}}%
^{-0+}(\rho). \label{eq:ED-capacity}%
\end{equation}
Its achievable rate region $\widetilde{\mathcal{C}}_{\text{DS}}^{-0+}(\rho)$
is obtainable from the classically-assisted state redistribution's achievable
rate region $\widetilde{\mathcal{C}}_{\text{CASR}}(\rho)$ by combining it with
teleportation and keeping the points with zero quantum communication:%
\begin{equation}
\widetilde{\mathcal{C}}_{\text{DS}}^{-0+}(\rho)=(\widetilde{\mathcal{C}%
}_{\text{CASR}}(\rho)+L_{\text{TP}})\cap Q^{-0+}. \label{eq:mother-ED}%
\end{equation}

\subsubsection{Entanglement-Generating Octants}

We first consider the four octants with corresponding protocols that generate
entanglement, i.e., those of the form $\left(  \pm,\pm,+\right)  $. The proof
of one octant is trivial and the proof of another appears in Ref.~\cite{HW10}.
The proofs of the remaining two octants are similar to each other---they
follow by consuming all of the entanglement in those octants and resorting to
the previously established converse theorems for two-dimensional quadrants.
Geometrically, the technique is to project all points in an octant into a
quadrant with a known capacity theorem and exploit that converse proof of the
corresponding quadrant.

$\boldsymbol{(+,+,+)}$. This octant is empty because a noisy static resource
alone cannot generate a dynamic resource.

$\boldsymbol{(-,-,+)}$. The full proof of the converse for this octant appears
in Appendix~\ref{sec:converse_--+_static}. There, we obtain the following
bounds on the one-shot, one-instrument capacity region:%
\begin{align*}
E &  \leq I(A\rangle BX)_{\sigma}+\left\vert Q\right\vert ,\\
\left\vert C\right\vert +2\left\vert Q\right\vert  &  \geq I\left(
X;E|B\right)  _{\sigma}+I\left(  A^{\prime};E|E^{\prime}X\right)  _{\sigma}\\
E &  \leq\left\vert C\right\vert +\left\vert Q\right\vert +I\left(  A\rangle
BX\right)  _{\sigma}-I\left(  X;E|B\right)  _{\sigma},
\end{align*}
where the entropies are with respect to the state in
(\ref{eq:instrument-state}). The above set of inequalities is equivalent to a
translation of the unit resource capacity region to the achievable rate triple
of the classically-assisted state redistribution protocol. The result is that
the capacity region for this octant is within an achievable rate region
consisting of all rates achieved by the regularized classically-assisted state
redistribution protocol combined with the unit protocols:%
\[
\ \mathcal{C}_{\text{DS}}^{--+}\left(  \rho\right)  \subseteq\widetilde
{\mathcal{C}}_{\text{CASR}}\left(  \rho\right)  +\widetilde{\mathcal{C}%
}_{\text{U}}.
\]

$\boldsymbol{(+,-,+)}$. This octant exploits the projection technique with
super-dense coding. Let%
\[
\mathcal{C}_{\text{DS}}^{+-+}\left(  \rho\right)  \equiv\mathcal{C}%
_{\text{DS}}\left(  \rho\right)  \cap O^{+-+},
\]
and recall the definition of $\mathcal{C}_{\text{DS}}^{+-0}\left(
\rho\right)  $ in (\ref{eq:def-cap-DS-+-0}). We exploit the line of
super-dense coding $L_{\text{SD}}$ as defined in (\ref{eq:line-SD}). Define
the following maps:%
\begin{align*}
f  &  :S\rightarrow(S+L_{\text{SD}})\cap Q^{+-0},\\
\hat{f}  &  :S\rightarrow(S-L_{\text{SD}})\cap O^{+-+}.
\end{align*}
The map $f$ translates the set $S$ in the super-dense coding direction and
keeps the points that lie on the $Q^{+-0}$ quadrant. The map $\hat{f}$, in a
sense, undoes the effect of $f$ by moving the set $S$ back to the octant
$O^{+-+}$.

The inclusion $\mathcal{C}_{\text{DS}}^{+-+}\left(  \rho\right)  \subseteq
\hat{f}(f(\mathcal{C}_{\text{DS}}^{+-+}\left(  \rho\right)  ))$ holds because%
\begin{align}
&  \mathcal{C}_{\text{DS}}^{+-+}\left(  \rho\right) \nonumber\\
&  =\mathcal{C}_{\text{DS}}^{+-+}\left(  \rho\right)  \cap O^{+-+}\nonumber\\
&  \subseteq(((\mathcal{C}_{\text{DS}}^{+-+}\left(  \rho\right)
+L_{\text{SD}})\cap Q^{+-0})-L_{\text{SD}})\cap O^{+-+}\nonumber\\
&  =(f(\mathcal{C}_{\text{DS}}^{+-+}\left(  \rho\right)  )-L_{\text{SD}})\cap
O^{+-+}\nonumber\\
&  =\hat{f}(f(\mathcal{C}_{\text{DS}}^{+-+}\left(  \rho\right)  )).
\label{DSpnp1}%
\end{align}
The first set equivalence is obvious from the definition of $\mathcal{C}%
_{\text{DS}}^{+-+}\left(  \rho\right)  $. The first inclusion follows from the
following logic.\ Pick any point $a\equiv(C,Q,E)\in\mathcal{C}_{\text{DS}%
}^{+-+}\left(  \rho\right)  \cap O^{+-+}$ and a particular point
$b\equiv\left(  2E,-E,-E\right)  \in L_{\text{SD}}$. It follows that
$a+b=\left(  C+2E,Q-E,0\right)  \in(\mathcal{C}_{\text{DS}}^{+-+}\left(
\rho\right)  +L_{\text{SD}})\cap Q^{+-0}$. We then pick a point $-b=\left(
-2E,E,E\right)  \in-L_{\text{SD}}$. It follows that $a+b-b\in(((\mathcal{C}%
_{\text{DS}}^{+-+}\left(  \rho\right)  +L_{\text{SD}})\cap Q^{+-0}%
)-L_{\text{SD}})\cap O^{+-+}$ and that $a+b-b=(C,Q,E)=a$. The first inclusion
thus holds because every point in $\mathcal{C}_{\text{DS}}^{+-+}\left(
\rho\right)  \cap O^{+-+}$ is in $(((\mathcal{C}_{\text{DS}}^{+-+}\left(
\rho\right)  +L_{\text{SD}})\cap Q^{+-0})-L_{\text{SD}})\cap O^{+-+}$. The
second set equivalence follows from the definition of $f$ and the third set
equivalence follows from the definition of $\hat{f}$.

It is operationally clear that the following inclusion holds
\begin{equation}
f(\mathcal{C}_{\text{DS}}^{+-+}\left(  \rho\right)  )\subseteq\mathcal{C}%
_{\text{DS}}^{+-0}\left(  \rho\right)  , \label{DSpnp2}%
\end{equation}
because the mapping $f$ converts any achievable point $a\in\mathcal{C}%
_{\text{DS}}^{+-+}\left(  \rho\right)  $ to an achievable point in
$\mathcal{C}_{\text{DS}}^{+-0}\left(  \rho\right)  $ by consuming all of the
entanglement at point $a$ with super-dense coding.

The converse proof of the noisy super-dense coding protocol \cite{DHW05RI} is
useful for us:%
\begin{equation}
\mathcal{C}_{\text{DS}}^{+-0}\left(  \rho\right)  \subseteq\widetilde
{\mathcal{C}}_{\text{DS}}^{+-0}\left(  \rho\right)  . \label{DSpnp4}%
\end{equation}
Recall the relation in\ (\ref{eq:mother-NSD}) between its achievable rate
region $\widetilde{\mathcal{C}}_{\text{DS}}^{+-0}\left(  \rho\right)  $ and
the classically-assisted state redistribution achievable rate region
$\widetilde{\mathcal{C}}_{\text{CASR}}\left(  \rho\right)  $. The following
set inclusion holds
\begin{equation}
\widetilde{\mathcal{C}}_{\text{DS}}^{+-0}\left(  \rho\right)  \subseteq
(\widetilde{\mathcal{C}}_{\text{CASR}}\left(  \rho\right)  +L_{\text{SD}})\cap
Q^{+-0}, \label{DSpnp5}%
\end{equation}
by dropping the intersection with $Q^{0-+}$ in (\ref{eq:mother-NSD}).

The inclusion $\hat{f}(\widetilde{\mathcal{C}}_{\text{DS}}^{+-0}\left(
\rho\right)  )\subseteq\widetilde{\mathcal{C}}_{\text{DS}}^{+-+}\left(
\rho\right)  $ holds because%
\begin{align}
&  \hat{f}(\widetilde{\mathcal{C}}_{\text{DS}}^{+-0}\left(  \rho\right)
)\nonumber\\
&  \subseteq(((\widetilde{\mathcal{C}}_{\text{CASR}}\left(  \rho\right)
+L_{\text{SD}})\cap Q^{+-0})-L_{\text{SD}})\cap O^{+-+}\nonumber\\
&  \subseteq((\widetilde{\mathcal{C}}_{\text{CASR}}\left(  \rho\right)
+L_{\text{SD}})-L_{\text{SD}})\cap O^{+-+}\nonumber\\
&  =((\widetilde{\mathcal{C}}_{\text{CASR}}\left(  \rho\right)  +L_{\text{SD}%
})\cap O^{+-+})\cup\nonumber\\
&  \ \ \ \ \ \ \ \ \ \ \ \ \ \ \ \ \ \ \ \ \ \ \ \ \ \ \ \ \ \ ((\widetilde
{\mathcal{C}}_{\text{CASR}}\left(  \rho\right)  -L_{\text{SD}})\cap
O^{+-+})\nonumber\\
&  \subseteq\widetilde{\mathcal{C}}_{\text{DS}}^{+-+}\left(  \rho\right)  .
\label{DSpnp3}%
\end{align}
The first inclusion follows from (\ref{DSpnp5}). The second inclusion follows
by dropping the intersection with $Q^{+-0}$. The second set equivalence
follows because $(\widetilde{\mathcal{C}}_{\text{CASR}}\left(  \rho\right)
+L_{\text{SD}})-L_{\text{SD}}=(\widetilde{\mathcal{C}}_{\text{CASR}}\left(
\rho\right)  +L_{\text{SD}})\cup(\widetilde{\mathcal{C}}_{\text{CASR}}\left(
\rho\right)  -L_{\text{SD}})$, and the last inclusion follows because
$(\widetilde{\mathcal{C}}_{\text{CASR}}\left(  \rho\right)  -L_{\text{SD}%
})\cap O^{+-+}=\left(  0,0,0\right)  $.

Putting (\ref{DSpnp1}), (\ref{DSpnp2}), (\ref{DSpnp4}), and (\ref{DSpnp3})
together, the inclusion $\mathcal{C}_{\text{DS}}^{+-+}\left(  \rho\right)
\subseteq\widetilde{\mathcal{C}}_{\text{DS}}^{+-+}\left(  \rho\right)  $ holds
because%
\begin{multline*}
\mathcal{C}_{\text{DS}}^{+-+}\left(  \rho\right)  \subseteq\hat{f}%
(f(\mathcal{C}_{\text{DS}}^{+-+}\left(  \rho\right)  ))\\
\subseteq\hat{f}(\mathcal{C}_{\text{DS}}^{+-0}\left(  \rho\right)
)\subseteq\hat{f}(\widetilde{\mathcal{C}}_{\text{DS}}^{+-0}\left(
\rho\right)  )\subseteq\widetilde{\mathcal{C}}_{\text{DS}}^{+-+}\left(
\rho\right)  .
\end{multline*}
The above inclusion is the statement of the converse theorem for this octant.

The proof for the other entanglement generating octant $\left(  -,+,+\right)
$ follows similarly to the above octant by exploiting the same projection
technique with teleportation. The proof appears in
Appendix~\ref{sec:-++static}.

\subsubsection{Entanglement-Consuming Octants}

We now consider all the octants with corresponding protocols that consume
entanglement, i.e., those of the form $\left(  \pm,\pm,-\right)  $. The proofs
for two of the octants are trivial and the proofs for the two non-trivial
octants each contain an additivity lemma that shows how to relate their
converse proofs to the converse proofs of a quadrant.

$\boldsymbol{(+,+,-)}$. This octant is empty because a noisy static resource
assisted by noiseless entanglement cannot generate a dynamic resource (the two
static resources cannot generate classical communication or quantum
communication or both).

$\boldsymbol{(-,-,-)}$. $\widetilde{\mathcal{C}}_{\text{DS}}$ completely
contains this octant.

$\boldsymbol{(+,-,-)}$. Define $\Phi^{|E|}$ to be a state of size $|E|$ ebits.
We first need the following lemma.

\begin{lemma}
\label{lem:additivity-+-0}The following inclusion holds%
\[
\mathcal{C}_{\text{DS}}^{+-0}(\rho\otimes\Phi^{|E|})\subseteq\widetilde
{\mathcal{C}}_{\text{DS}}^{+-0}(\rho)+\widetilde{\mathcal{C}}_{\text{DS}%
}^{+-0}(\Phi^{|E|}).
\]

\end{lemma}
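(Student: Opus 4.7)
The plan is to combine the noisy super-dense coding (NSD) converse with an additivity argument for the relevant entropic quantities across the tensor-product structure of $\rho \otimes \Phi^{|E|}$. First I would apply the NSD converse (\ref{eq:NSD-capacity}) to the state $\rho \otimes \Phi^{|E|}$, which replaces the left-hand side $\mathcal{C}_{\text{DS}}^{+-0}(\rho \otimes \Phi^{|E|})$ by the achievable region $\widetilde{\mathcal{C}}_{\text{DS}}^{+-0}(\rho \otimes \Phi^{|E|})$. The lemma is thereby reduced to the additivity-type inclusion
$$\widetilde{\mathcal{C}}_{\text{DS}}^{+-0}(\rho \otimes \Phi^{|E|}) \subseteq \widetilde{\mathcal{C}}_{\text{DS}}^{+-0}(\rho) + \widetilde{\mathcal{C}}_{\text{DS}}^{+-0}(\Phi^{|E|}),$$
which is the natural additivity statement for NSD when the second factor is pure and maximally entangled.

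Second, I would unpack a typical point $(C, -|Q|, 0)$ in the regularized region on the left-hand side using (\ref{eq:mother-NSD}) and Definition~\ref{def:CASR-achievable}. Such a point arises from an instrument $\widetilde{\mathcal{T}}$ acting on Alice's joint system $A_1 A_2$ (with $A_1$ her share of $\rho$ and $A_2$ her share of $\Phi^{|E|}$), together with the super-dense coding and entanglement distribution directions. The classifying entropic quantities $I(X; E|B)$, $I(A'; E|E'X)$, and $I(A'; B|X) - I(A'; E'|X)$ are evaluated on the output state $\sigma$. The crucial structural facts are that the purification of $\rho \otimes \Phi^{|E|}$ has purifying reference equal only to $E_1$ (the purifying system of $\rho$), since $\Phi^{|E|}$ is already pure, and that Bob's marginal on $B_2$ is maximally mixed of dimension $2^{|E|}$, so $H(B_2)_\sigma = |E|$. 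A chain-rule computation such as $I(X; B_1 B_2)_\sigma = I(X; B_1)_\sigma + I(X; B_2 | B_1)_\sigma \leq I(X; B_1)_\sigma + |E|$, together with analogous splittings of the remaining entropic terms, cleanly separates the $\rho$-contribution from the $\Phi^{|E|}$-contribution.

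With the decomposition in hand, I would write $(C, -|Q|, 0) = (C - 2|E|, -(|Q| - |E|), 0) + (2|E|, -|E|, 0)$, where the second summand is precisely the super-dense coding point and so lies in $\widetilde{\mathcal{C}}_{\text{DS}}^{+-0}(\Phi^{|E|})$, and the first summand corresponds to the entropic rates achievable by the induced instrument on $\rho$ alone, hence lies in $\widetilde{\mathcal{C}}_{\text{DS}}^{+-0}(\rho)$. The hard part will be verifying that this decomposition is compatible with the multi-letter regularization built into Definition~\ref{def:CASR-achievable}: the entropies are evaluated on the $k$-copy output of an instrument on $(\rho \otimes \Phi^{|E|})^{\otimes k} = \rho^{\otimes k} \otimes (\Phi^{|E|})^{\otimes k}$, and one must confirm that the chain-rule bounds survive as $k \to \infty$ with no cross-entropy term between the $\rho^{\otimes k}$-factor and the $(\Phi^{|E|})^{\otimes k}$-factor obstructing the split. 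The rigidity of the Schmidt structure of $(\Phi^{|E|})^{\otimes k}$---in particular the triviality of its purifying reference and the fact that its $B_2^{k}$-marginal is maximally mixed of entropy $k|E|$---is precisely what makes each entropic quantity decompose cleanly, and I expect the pattern to pass through the regularization unchanged.
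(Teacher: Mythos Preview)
Your reduction via the NSD converse and the arithmetic decomposition $(C,-|Q|,0)=(C-2|E|,-(|Q|-|E|),0)+(2|E|,-|E|,0)$ are fine, and the second summand is indeed the super-dense coding point in $\widetilde{\mathcal{C}}_{\text{DS}}^{+-0}(\Phi^{|E|})$. The gap is in your claim that the first summand lies in $\widetilde{\mathcal{C}}_{\text{DS}}^{+-0}(\rho)$. You invoke ``the induced instrument on $\rho$ alone'' but never construct it, and there is no canonical way to restrict an instrument acting on $A_1A_2$ to one acting on $A_1$ once the two factors have been entangled by $\widetilde{\mathcal{T}}$. Your chain-rule inequalities (e.g.\ $I(X;B_1B_2)_\sigma\le I(X;B_1)_\sigma+|E|$) bound information quantities from above, but membership in an achievable region requires exhibiting an instrument that \emph{attains} the residual rates, and upper bounds on the joint entropies do not supply one. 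Even if you try to absorb $B_2$ into Alice's side by having her locally prepare $\Phi^{|E|}$, the relabelling changes which systems appear in $I(A';B|X)$ and $I(A';E|E'X)$, and the resulting quantities do not match your claimed split by simple chain rules.

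The paper avoids this problem entirely by never working at the entropic level. It uses the linear bijection $f:(0,Q,E')\mapsto(2E',Q-E',0)$ that super-dense coding induces between the mother region $\widetilde{\mathcal{C}}_{\text{DS}}^{0-+}$ and the NSD region $\widetilde{\mathcal{C}}_{\text{DS}}^{+-0}$, valid by (\ref{eq:mother-capacity}) and (\ref{eq:NSD-capacity}). Under $f^{-1}$ the statement becomes an inclusion in the mother quadrant, where it reduces to the operationally transparent fact $\mathcal{C}_{\text{DS}}^{0-+}(\rho\otimes\Phi^{|E|})=\mathcal{C}_{\text{DS}}^{0-+}(\rho)+(0,0,|E|)$: appending $|E|$ perfect ebits to the noisy state simply translates the entanglement-generation capacity by $|E|$. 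Linearity of $f$ then carries the inclusion back to $Q^{+-0}$. No instrument on $\rho$ is ever exhibited; the argument lives purely at the level of capacity regions and uses only the known mother and NSD capacity theorems together with the SD bijection.
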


\begin{IEEEproof}
Super-dense coding induces a linear bijection $f:\widetilde{\mathcal{C}}_{\text{DS}%
}^{0-+}\left(  \rho\right)  \rightarrow\widetilde{\mathcal{C}}_{\text{DS}}^{+-0}\left(
\rho\right)  $ between the mother achievable region $\widetilde{\mathcal{C}}_{\text{DS}%
}^{0-+}\left(  \rho\right)  $ and the noisy dense coding achievable region
$\widetilde{\mathcal{C}}_{\text{DS}}^{+-0}\left(  \rho\right)  $. The bijection $f$
behaves as follows for every point $(0,Q,E)\in\widetilde{\mathcal{C}}_{\text{DS}}%
^{0-+}\left(  \rho\right)  $:%
\[
f:(0,Q,E)\rightarrow(2E,Q-E,0).
\]
The following relation holds%
\begin{equation}
f(\widetilde{\mathcal{C}}_{\text{DS}}^{0-+}\left(  \rho\right)  )=\widetilde
{\mathcal{C}}_{\text{DS}}^{+-0}\left(  \rho\right)  , \label{DSpnn2}%
\end{equation}
because applying dense coding to the mother resource inequality gives noisy
dense coding \cite{DHW05RI}.
The inclusion $\mathcal{C}_{\text{DS}}^{0-+}(\rho\otimes\Phi^{|E|})\subseteq
f^{-1}(\widetilde{\mathcal{C}}_{\text{DS}}^{+-0}(\rho)+\widetilde{\mathcal{C}}_{\text{DS}}%
^{+-0}(\Phi^{|E|}))$ holds because%
\begin{align*}
\mathcal{C}_{\text{DS}}^{0-+}(\rho\otimes\Phi^{|E|})  &  =\mathcal{C}_{\text{DS}}^{0-+}%
(\rho)+(0,0,E)\\
&  \subseteq \mathcal{C}_{\text{DS}}^{0-+}(\rho)+\mathcal{C}_{\text{DS}}^{0-+}(\Phi^{|E|})\\
&  =\widetilde{\mathcal{C}}_{\text{DS}}^{0-+}(\rho)+\widetilde{\mathcal{C}}_{\text{DS}}^{0-+}%
(\Phi^{|E|})\\
&  =f^{-1}(\widetilde{\mathcal{C}}_{\text{DS}}^{+-0}(\rho))+f^{-1}(\widetilde
{\mathcal{C}}_{\text{DS}}^{+-0}(\Phi^{|E|}))\\
&  =f^{-1}(\widetilde{\mathcal{C}}_{\text{DS}}^{+-0}(\rho)+\widetilde{\mathcal{C}}_{\text{DS}%
}^{+-0}(\Phi^{|E|})).
\end{align*}
The first set equivalence follows because the capacity region of the noisy
resource state $\rho$ combined with a rate $E$ maximally entangled state is
equivalent to a translation of the capacity region of the noisy resource state
$\rho$. The first inclusion follows because the capacity region of a rate $E$
maximally entangled state contains the rate triple $(0,0,E)$. The second set
equivalence follows from the mother capacity theorem in
(\ref{eq:mother-capacity}), the third set equivalence from (\ref{DSpnn2}), and
the last from linearity of the map $f$. The above inclusion implies the
following one:
\[
f(\mathcal{C}_{\text{DS}}^{0-+}(\rho\otimes\Phi^{|E|}))\subseteq\widetilde{C}%
_{\text{DS}}^{+-0}(\rho)+\widetilde{\mathcal{C}}_{\text{DS}}^{+-0}(\Phi^{|E|}).
\]
The lemma follows because%
\begin{align*}
f(\mathcal{C}_{\text{DS}}^{0-+}(\rho\otimes\Phi^{|E|}))  &  =f(\widetilde{\mathcal{C}}_{\text{DS}%
}^{0-+}(\rho\otimes\Phi^{|E|}))\\
&  =\widetilde{\mathcal{C}}_{\text{DS}}^{+-0}(\rho\otimes\Phi^{|E|})\\
&  =\mathcal{C}_{\text{DS}}^{+-0}(\rho\otimes\Phi^{|E|}),
\end{align*}
where we apply the mother capacity theorem in (\ref{eq:mother-capacity}) and
the NSD\ capacity theorem in (\ref{eq:NSD-capacity}).
\end{IEEEproof}

We now begin the converse proof for this octant. Observe that%
\begin{equation}
\widetilde{\mathcal{C}}_{\text{DS}}^{+-0}(\Phi^{|E|})=\widetilde{\mathcal{C}%
}_{\text{U}}^{+-E}. \label{eq:+-0phi=+-EU}%
\end{equation}
Hence for all $E\leq0$,%
\begin{equation}
\mathcal{C}_{\text{DS}}^{+-E}(\rho)=\mathcal{C}_{\text{DS}}^{+-0}(\rho
\otimes\Phi^{|E|})\subseteq\widetilde{\mathcal{C}}_{\text{DS}}^{+-0}%
(\rho)+\widetilde{\mathcal{C}}_{\text{U}}^{+-E}, \label{DSpnn3}%
\end{equation}
where we apply Lemma~\ref{lem:additivity-+-0} and (\ref{eq:+-0phi=+-EU}).
Thus, the inclusion $\mathcal{C}_{\text{DS}}^{+--}(\rho)\subseteq
\widetilde{\mathcal{C}}_{\text{DS}}^{+--}(\rho)$ holds because%
\begin{align*}
\mathcal{C}_{\text{DS}}^{+--}(\rho)  &  =\bigcup_{E\leq0}\mathcal{C}%
_{\text{DS}}^{+-E}(\rho)\\
&  \subseteq\bigcup_{E\leq0}\left(  \widetilde{\mathcal{C}}_{\text{DS}}%
^{+-0}(\rho)+\widetilde{\mathcal{C}}_{\text{U}}^{+-E}\right) \\
&  =(\widetilde{\mathcal{C}}_{\text{DS}}^{+-0}(\rho)+\widetilde{\mathcal{C}%
}_{\text{U}})\cap O^{+--}\\
&  \subseteq(\widetilde{\mathcal{C}}_{\text{CASR}}(\rho)+\widetilde
{\mathcal{C}}_{\text{U}})\cap O^{+--}\\
&  =\widetilde{\mathcal{C}}_{\text{DS}}^{+--}(\rho).
\end{align*}
The first set equivalence holds by definition. The first inclusion follows
from (\ref{DSpnn3}). The second set equivalence follows because $\bigcup
_{E\leq0}\widetilde{\mathcal{C}}_{\text{U}}^{+-E}=\widetilde{\mathcal{C}%
}_{\text{U}}\cap O^{+--}$. The second inclusion holds because $\widetilde
{\mathcal{C}}_{\text{DS}}^{+-0}(\rho)$ is equivalent to noisy dense coding and
the classically-assisted state redistribution combined with the unit resource
region generates noisy dense coding. The above inclusion $\mathcal{C}%
_{\text{DS}}^{+--}(\rho)\subseteq\widetilde{\mathcal{C}}_{\text{DS}}%
^{+--}(\rho)$ is the statement of the converse theorem for this octant.

The proof for the other entanglement consuming octant $\left(  -,+,-\right)  $
follows similarly to the above proof and appears in
Appendix~\ref{sec:-+-static}.

\section{Direct dynamic trade-off}

\label{sec:dynamic}In this section, we prove
Theorem~\ref{thm:direct-dynamic-cap}, the direct dynamic capacity theorem.
Recall that this theorem determines what rates are achievable when a sender
and receiver consume a noisy dynamic resource. They additionally consume or
generate noiseless classical communication, noiseless quantum communication,
and noiseless entanglement. The theorem determines the three-dimensional
\textquotedblleft direct dynamic\textquotedblright\ capacity region, giving
the full trade-off between the three fundamental noiseless resources when a
noisy dynamic resource is available.

\subsection{Proof of the Direct Coding Theorem}

The direct coding theorem is the statement that the direct dynamic capacity
region $\mathcal{C}_{\text{DD}}(\mathcal{N})$\ contains the direct dynamic
achievable rate region $\widetilde{\mathcal{C}}_{\text{DD}}(\mathcal{N})$:%
\[
\widetilde{\mathcal{C}}_{\text{DD}}(\mathcal{N})\subseteq\mathcal{C}%
_{\text{DD}}(\mathcal{N}).
\]
It follows immediately from combining the classically-enhanced father resource
inequality in (\ref{GFP}) with the unit resource inequalities and considering
the definition of $\widetilde{\mathcal{C}}_{\text{DD}}(\mathcal{N})$ in
Theorem~\ref{thm:direct-dynamic-cap} and the definition of $\mathcal{C}%
_{\text{DD}}(\mathcal{N})$ in (\ref{eq:direct-dynamic-cap-def}).

\subsection{Proof of the Converse Theorem}

The statement of the converse theorem is that the direct dynamic achievable
rate region $\widetilde{\mathcal{C}}_{\text{DD}}(\mathcal{N})$ contains the
direct dynamic capacity region:%
\[
\mathcal{C}_{\text{DD}}(\mathcal{N})\subseteq\widetilde{\mathcal{C}%
}_{\text{DD}}(\mathcal{N}).
\]
We consider one octant of the $\left(  C,Q,E\right)  $ space at a time in
order to prove the converse theorem.

The converse theorem exploits the converse proofs of several other protocols:
the father protocol \cite{DHW03,DHW05RI}, classically-enhanced quantum
communication and entanglement generation \cite{DS03}, classically-assisted
quantum communication and entanglement generation \cite{BKN98}, and
entanglement-assisted classical communication \cite{BSST01,Hol01a,arx2004shor}%
. It also exploits some information-theoretic arguments. We briefly review
each of these protocols and their relation to the classically-enhanced father
achievable rate region $\widetilde{\mathcal{C}}_{\text{CEF}}(\mathcal{N})$ in
what follows. Figure~\ref{fig:Father-Side} illustrates the relation between
these protocols.

\begin{figure*}[ptb]
\begin{center}
\includegraphics[width=1.0\textwidth]{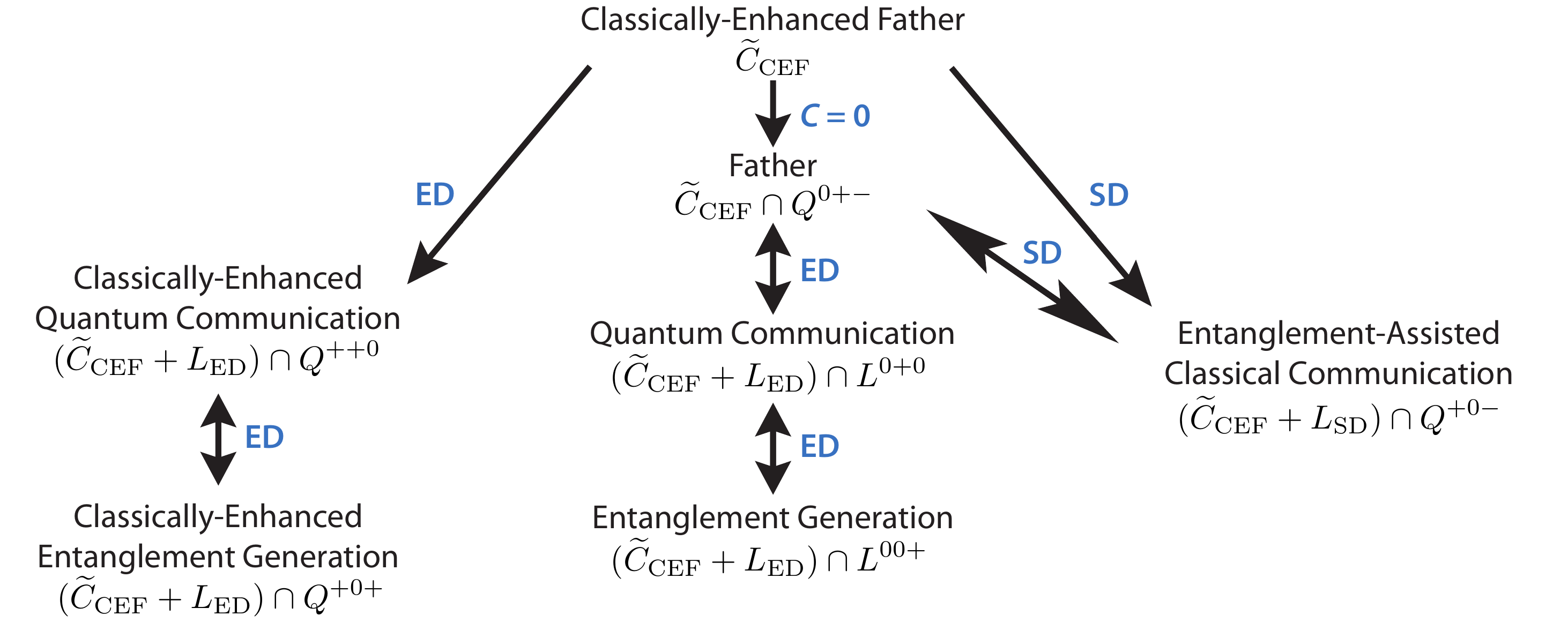}
\end{center}
\caption{The above figure depicts the relations between the capacity regions
for the dynamic case. Any bi-directional arrow represents a bijection between
two capacity regions. Any one-way arrow represents an injection from one
capacity region to another.}%
\label{fig:Father-Side}%
\end{figure*}

The father protocol consumes a noisy dynamic resource and noiseless
entanglement to generate noiseless quantum communication. The father's
achievable rate region $\widetilde{\mathcal{C}}_{\text{DD}}^{0+-}%
(\mathcal{N})$ and capacity region $\mathcal{C}_{\text{DD}}^{0+-}%
(\mathcal{N})$ lie in the $Q^{0+-}$ quadrant of the $(C,Q,E)$ space:%
\begin{align}
\widetilde{\mathcal{C}}_{\text{DD}}^{0+-}(\mathcal{N})  &  =\widetilde
{\mathcal{C}}_{\text{DD}}(\mathcal{N})\cap Q^{0+-},\\
\mathcal{C}_{\text{DD}}^{0+-}(\mathcal{N})  &  =\mathcal{C}_{\text{DD}%
}(\mathcal{N})\cap Q^{0+-}.
\end{align}
The above relations shows that the father's respective regions are special
cases of the direct dynamic achievable rate region $\widetilde{\mathcal{C}%
}_{\text{DD}}(\mathcal{N})$ and the direct dynamic capacity region
$\mathcal{C}_{\text{DD}}(\mathcal{N})$. The father capacity theorem states
that its capacity region $\mathcal{C}_{\text{DD}}^{0+-}(\mathcal{N})$\ is
equivalent to its achievable rate region $\widetilde{\mathcal{C}}_{\text{DD}%
}^{0+-}(\mathcal{N})$ \cite{DHW05RI}:%
\begin{equation}
\mathcal{C}_{\text{DD}}^{0+-}(\mathcal{N})=\widetilde{\mathcal{C}}_{\text{DD}%
}^{0+-}(\mathcal{N}). \label{eq:father-capacity}%
\end{equation}
The father achievable rate region $\widetilde{\mathcal{C}}_{\text{DD}}%
^{0+-}(\mathcal{N})$ is a special case of the classically-enhanced father
protocol where there is no classical communication \cite{HW08GFP}:
\begin{equation}
\widetilde{\mathcal{C}}_{\text{DD}}^{0+-}(\mathcal{N})=\widetilde{\mathcal{C}%
}_{\text{CEF}}(\mathcal{N})\cap Q^{0+-}. \label{CEF_FP}%
\end{equation}

The classically-enhanced quantum communication protocol consumes a noisy
dynamic resource to generate noiseless classical communication and noiseless
quantum communication. Its achievable rate region $\widetilde{\mathcal{C}%
}_{\text{DD}}^{++0}(\mathcal{N})$ and capacity region $\mathcal{C}_{\text{DD}%
}^{++0}(\mathcal{N})$ lie in the $Q^{++0}$ quadrant of the $(C,Q,E)$ space:%
\begin{align}
\widetilde{\mathcal{C}}_{\text{DD}}^{++0}(\mathcal{N})  &  =\widetilde
{\mathcal{C}}_{\text{DD}}(\mathcal{N})\cap Q^{++0},\\
\mathcal{C}_{\text{DD}}^{++0}(\mathcal{N})  &  =\mathcal{C}_{\text{DD}%
}(\mathcal{N})\cap Q^{++0}.
\end{align}
The above relations establish that its respective regions are special cases of
the direct dynamic achievable rate region $\widetilde{\mathcal{C}}_{\text{DD}%
}(\mathcal{N})$ and the direct dynamic capacity region $\mathcal{C}%
_{\text{DD}}(\mathcal{N})$. The classically-enhanced quantum communication
capacity theorem states that the achievable rate region $\widetilde
{\mathcal{C}}_{\text{DD}}^{++0}(\mathcal{N})$ is equivalent to the capacity
region $\mathcal{C}_{\text{DD}}^{++0}(\mathcal{N})$ \cite{DS03}:%
\begin{equation}
\widetilde{\mathcal{C}}_{\text{DD}}^{++0}\left(  \mathcal{N}\right)
=\mathcal{C}_{\text{DD}}^{++0}\left(  \mathcal{N}\right)  .
\label{eq:class-enh-q-comm-capacity}%
\end{equation}
Its achievable rate region $\widetilde{\mathcal{C}}_{\text{DD}}^{++0}%
(\mathcal{N})$ is obtainable from the classically-enhanced father's achievable
rate region $\widetilde{\mathcal{C}}_{\text{CEF}}(\mathcal{N})$ by combining
it with entanglement distribution and keeping the points with zero
entanglement \cite{HW08GFP}:
\begin{equation}
\widetilde{\mathcal{C}}_{\text{DD}}^{++0}(\mathcal{N})=(\widetilde
{\mathcal{C}}_{\text{CEF}}(\mathcal{N})+L_{\text{ED}})\cap Q^{++0}.
\label{CEF_CQ}%
\end{equation}

The classically-enhanced entanglement generation protocol consumes a noisy
dynamic resource to generate noiseless classical communication and noiseless
entanglement. Its achievable rate region $\widetilde{\mathcal{C}}_{\text{DD}%
}^{+0+}(\mathcal{N})$ and capacity region $\mathcal{C}_{\text{DD}}%
^{+0+}(\mathcal{N})$ lie in the $Q^{+0+}$ quadrant of the $(C,Q,E)$ space:%
\begin{align}
\widetilde{\mathcal{C}}_{\text{DD}}^{+0+}(\mathcal{N})  &  =\widetilde
{\mathcal{C}}_{\text{DD}}(\mathcal{N})\cap Q^{+0+},\label{CE_AR}\\
\mathcal{C}_{\text{DD}}^{+0+}(\mathcal{N})  &  =\mathcal{C}_{\text{DD}%
}(\mathcal{N})\cap Q^{+0+}. \label{CE_CR}%
\end{align}
The above relations establish that its respective regions are special cases of
the direct dynamic achievable rate region $\widetilde{\mathcal{C}}_{\text{DD}%
}(\mathcal{N})$ and the direct dynamic capacity region $\mathcal{C}%
_{\text{DD}}(\mathcal{N})$. Its capacity theorem states that the achievable
rate region $\widetilde{\mathcal{C}}_{\text{DD}}^{+0+}(\mathcal{N})$ is
equivalent to the capacity region $\mathcal{C}_{\text{DD}}^{+0+}(\mathcal{N})$
\cite{DS03}:%
\begin{equation}
\widetilde{\mathcal{C}}_{\text{DD}}^{+0+}\left(  \mathcal{N}\right)
=\mathcal{C}_{\text{DD}}^{+0+}\left(  \mathcal{N}\right)  . \label{DDp0pCR}%
\end{equation}
Its achievable rate region $\widetilde{\mathcal{C}}_{\text{DD}}^{+0+}%
(\mathcal{N})$ is obtainable from the classically-enhanced father's achievable
rate region $\widetilde{\mathcal{C}}_{\text{CEF}}(\mathcal{N})$ by combining
it with entanglement distribution and keeping the points with zero
entanglement \cite{HW08GFP}:
\begin{equation}
\widetilde{\mathcal{C}}_{\text{DD}}^{+0+}(\mathcal{N})=(\widetilde
{\mathcal{C}}_{\text{CEF}}(\mathcal{N})+L_{\text{ED}})\cap Q^{+0+}.
\label{CEF_CE}%
\end{equation}

The entanglement-assisted classical communication protocol consumes a noisy
dynamic resource and noiseless entanglement to generate noiseless classical
communication. Its achievable rate region $\widetilde{\mathcal{C}}_{\text{DD}%
}^{+0-}(\mathcal{N})$ and capacity region $\mathcal{C}_{\text{DD}}%
^{+0-}(\mathcal{N})$ lie in the $Q^{+0-}$ quadrant of the $(C,Q,E)$ space:%
\begin{align}
\widetilde{\mathcal{C}}_{\text{DD}}^{+0-}(\mathcal{N})  &  =\widetilde
{\mathcal{C}}_{\text{DD}}(\mathcal{N})\cap Q^{+0-},\\
\mathcal{C}_{\text{DD}}^{+0-}(\mathcal{N})  &  =\mathcal{C}_{\text{DD}%
}(\mathcal{N})\cap Q^{+0-}.
\end{align}
The above relations establish that its respective regions are special cases of
the direct dynamic achievable rate region $\widetilde{\mathcal{C}}_{\text{DD}%
}(\mathcal{N})$ and the direct dynamic capacity region $\mathcal{C}%
_{\text{DD}}(\mathcal{N})$. Its capacity theorem states that the achievable
rate region $\widetilde{\mathcal{C}}_{\text{DD}}^{+0-}(\mathcal{N})$ is
equivalent to the capacity region $\mathcal{C}_{\text{DD}}^{+0-}(\mathcal{N})$
\cite{BSST01,Hol01a,arx2004shor,DHW05RI}:%
\begin{equation}
\widetilde{\mathcal{C}}_{\text{DD}}^{+0-}\left(  \mathcal{N}\right)
=\mathcal{C}_{\text{DD}}^{+0-}\left(  \mathcal{N}\right)  .
\label{EAC_capacity}%
\end{equation}
Its achievable rate region $\widetilde{\mathcal{C}}_{\text{DD}}^{+0-}%
(\mathcal{N})$ is obtainable from the classically-enhanced father's achievable
rate region $\widetilde{\mathcal{C}}_{\text{CEF}}(\mathcal{N})$ by combining
it with the super-dense coding protocol and keeping the points with zero
quantum communication \cite{HW08GFP}:%
\begin{equation}
\widetilde{\mathcal{C}}_{\text{DD}}^{+0-}(\mathcal{N})=(\widetilde
{\mathcal{C}}_{\text{CEF}}(\mathcal{N})+L_{\text{SD}})\cap Q^{+0-}.
\label{CEF_EAC}%
\end{equation}

Forward classical communication does not increase the entanglement generation
capacity or the quantum communication capacity \cite{BDSW96,BKN98}. Thus,
there is a simple relation between the achievable rate region $\widetilde
{\mathcal{C}}_{\text{DD}}^{-0+}\left(  \mathcal{N}\right)  $ and the
achievable entanglement generation capacity region $\widetilde{\mathcal{C}%
}_{\text{DD}}^{00+}\left(  \mathcal{N}\right)  $:%
\[
\widetilde{\mathcal{C}}_{\text{DD}}^{-0+}\left(  \mathcal{N}\right)
=\widetilde{\mathcal{C}}_{\text{DD}}^{00+}\left(  \mathcal{N}\right)
+L^{-00},
\]
where $L^{-00}\equiv\left\{  \lambda\left(  -1,0,0\right)  :\lambda
\geq0\right\}  $. The converse proof of the entanglement generation capacity
region states that the achievable rate region $\widetilde{\mathcal{C}%
}_{\text{DD}}^{00+}\left(  \mathcal{N}\right)  $ is optimal \cite{Devetak03}:%
\[
\widetilde{\mathcal{C}}_{\text{DD}}^{00+}\left(  \mathcal{N}\right)
=\mathcal{C}_{\text{DD}}^{00+}\left(  \mathcal{N}\right)  \text{.}%
\]
The converse proof of the region $\widetilde{\mathcal{C}}_{\text{DD}}%
^{-0+}\left(  \mathcal{N}\right)  $ then follows%
\begin{multline}
\mathcal{C}_{\text{DD}}^{-0+}\left(  \mathcal{N}\right)  =\mathcal{C}%
_{\text{DD}}^{00+}\left(  \mathcal{N}\right)  +L^{-00}%
\label{eq:converse-forward-class-comm-ent-gen}\\
=\widetilde{\mathcal{C}}_{\text{DD}}^{00+}\left(  \mathcal{N}\right)
+L^{-00}=\widetilde{\mathcal{C}}_{\text{DD}}^{-0+}\left(  \mathcal{N}\right)
.
\end{multline}
The entanglement generation achievable rate region results from combining the
father achievable rate region in (\ref{CEF_FP}) with entanglement distribution
\cite{DHW03}:
\begin{equation}
\widetilde{\mathcal{C}}_{\text{DD}}^{00+}\left(  \mathcal{N}\right)
=((\widetilde{\mathcal{C}}_{\text{CEF}}\left(  \mathcal{N}\right)  \cap
Q^{0+-})+L_{\text{ED}})\cap L^{00+}. \label{eq:CAEG-from-father-ED}%
\end{equation}

Similar results for the classically-assisted quantum communication capacity
region $\mathcal{C}_{\text{DD}}^{-+0}(\mathcal{N})$ hold
\begin{multline}
\mathcal{C}_{\text{DD}}^{-+0}(\mathcal{N})=\mathcal{C}_{\text{DD}}%
^{0+0}(\mathcal{N})+L^{-00}\\
=\widetilde{\mathcal{C}}_{\text{DD}}^{0+0}(\mathcal{N})+L^{-00}=\widetilde
{\mathcal{C}}_{\text{DD}}^{-+0}(\mathcal{N}). \label{DDNP0CR}%
\end{multline}
The quantum communication achievable rate region results from combining the
father achievable rate region in (\ref{CEF_FP}) with entanglement distribution
\cite{DHW03}:
\begin{equation}
\widetilde{\mathcal{C}}_{\text{DD}}^{0+0}\left(  \mathcal{N}\right)
=((\widetilde{\mathcal{C}}_{\text{CEF}}\left(  \mathcal{N}\right)  \cap
Q^{0+-})+L_{\text{ED}})\cap L^{0+0}. \label{eq:CAQG-from-father-ED}%
\end{equation}

\subsubsection{Octants That Generate Quantum Communication}

We first consider all of the octants with corresponding protocols that
generate quantum communication, i.e., octants of the form $\left(  \pm
,+,\pm\right)  $. The proof of one of the octants is the converse proof in
Ref.~\cite{HW08GFP}. The proofs of two of the remaining three octants are
similar to the entanglement-generating octants from the static case. The
similarity holds because these octants generate the noiseless version of the
noisy dynamic resource. The proof of the last remaining octant is different
from any we have seen so far. We discuss later how its proof gives insight
into the question of using entanglement-assisted coding versus teleportation.

$\boldsymbol{(+,+,-)}$. The converse theorem for this octant is the converse
for entanglement-assisted communication of classical and quantum information.
(Ref.~\cite{HW08GFP} contains the proof). We briefly recall these inequalities
here:%
\begin{align*}
C+2Q  &  \leq I\left(  AX;B\right)  _{\sigma},\\
Q  &  \leq I\left(  A\rangle BX\right)  _{\sigma}+\left\vert E\right\vert ,\\
C+Q  &  \leq I\left(  X;B\right)  _{\sigma}+I\left(  A\rangle BX\right)
+\left\vert E\right\vert ,
\end{align*}
where $\sigma$ is a state of the form in (\ref{DD_sigma}). The above
inequalities form the one-shot, one-state region. Interestingly, the above set
of inequalities is a translation of the unit resource capacity region to the
classically-enhanced father protocol.

$\boldsymbol{(+,+,+)}$. This octant exploits the projection technique with
entanglement distribution. Let%
\[
\mathcal{C}_{\text{DD}}^{+++}\left(  \mathcal{N}\right)  \equiv\mathcal{C}%
_{\text{DD}}\left(  \mathcal{N}\right)  \cap O^{+++}%
\]
and recall the definition of $\mathcal{C}_{\text{DD}}^{+0+}(\mathcal{N})$ in
(\ref{CE_CR}). We exploit the line of entanglement distribution $L_{\text{ED}%
}$ as defined in (\ref{eq:line-ED}). Define the following maps:%
\begin{align*}
f  &  :S\rightarrow(S+L_{\text{ED}})\cap Q^{+0+},\\
\hat{f}  &  :S\rightarrow(S-L_{\text{ED}})\cap O^{+++}.
\end{align*}
The map $f$ translates the set $S$ in the entanglement distribution direction
and keeps the points that lie on the $Q^{+0+}$ quadrant. The map $\hat{f}$, in
a sense, undoes the effect of $f$ by moving the set $S$ back to the $O^{+++}$ octant.

The inclusion $\mathcal{C}_{\text{DD}}^{+++}\left(  \mathcal{N}\right)
\subseteq\hat{f}(f(\mathcal{C}_{\text{DD}}^{+++}\left(  \mathcal{N}\right)
))$ holds because%
\begin{align}
&  \mathcal{C}_{\text{DD}}^{+++}\left(  \mathcal{N}\right) \nonumber\\
&  =\mathcal{C}_{\text{DD}}^{+++}\left(  \mathcal{N}\right)  \cap
O^{+++}\nonumber\\
&  \subseteq(((\mathcal{C}_{\text{DD}}^{+++}\left(  \mathcal{N}\right)
+L_{\text{ED}})\cap Q^{+0+})-L_{\text{ED}})\cap O^{+++}\nonumber\\
&  =(f(\mathcal{C}_{\text{DD}}^{+++}\left(  \mathcal{N}\right)  )-L_{\text{ED}%
})\cap O^{+++}\nonumber\\
&  =\hat{f}(f(\mathcal{C}_{\text{DD}}^{+++}\left(  \mathcal{N}\right)  )).
\label{DDppp1}%
\end{align}
The first set equivalence is obvious from the definition of $\mathcal{C}%
_{\text{DD}}^{+++}\left(  \mathcal{N}\right)  $. The first inclusion follows
from the following logic. Pick any point $a\equiv(C,Q,E)\in\mathcal{C}%
_{\text{DD}}^{+++}\left(  \mathcal{N}\right)  $ and a particular point
$b\equiv\left(  0,-Q,Q\right)  \in L_{\text{ED}}$. It follows that
$a+b=\left(  C,0,E+Q\right)  \in(\mathcal{C}_{\text{DD}}^{+++}\left(
\mathcal{N}\right)  +L_{\text{ED}})\cap Q^{+0+}$. Pick a point $-b=\left(
0,Q,-Q\right)  \in-L_{\text{ED}}$. It follows that $a+b-b\in(((\mathcal{C}%
_{\text{DD}}^{+++}\left(  \mathcal{N}\right)  +L_{\text{ED}})\cap
Q^{+0+})-L_{\text{ED}})\cap O^{+++}$ and that $a+b-b=\left(  C,Q,E\right)
=a$. The first inclusion thus holds because every point in $\mathcal{C}%
_{\text{DD}}^{+++}\left(  \mathcal{N}\right)  \cap O^{+++}$ is in
$(((\mathcal{C}_{\text{DD}}^{+++}\left(  \mathcal{N}\right)  +L_{\text{ED}%
})\cap Q^{+0+})-L_{\text{ED}})\cap O^{+++}$. The second set equivalence
follows from the definition of $f$ and the third set equivalence follows from
the definition of $\hat{f}$.

It is operationally clear that the following inclusion holds%
\begin{equation}
f(\mathcal{C}_{\text{DD}}^{+++}\left(  \mathcal{N}\right)  )\subseteq
\mathcal{C}_{\text{DD}}^{+0+}\left(  \mathcal{N}\right)  , \label{DDppp2}%
\end{equation}
because the mapping $f$ converts any achievable point $a\in\mathcal{C}%
_{\text{DD}}^{+++}\left(  \mathcal{N}\right)  $ to an achievable point in
$\mathcal{C}_{\text{DD}}^{+0+}\left(  \mathcal{N}\right)  $ by consuming all
of the quantum communication at point $a$ with entanglement distribution.

The converse proof of the classically-enhanced entanglement generation
protocol \cite{DS03} states that the following inclusion holds%
\begin{equation}
\mathcal{C}_{\text{DD}}^{+0+}(\mathcal{N})\subseteq\widetilde{\mathcal{C}%
}_{\text{DD}}^{+0+}(\mathcal{N}). \label{eq:converse-CEEG}%
\end{equation}

The inclusion $\hat{f}(\widetilde{\mathcal{C}}_{\text{DD}}^{+0+}\left(
\mathcal{N}\right)  )\subseteq\widetilde{\mathcal{C}}_{\text{DD}}^{+++}\left(
\mathcal{N}\right)  $ holds because%
\begin{align}
&  \hat{f}(\widetilde{\mathcal{C}}_{\text{DD}}^{+0+}\left(  \mathcal{N}%
\right)  )\nonumber\\
&  =(((\widetilde{\mathcal{C}}_{\text{CEF}}\left(  \mathcal{N}\right)
+L_{\text{ED}})\cap Q^{+0+})-L_{\text{ED}})\cap O^{+++}\nonumber\\
&  \subseteq((\widetilde{\mathcal{C}}_{\text{CEF}}\left(  \mathcal{N}\right)
+L_{\text{ED}})-L_{\text{ED}})\cap O^{+++}\nonumber\\
&  =((\widetilde{\mathcal{C}}_{\text{CEF}}\left(  \mathcal{N}\right)
+L_{\text{ED}})\cap O^{+++})\cup\nonumber\\
&  \ \ \ \ \ \ \ \ \ \ \ \ \ \ \ \ \ \ \ \ \ \ \ \ \ \ \ \ \ \ \ ((\widetilde
{\mathcal{C}}_{\text{CEF}}\left(  \mathcal{N}\right)  -L_{\text{ED}})\cap
O^{+++})\nonumber\\
&  \subseteq\widetilde{\mathcal{C}}_{\text{DD}}^{+++}\left(  \mathcal{N}%
\right)  . \label{DDppp3}%
\end{align}
The first set equivalence follows from (\ref{CEF_CE}) the definition of
$\hat{f}$. The first inclusion follows by dropping the intersection with
$Q^{+0+}$. The second set equivalence follows because $(\widetilde
{\mathcal{C}}_{\text{CEF}}\left(  \mathcal{N}\right)  +L_{\text{ED}%
})-L_{\text{ED}}=(\widetilde{\mathcal{C}}_{\text{CEF}}\left(  \mathcal{N}%
\right)  +L_{\text{ED}})\cup(\widetilde{\mathcal{C}}_{\text{CEF}}\left(
\mathcal{N}\right)  -L_{\text{ED}})$, and the second inclusion holds by
(\ref{CEF_DD}) and because $(\widetilde{\mathcal{C}}_{\text{CEF}}\left(
\mathcal{N}\right)  -L_{\text{ED}})\cap O^{+++}=\left(  0,0,0\right)  $.

Putting (\ref{DDppp1}), (\ref{DDppp2}), (\ref{eq:converse-CEEG}), and
(\ref{DDppp3}) together, we have the following inclusion:%
\begin{multline*}
\mathcal{C}_{\text{DD}}^{+++}\left(  \mathcal{N}\right)  \subseteq\hat
{f}(f(\mathcal{C}_{\text{DD}}^{+++}\left(  \mathcal{N}\right)  ))\\
\subseteq\hat{f}(\mathcal{C}_{\text{DD}}^{+0+}\left(  \mathcal{N}\right)
)\subseteq\hat{f}(\widetilde{\mathcal{C}}_{\text{DD}}^{+0+}\left(
\mathcal{N}\right)  )\subseteq\widetilde{\mathcal{C}}_{\text{DD}}^{+++}\left(
\mathcal{N}\right)  .
\end{multline*}
The above inclusion $\mathcal{C}_{\text{DD}}^{+++}\left(  \mathcal{N}\right)
\subseteq\widetilde{\mathcal{C}}_{\text{DD}}^{+++}\left(  \mathcal{N}\right)
$ is the statement of the converse theorem for this octant.

The proof of the octant $\left(  -,+,+\right)  $ is similar to the above proof
and appears in Appendix~\ref{sec:-++dynamic}.

$\boldsymbol{(-,+,-)}$. This octant is one of the more interesting octants for
the direct dynamic trade-off. Interestingly, this octant gives insight into
the question of the use of teleportation versus entanglement-assisted quantum
error correction
\cite{BDH06,BDH06IEEE,HBD07,arx2007wildeEA,arx2007wildeEAQCC,HBD08QLDPC,arx2008wildeUQCC,arx2008wildeGEAQCC}%
. We discuss these insights in Section~\ref{sec:EA-insights}.

The converse proof for this octant follows from a \textit{reductio ad
absurdum} argument, reminiscent of similar arguments that appeared in
Ref.~\cite{HW08GFP}. We use this technique to obtain two bounds for this
octant, obviating the need to consider the technique used for the previous
octants. Figure~\ref{fig:octant-+-}\ illustrates the bounds for this octant.

First, consider that the bound $Q\leq I\left(  A\rangle BX\right)  $ applies
to all points in the $Q^{-+0}$ quadrant because forward classical
communication does not increase the quantum communication capacity of a
quantum channel \cite{BDSW96,BKN98}. Consider combining achievable points
along the boundary of the above bound with the inverse of the entanglement
distribution protocol. This procedure outlines the following bound:%
\begin{equation}
Q\leq I\left(  A\rangle BX\right)  +\left\vert E\right\vert .
\label{eq:-+-bound1}%
\end{equation}
The bound in (\ref{eq:-+-bound1}) applies to all points in this octant. Were
it not so, then one could combine points outside (\ref{eq:-+-bound1}) with
entanglement distribution and beat the bound $Q\leq I\left(  A\rangle
BX\right)  $ that applies to all points in the $Q^{-+0}$ quadrant.

Next, consider that the bound $2Q\leq I\left(  AX;B\right)  $ applies to all
points in the $Q^{0+-}$ quadrant. This bound follows from the limit on the
entanglement-assisted quantum capacity of a quantum channel
\cite{BSST01,DHW05RI,HW08GFP}. Consider combining achievable points along the
boundary of the above bound with the inverse of the super-dense coding
protocol. This procedure outlines the following bound:%
\begin{equation}
2Q\leq I\left(  AX;B\right)  +\left\vert C\right\vert . \label{eq:-+-bound2}%
\end{equation}
The bound in (\ref{eq:-+-bound2}) applies to all points in this octant. Were
it not so, then one could combine points inside (\ref{eq:-+-bound1}) but
outside (\ref{eq:-+-bound2}) with super-dense coding and beat the bound
$2Q\leq I\left(  AX;B\right)  $ that applies to all points in the $Q^{0+-}$ quadrant.

We can achieve all points specified by the above boundaries simply by
combining the classically-enhanced father protocol with teleportation or
entanglement distribution. Thus, the statement of the converse holds for this
octant:%
\[
\mathcal{C}_{\text{DD}}^{-+-}\left(  \mathcal{N}\right)  \subseteq
\widetilde{\mathcal{C}}_{\text{DD}}^{-+-}\left(  \mathcal{N}\right)  .
\]
Figure~\ref{fig:octant-+-}\ illustrates these arguments.

We can also obtain these bounds with a direct, information-theoretic argument.
This argument appears in Appendix~\ref{sec:converse_-+-_dynamic}.%
%TCIMACRO{\FRAME{ftbpFU}{3.4904in}{2.591in}{0pt}{\Qcb{(Color online) The above
%figure displays the capacity region in the octant $\left(  -,+,-\right)  $,
%that corresponds to classical- and entanglement-assisted quantum
%communication. Plane I corresponds to the bound in (\ref{eq:-+-bound2}), and
%plane II\ corresponds to the bound in (\ref{eq:-+-bound1}). There cannot be
%any points outside plane II. If there were, we could combine such points with
%entanglement distribution to beat the bound on the quantum communication
%capacity. There cannot be any points inside plane II but outside plane I. If
%there were, we could combine such points with super-dense coding to beat the
%entanglement-assisted quantum capacity bound. Planes I and II intersect at the
%line of teleportation. We can achieve all points inside bounds
%(\ref{eq:-+-bound1}) and (\ref{eq:-+-bound2}) simply by combining the father
%protocol (the black dot) with teleportation and entanglement distribution.}%
%}{\Qlb{fig:octant-+-}}{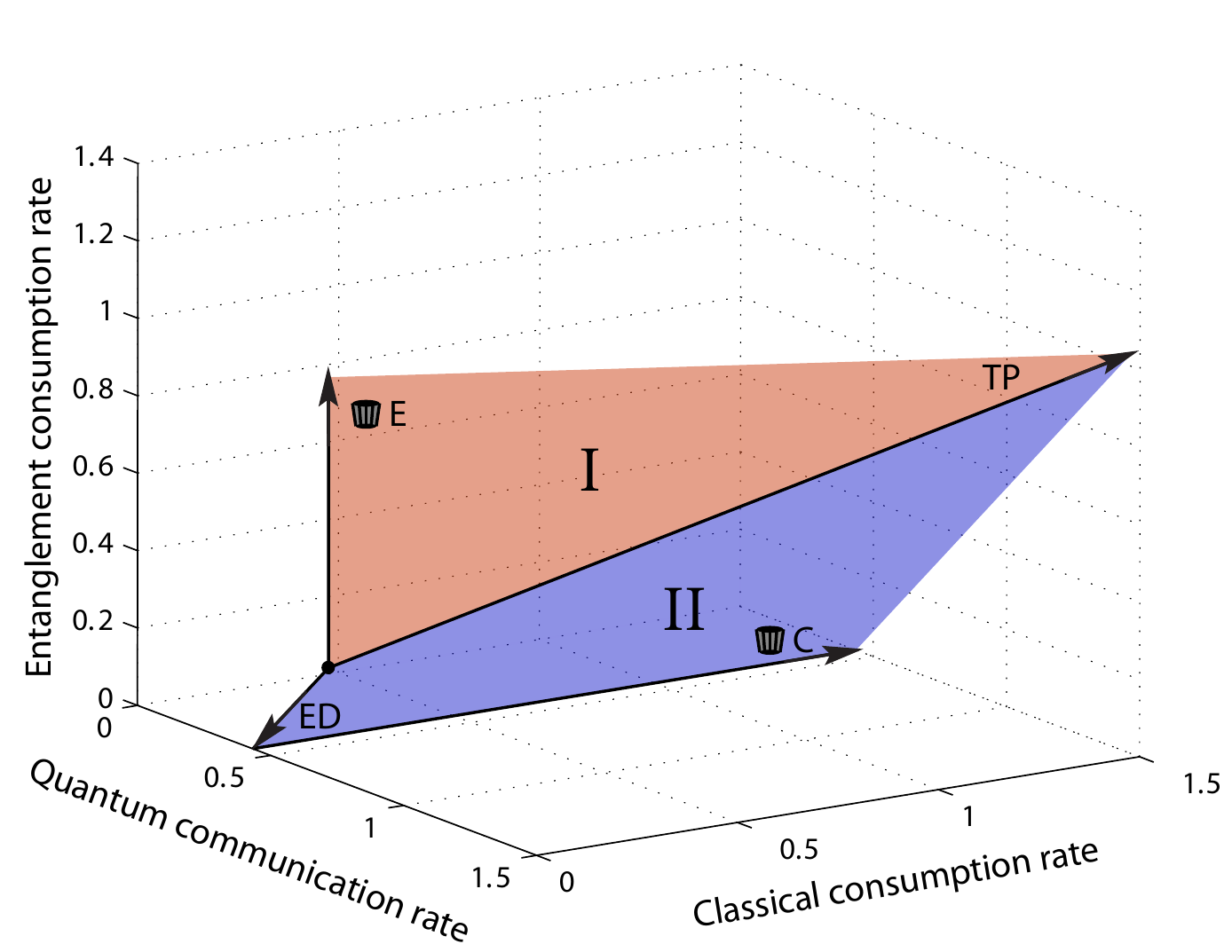}{\special{ language "Scientific Word";
%type "GRAPHIC";  maintain-aspect-ratio TRUE;  display "USEDEF";
%valid_file "F";  width 3.4904in;  height 2.591in;  depth 0pt;
%original-width 5.4129in;  original-height 4.0067in;  cropleft "0";
%croptop "1";  cropright "1";  cropbottom "0";
%filename 'octant-+-.pdf';file-properties "XNPEU";}}}%
%BeginExpansion
\begin{figure}
[ptb]
\begin{center}
\includegraphics[
natheight=4.006700in,
natwidth=5.412900in,
height=2.591in,
width=3.4904in
]%
{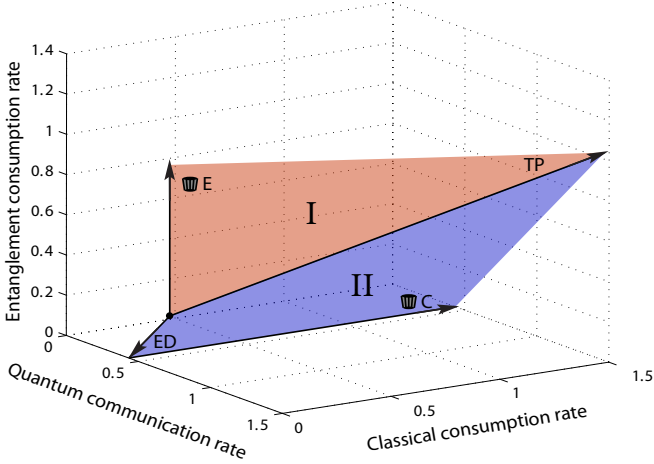}%
\caption{(Color online) The above figure displays the capacity region in the
octant $\left(  -,+,-\right)  $, that corresponds to classical- and
entanglement-assisted quantum communication. Plane I corresponds to the bound
in (\ref{eq:-+-bound2}), and plane II\ corresponds to the bound in
(\ref{eq:-+-bound1}). There cannot be any points outside plane II. If there
were, we could combine such points with entanglement distribution to beat the
bound on the quantum communication capacity. There cannot be any points inside
plane II but outside plane I. If there were, we could combine such points with
super-dense coding to beat the entanglement-assisted quantum capacity bound.
Planes I and II intersect at the line of teleportation. We can achieve all
points inside bounds (\ref{eq:-+-bound1}) and (\ref{eq:-+-bound2}) simply by
combining the father protocol (the black dot) with teleportation and
entanglement distribution.}%
\label{fig:octant-+-}%
\end{center}
\end{figure}
%EndExpansion

\begin{remark}
The two bounds in (\ref{eq:-+-bound1})\ and (\ref{eq:-+-bound2}) are the same
as two bounds that apply to the octant $O^{++-}$\ (corresponding to
entanglement-assisted communication of classical and quantum information
\cite{HW08GFP}). Thus, these bounds are extensions of the same bounds from the
$O^{++-}$ octant.
\end{remark}

\subsubsection{Octants That Consume Quantum Communication}

We now consider the four octants with corresponding protocols that consume
quantum communication, i.e., octants of the form $\left(  \pm,-,\pm\right)  $.
The proof of one of the octants is trivial. The proofs of two of the remaining
three octants are similar to the proofs of the entanglement-consuming octants
from the static case. The similarity holds because these octants consume the
noiseless version of the noisy dynamic resource. The proof of the other octant
exploits information theoretic arguments.

$\boldsymbol{(-,-,-)}$. $\widetilde{\mathcal{C}}_{\text{DD}}\left(
\mathcal{N}\right)  $ completely contains this octant.

$\boldsymbol{(+,-,+)}$. We use a trick similar to the $(+,-,-)$ octant for the
static case. Let $\text{id}^{\otimes|Q|}$ denote a noiseless qubit channel of
size $|Q|$ qubits. We need the following additivity lemma.

\begin{lemma}
\label{lem:additivity-+0+}The following inclusion holds%
\[
\mathcal{C}_{\text{DD}}^{+0+}(\mathcal{N}\otimes\text{id}^{\otimes
|Q|})\subseteq\widetilde{\mathcal{C}}_{\text{DD}}^{+0+}(\mathcal{N}%
)+\widetilde{\mathcal{C}}_{\text{DD}}^{+0+}(\text{id}^{\otimes|Q|}).
\]

\end{lemma}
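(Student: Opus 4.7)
The plan is to mirror the proof of Lemma~\ref{lem:additivity-+-0}, with entanglement distribution replacing super-dense coding as the unit-protocol bijection. I first introduce the linear map
\[
f:\widetilde{\mathcal{C}}_{\text{DD}}^{++0}(\mathcal{N})\to\widetilde{\mathcal{C}}_{\text{DD}}^{+0+}(\mathcal{N}),\qquad f(C,Q,0)=(C,0,Q),
\]
induced by applying entanglement distribution to convert every qubit of quantum communication into one ebit. Comparing (\ref{CEF_CQ}) and (\ref{CEF_CE}) shows immediately that $f$ is a linear bijection between the CEQC and CEEG achievable rate regions: both regions are the intersection of $\widetilde{\mathcal{C}}_{\text{CEF}}(\mathcal{N})+L_{\text{ED}}$ with adjacent quadrants that differ by a unit step along $L_{\text{ED}}$. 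Combined with the Devetak--Shor capacity theorems (\ref{eq:class-enh-q-comm-capacity}) and (\ref{DDp0pCR}), $f$ also acts as a bijection between the full capacity regions $\mathcal{C}_{\text{DD}}^{++0}(\cdot)$ and $\mathcal{C}_{\text{DD}}^{+0+}(\cdot)$.

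The central step is to prove the analog of the static ``translation'' identity, namely an additivity inclusion in the CEQC quadrant,
\[
\mathcal{C}_{\text{DD}}^{++0}(\mathcal{N}\otimes\text{id}^{\otimes|Q|})\subseteq\mathcal{C}_{\text{DD}}^{++0}(\mathcal{N})+\mathcal{C}_{\text{DD}}^{++0}(\text{id}^{\otimes|Q|}).
\]
Note that, in contrast to the static case where $\Phi^{|E|}$ can contribute only ebits within $Q^{0-+}$ (so the right-hand side collapses to the pure translation $\mathcal{C}_{\text{DS}}^{0-+}(\rho)+(0,0,E)$), the noiseless qubit channels in $Q^{++0}$ can be allocated freely between classical and quantum communication; hence one needs the full Minkowski sum on the right and not merely a translation by $(0,|Q|,0)$.

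Granting this additivity, the lemma then follows by applying the linear bijection $f$, which commutes with set addition:
\begin{align*}
\mathcal{C}_{\text{DD}}^{+0+}(\mathcal{N}\otimes\text{id}^{\otimes|Q|})
&=f\bigl(\mathcal{C}_{\text{DD}}^{++0}(\mathcal{N}\otimes\text{id}^{\otimes|Q|})\bigr)\\
&\subseteq f\bigl(\mathcal{C}_{\text{DD}}^{++0}(\mathcal{N})\bigr)+f\bigl(\mathcal{C}_{\text{DD}}^{++0}(\text{id}^{\otimes|Q|})\bigr)\\
&=\widetilde{\mathcal{C}}_{\text{DD}}^{+0+}(\mathcal{N})+\widetilde{\mathcal{C}}_{\text{DD}}^{+0+}(\text{id}^{\otimes|Q|}),
\end{align*}
which is the claim.

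The hard part is the additivity inclusion itself. My approach would be to argue directly on the regularized Devetak--Shor formula: for a product input $\sigma_1\otimes\sigma_2$ to $\mathcal{N}^{\otimes k}\otimes\text{id}^{\otimes k|Q|}$, both $I(X;B)$ and $I(A\rangle BX)$ split additively across the two tensor factors, and since the noiseless factor has trivial environment, its contribution to the achievable pair lies in $\mathcal{C}_{\text{DD}}^{++0}(\text{id}^{\otimes|Q|})$. The more delicate task is to rule out potential gains from non-product inputs, but because the noiseless qubit channel carries no exploitable noise, a standard ``stripping off'' argument on the noiseless factor (analogous to how forward classical communication does not boost the quantum capacity) should reduce the problem to the noisy channel $\mathcal{N}$ alone and thereby supply the required converse inclusion.
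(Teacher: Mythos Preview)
Your overall structure---the entanglement-distribution bijection $f$ between $\widetilde{\mathcal{C}}_{\text{DD}}^{++0}$ and $\widetilde{\mathcal{C}}_{\text{DD}}^{+0+}$, followed by pushing an additivity statement in the $++0$ quadrant through $f$---matches the paper's exactly. The divergence is at what you call ``the hard part.'' You claim that, unlike the static case, the translation identity fails here and must be replaced by a direct information-theoretic additivity argument on the Devetak--Shor region. The paper disagrees: it asserts
\[
\mathcal{C}_{\text{DD}}^{++0}(\mathcal{N}\otimes\text{id}^{\otimes|Q|})=\mathcal{C}_{\text{DD}}^{++0}(\mathcal{N})+(0,|Q|,0)
\]
as a direct translation, justified operationally (tensoring with $\text{id}^{\otimes|Q|}$ is literally adjoining $|Q|$ free qubit channels, which shifts the region by $(0,|Q|,0)$), and then immediately weakens to the Minkowski sum via $(0,|Q|,0)\in\mathcal{C}_{\text{DD}}^{++0}(\text{id}^{\otimes|Q|})$. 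This is a verbatim replay of the static argument you cite.

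Your stated reason for rejecting the translation---that noiseless qubits ``can be allocated freely between classical and quantum communication''---is a red herring. That flexibility is already absorbed into $\mathcal{C}_{\text{DD}}^{++0}(\mathcal{N})$ itself, since any generated qubit can be downgraded to a cbit; the extra $|Q|$ qubits do not introduce a qualitatively new degree of freedom. In the static analogue the extra ebits \emph{also} had nontrivial uses (they feed the Minkowski-sum term $\mathcal{C}_{\text{DS}}^{0-+}(\Phi^{|E|})$), yet the translation step still carried the argument. So your proposed ``stripping off'' detour through product inputs to the regularized Devetak--Shor formula is unnecessary, and as written it is only a sketch: you have not said how to handle correlated inputs across the $\mathcal{N}$ and $\text{id}$ factors, which is precisely where such arguments usually require work. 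The paper sidesteps all of this with the one-line resource-framework translation.
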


\begin{IEEEproof}
Entanglement distribution induces a linear bijection $f:\widetilde
{\mathcal{C}}_{\text{DD}}^{++0}\left(  \mathcal{N}\right)  \rightarrow\widetilde
{\mathcal{C}}_{\text{DD}}^{+0+}\left(  \mathcal{N}\right)  $ between the
classically-enhanced quantum communication achievable region $\widetilde
{\mathcal{C}}_{\text{DD}}^{++0}$ and the classically-enhanced entanglement generation
achievable region $\widetilde{\mathcal{C}}_{\text{DD}}^{+0+}$. The linear bijection $f$
behaves as follows for every point $\left(  R,Q,0\right)  \in\widetilde
{\mathcal{C}}_{\text{DD}}^{++0}$:%
\[
f:\left(  R,Q,0\right)  \rightarrow\left(  R,0,Q\right)  .
\]
The following relation holds%
\begin{equation}
f(\widetilde{\mathcal{C}}_{\text{DD}}^{++0}\left(  \mathcal{N}\right)  )=\widetilde
{\mathcal{C}}_{\text{DD}}^{+0+}\left(  \mathcal{N}\right)  ,
\label{eq:ent-dist-CEQC-CEEG}%
\end{equation}
because applying entanglement distribution to the classically-enhanced quantum
communication resource inequality gives classically-enhanced entanglement
generation \cite{DS03}.
The inclusion $\mathcal{C}_{\text{DD}}^{++0}(\mathcal{N}\otimes$id$^{\otimes
|Q|})\subseteq f^{-1}(\widetilde{\mathcal{C}}_{\text{DD}}^{+0+}(\mathcal{N}%
)+\widetilde{\mathcal{C}}_{\text{DD}}^{+0+}($id$^{\otimes|Q|}))$ holds because%
\begin{align*}
&  \mathcal{C}_{\text{DD}}^{++0}(\mathcal{N}\otimes\text{id}^{\otimes|Q|})\\
&  =\mathcal{C}_{\text{DD}}^{++0}(\mathcal{N})+(0,Q,0)\\
&  \subseteq \mathcal{C}_{\text{DD}}^{++0}(\mathcal{N})+\mathcal{C}_{\text{DD}}^{++0}%
(\text{id}^{\otimes|Q|})\\
&  =\widetilde{\mathcal{C}}_{\text{DD}}^{++0}(\mathcal{N})+\widetilde{\mathcal{C}}_{\text{DD}%
}^{++0}(\text{id}^{\otimes|Q|})\\
&  =f^{-1}(\widetilde{\mathcal{C}}_{\text{DD}}^{+0+}(\mathcal{N}))+f^{-1}(\widetilde
{\mathcal{C}}_{\text{DD}}^{+0+}(\text{id}^{\otimes|Q|}))\\
&  =f^{-1}(\widetilde{\mathcal{C}}_{\text{DD}}^{+0+}(\mathcal{N})+\widetilde
{\mathcal{C}}_{\text{DD}}^{+0+}(\text{id}^{\otimes|Q|})).
\end{align*}
The first set equivalence follows because the capacity region of the noisy
channel $\mathcal{N}$ combined with a rate $Q$ noiseless qubit channel is
equivalent to a translation of the capacity region of the noisy channel
$\mathcal{N}$. The first inclusion follows because the capacity region of a
rate $Q$ noiseless qubit channel contains the rate triple $(0,Q,0)$. The
second set equivalence follows from the classically-enhanced quantum
communication capacity theorem in (\ref{eq:class-enh-q-comm-capacity}), the
third set equivalence from (\ref{eq:ent-dist-CEQC-CEEG}), and the fourth set
equivalence from linearity of the map $f$. The above inclusion implies the
following one:
\[
f(\mathcal{C}_{\text{DD}}^{++0}(\mathcal{N}\otimes\text{id}^{\otimes|Q|}))\subseteq
\widetilde{\mathcal{C}}_{\text{DD}}^{+0+}(\mathcal{N})+\widetilde{\mathcal{C}}_{\text{DD}}%
^{+0+}(\text{id}^{\otimes|Q|}).
\]
The lemma follows because%
\begin{align*}
f(\mathcal{C}_{\text{DD}}^{++0}(\mathcal{N}\otimes\text{id}^{\otimes|Q|}))  &
=f(\widetilde{\mathcal{C}}_{\text{DD}}^{++0}(\mathcal{N}\otimes\text{id}^{\otimes
|Q|}))\\
&  =\widetilde{\mathcal{C}}_{\text{DD}}^{+0+}(\mathcal{N}\otimes\text{id}^{\otimes
|Q|})\\
&  =\mathcal{C}_{\text{DD}}^{+0+}(\mathcal{N}\otimes\text{id}^{\otimes|Q|}),
\end{align*}
where we apply the relations in (\ref{eq:class-enh-q-comm-capacity}),
(\ref{eq:ent-dist-CEQC-CEEG}), and (\ref{DDp0pCR}).
\end{IEEEproof}

We now begin the converse proof of this octant. Observe that
\begin{equation}
\widetilde{\mathcal{C}}_{\text{DD}}^{+0+}(\text{id}^{\otimes|Q|}%
)=\widetilde{\mathcal{C}}_{\text{U}}^{+Q+}. \label{eq:+0+qubit=+Q+U}%
\end{equation}
Hence, for all $Q\leq0$,
\begin{equation}
\mathcal{C}_{\text{DD}}^{+Q+}(\mathcal{N})=\mathcal{C}_{\text{DD}}%
^{+0+}(\mathcal{N}\otimes\text{id}^{\otimes|Q|})\subseteq\widetilde
{\mathcal{C}}_{\text{DD}}^{+0+}(\mathcal{N})+\widetilde{\mathcal{C}}%
_{\text{U}}^{+Q+}, \label{DDpnp1}%
\end{equation}
where we apply Lemma~\ref{lem:additivity-+0+} and (\ref{eq:+0+qubit=+Q+U}).
The inclusion $\mathcal{C}_{\text{DD}}^{+-+}(\mathcal{N})\subseteq
\widetilde{\mathcal{C}}_{\text{DD}}^{+-+}$ follows because%
\begin{align*}
\mathcal{C}_{\text{DD}}^{+-+}(\mathcal{N})  &  =\bigcup_{Q\leq0}%
\mathcal{C}_{\text{DD}}^{+Q+}(\mathcal{N})\\
&  \subseteq\bigcup_{Q\leq0}(\widetilde{\mathcal{C}}_{\text{DD}}%
^{+0+}(\mathcal{N})+\widetilde{\mathcal{C}}_{\text{U}}^{+Q+})\\
&  =(\widetilde{\mathcal{C}}_{\text{DD}}^{+0+}(\mathcal{N})+\widetilde
{\mathcal{C}}_{\text{U}})\cap O^{+-+}\\
&  \subseteq(\widetilde{\mathcal{C}}_{\text{CEF}}(\mathcal{N})+\widetilde
{\mathcal{C}}_{\text{U}})\cap O^{+-+}\\
&  =\widetilde{\mathcal{C}}_{\text{DD}}^{+-+}(\mathcal{N}).
\end{align*}
The first and equivalence hold by definition, the first inclusion follows from
(\ref{DDpnp1}), the second set equivalence follows because $\bigcup_{Q\leq
0}\widetilde{\mathcal{C}}_{\text{U}}^{+Q+}=\widetilde{\mathcal{C}}_{\text{U}%
}\cap O^{+-+}$, and the second inclusion follows because combining the
classically-enhanced father region with entanglement distribution gives the
region for classically-enhanced entanglement generation. The above inclusion
$\mathcal{C}_{\text{DD}}^{+-+}(\mathcal{N})\subseteq\widetilde{\mathcal{C}%
}_{\text{DD}}^{+-+}$ is the statement of the converse theorem for this octant.

The proof of the $\left(  -,-,+\right)  $ octant is similar to the proof of
the above octant and appears in Appendix~\ref{sec:--+dynamic}.

$\boldsymbol{(+,-,-)}$. The proof exploits information-theoretic arguments
that show the following bounds apply to all rate triples $\left(
C,-\left\vert Q\right\vert ,-\left\vert E\right\vert \right)  $\ in this
octant:%
\begin{align}
C  &  \leq I\left(  AX;B\right)  +2\left\vert Q\right\vert ,\label{eq:+--1}\\
C  &  \leq I\left(  X;B\right)  +I\left(  A\rangle BX\right)  +\left\vert
Q\right\vert +\left\vert E\right\vert . \label{eq:+--2}%
\end{align}%
%TCIMACRO{\FRAME{ftbpFU}{3.5405in}{1.9683in}{0pt}{\Qcb{(Color online) The most
%general protocol for classical communication with the help of a noisy channel,
%noiseless entanglement, and noiseless quantum communication. Alice wishes to
%communicate a classical message $M$ to Bob. She shares entanglement with Bob
%in the form of maximally entangled states. Her half of the entanglement is in
%system $T_{A}$ and Bob's half is in the system $T_{B}$. Alice performs some
%CPTP\ encoding map $\QTR{cal}{E}$ on her classical message and her half of the
%entanglement. The output of this encoder is a quantum state in some register
%$A_{1}$ and a large number of systems $A^{\prime n}$ that are input to the
%channel. She transmits $A^{\prime n}$ through the noisy channel and the system
%$A_{1}$ over noiseless qubit channels. Bob receives the outputs $B^{n}$ of the
%channel and the system $A_{1}$ from the noiseless qubit channels. He combines
%these with his half of the entanglement and decodes the classical message that
%Alice transmits.}}{\Qlb{fig:EQA-class}}{eqa-class-operation.pdf}%
%{\special{ language "Scientific Word";  type "GRAPHIC";
%maintain-aspect-ratio TRUE;  display "USEDEF";  valid_file "F";
%width 3.5405in;  height 1.9683in;  depth 0pt;  original-width 8.2469in;
%original-height 5.0998in;  cropleft "0";  croptop "1";  cropright "1";
%cropbottom "0";  filename '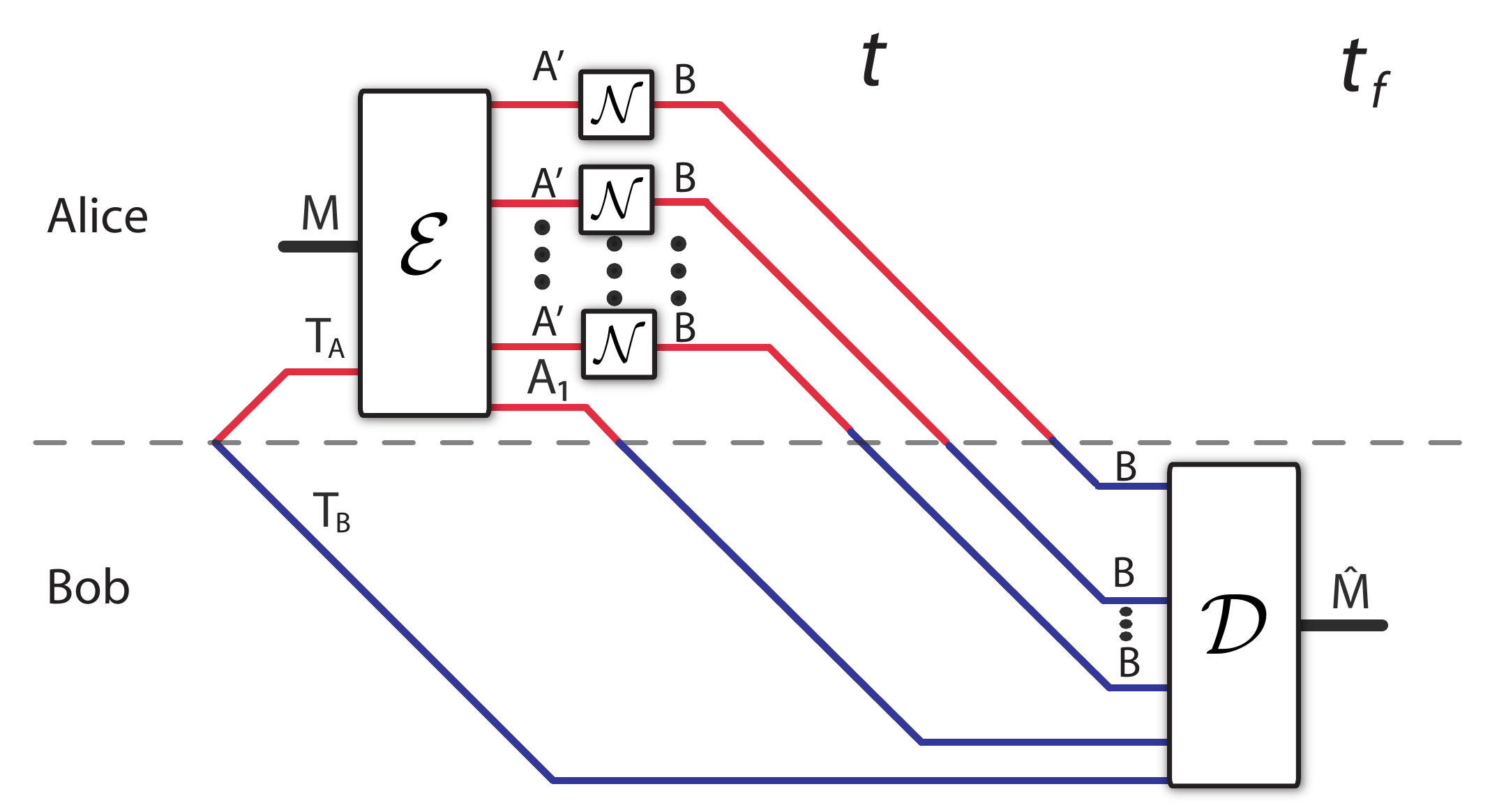';file-properties "XNPEU";}%
%}}%
%BeginExpansion
\begin{figure}
[ptb]
\begin{center}
\includegraphics[
natheight=5.099800in,
natwidth=8.246900in,
height=1.9683in,
width=3.5405in
]%
{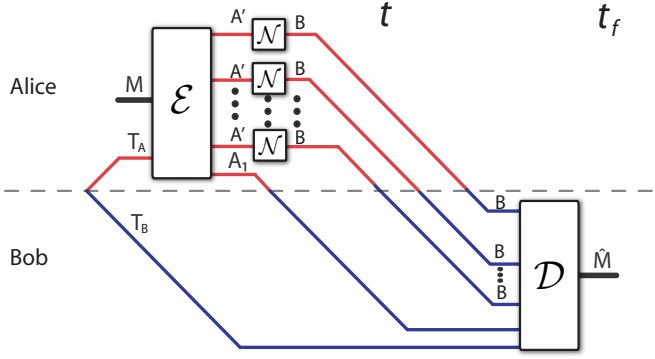}%
\caption{(Color online) The most general protocol for classical communication
with the help of a noisy channel, noiseless entanglement, and noiseless
quantum communication. Alice wishes to communicate a classical message $M$ to
Bob. She shares entanglement with Bob in the form of maximally entangled
states. Her half of the entanglement is in system $T_{A}$ and Bob's half is in
the system $T_{B}$. Alice performs some CPTP\ encoding map $\mathcal{E}$ on
her classical message and her half of the entanglement. The output of this
encoder is a quantum state in some register $A_{1}$ and a large number of
systems $A^{\prime n}$ that are input to the channel. She transmits $A^{\prime
n}$ through the noisy channel and the system $A_{1}$ over noiseless qubit
channels. Bob receives the outputs $B^{n}$ of the channel and the system
$A_{1}$ from the noiseless qubit channels. He combines these with his half of
the entanglement and decodes the classical message that Alice transmits.}%
\label{fig:EQA-class}%
\end{center}
\end{figure}
%EndExpansion

Figure~\ref{fig:EQA-class}\ depicts the most general protocol that generates
classical communication with the help of a noisy channel, noiseless
entanglement, and noiseless quantum communication. The state at the beginning
of the protocol is as follows:%
\[
\omega^{MM^{\prime}T_{A}T_{B}}\equiv\frac{1}{M}\sum_{m}\left\vert
m\right\rangle \left\langle m\right\vert ^{M}\otimes\left\vert m\right\rangle
\left\langle m\right\vert ^{M^{\prime}}\otimes\Phi^{T_{A}T_{B}}.
\]
Alice performs an encoding according to some CPTP\ map $\mathcal{E}%
^{M^{\prime}T_{A}\rightarrow A^{\prime n}A_{1}}$ that takes the classical
system $M^{\prime}$ of size $2^{nC}$\ and the quantum system $T_{A}$ as input.
The map $\mathcal{E}^{M^{\prime}T_{A}\rightarrow A^{\prime n}A_{1}}$ outputs
some systems $A^{\prime n}$ and a quantum state in a register $A_{1}$. The
Schmidt rank of the shared entanglement is $2^{nE}$ and the size of the
quantum system is $2^{nQ}$. The state after the encoder is as follows:%
\[
\omega^{MA^{\prime n}A_{1}T_{B}}\equiv\mathcal{E}^{M^{\prime}T_{A}\rightarrow
AA_{1}}(\omega^{MM^{\prime}T_{A}T_{B}}).
\]
She then transmits the above state through many uses of the noisy channel
$\mathcal{N}$, producing the following state at time $t$:%
\begin{equation}
\omega^{MB^{n}A_{1}T_{B}}\equiv\mathcal{N}^{A^{\prime n}\rightarrow B^{n}%
}(\omega^{MA^{\prime n}A_{1}T_{B}}). \label{eq:EQA-class-state}%
\end{equation}
Note that the state at this point is near to being a state of the form in
(\ref{DD_sigma}), if we define $A\equiv A_{1}T_{B}$ (more on this later).
Finally, Bob combines all of his systems and passes them through a
CPTP\ decoding map $\mathcal{D}^{B^{n}A_{1}T_{B}\rightarrow\hat{M}}$ that
produces his estimate $\hat{M}$\ of the classical message $M$. A protocol is
$\epsilon$-good if the probability of decoding the classical message
incorrectly is low:%
\begin{equation}
\Pr\left\{  \hat{M}\neq M\right\}  \leq\epsilon.
\label{eq:good-EQA-class-code}%
\end{equation}

We first prove the bound in (\ref{eq:+--1}). Consider the following chain of inequalities:%

\begin{align*}
nC &  =H\left(  M\right)  \\
&  =I(M;\hat{M})+H(M|\hat{M})\\
&  \leq I(M;\hat{M})+n\delta^{\prime}\\
&  \leq I\left(  M;A_{1}B^{n}T_{B}\right)  _{\omega}+n\delta^{\prime}\\
&  =I\left(  M;B^{n}|A_{1}T_{B}\right)  _{\omega}+I\left(  M;A_{1}%
T_{B}\right)  _{\omega}+n\delta^{\prime}%
\end{align*}%
\begin{align*}
&  =I\left(  A_{1}T_{B}M;B^{n}\right)  _{\omega}-I\left(  A_{1}T_{B}%
;B^{n}\right)  _{\omega}\\
&  \ \ \ \ \ \ \ +I\left(  M;A_{1}T_{B}\right)  _{\omega}+n\delta^{\prime}\\
&  \leq I\left(  A_{1}T_{B}M;B^{n}\right)  _{\omega}+I\left(  M;A_{1}%
T_{B}\right)  _{\omega}+n\delta^{\prime}\\
&  =I\left(  A_{1}T_{B}M;B^{n}\right)  _{\omega}+I\left(  M;T_{B}\right)
_{\omega}\\
&  \ \ \ \ \ \ \ +I\left(  A_{1};T_{B}|M\right)  _{\omega}+n\delta^{\prime}\\
&  =I\left(  A_{1}T_{B}M;B^{n}\right)  _{\omega}+H\left(  A_{1}|M\right)
_{\omega}\\
&  \ \ \ \ \ \ \ -H\left(  A_{1}|T_{B}M\right)  _{\omega}+n\delta^{\prime}\\
&  \leq I\left(  AM;B^{n}\right)  _{\omega}+n2\left\vert Q\right\vert
+n\delta^{\prime}.
\end{align*}
The first equality follows because the classical message $M$\ is uniform, and
the second equality follows by a straightforward entropic manipulation. The
first inequality follows from the condition in (\ref{eq:good-EQA-class-code}%
)\ and Fano's inequality \cite{CT91}\ with $\delta^{\prime}\equiv\epsilon
C+H_{2}\left(  \epsilon\right)  /n$. The second inequality follows from the
quantum data processing inequality \cite{NC00}. The third and fourth
equalities follow by expanding the quantum mutual information with the chain
rule. The third inequality follows because $I\left(  A_{1}T_{B};B^{n}\right)
_{\omega}\geq0$. The fifth equality follows by expanding the mutual
information $I\left(  M;A_{1}T_{B}\right)  _{\omega}$ with the chain rule. The
last equality follows because $I\left(  M;T_{B}\right)  _{\omega}=0$ for this
protocol and $I\left(  A_{1};T_{B}|M\right)  _{\omega}=H\left(  A_{1}%
|M\right)  _{\omega}-H\left(  A_{1}|T_{B}M\right)  _{\omega}$. The final
inequality follows from the definition $A\equiv A_{1}T_{B}$ and because
$H\left(  A_{1}|M\right)  _{\omega}\leq nQ$ and $\left\vert H\left(
A_{1}|T_{B}M\right)  _{\omega}\right\vert \leq nQ$.

We now prove the bound in (\ref{eq:+--2}). Consider the following chain of
inequalities:%
\begin{align*}
nC  &  \leq I\left(  A_{1}T_{B}M;B^{n}\right)  _{\omega}+I\left(  A_{1}%
;T_{B}|M\right)  _{\omega}+n\delta^{\prime}\\
&  =I\left(  M;B^{n}\right)  _{\omega}+I\left(  A_{1}T_{B};B^{n}|M\right)
_{\omega}+H\left(  A_{1}|M\right)  _{\omega}\\
&  \ \ \ \ \ +H\left(  T_{B}|M\right)  _{\omega}-H\left(  A_{1}T_{B}|M\right)
_{\omega}+n\delta^{\prime}\\
&  =I\left(  M;B^{n}\right)  _{\omega}+I\left(  A_{1}T_{B}\rangle
B^{n}M\right)  _{\omega}+H\left(  A_{1}|M\right)  _{\omega}\\
&  \ \ \ \ \ +H\left(  T_{B}|M\right)  _{\omega}+n\delta^{\prime}\\
&  \leq I\left(  M;B^{n}\right)  _{\omega}+I\left(  A\rangle B^{n}M\right)
_{\omega}+n\left\vert Q\right\vert +n\left\vert E\right\vert +n\delta^{\prime
}.
\end{align*}
The first inequality follows from the fifth equality above and the fact that
$I\left(  M;T_{B}\right)  _{\omega}=0$ for this protocol. The first equality
follows by applying the chain rule for quantum mutual information to $I\left(
A_{1}T_{B}M;B^{n}\right)  _{\omega}$ and by expanding the mutual information
$I\left(  A_{1};T_{B}|M\right)  _{\omega}$. The second equality follows by
noting that%
\[
I\left(  A_{1}T_{B};B^{n}|M\right)  _{\omega}=I\left(  A_{1}T_{B}\rangle
B^{n}M\right)  _{\omega}+H\left(  A_{1}T_{B}|M\right)  _{\omega}.
\]
The last inequality follows from the definition $A\equiv A_{1}T_{B}$ and the
fact that the entropies $H\left(  A_{1}|M\right)  _{\omega}$ and $H\left(
T_{B}|M\right)  _{\omega}$ are less than the logarithm of the dimensions of
the respective systems $A_{1}$ and $T_{B}$.

We should make some final statements concerning this proof. The state in
(\ref{eq:EQA-class-state}) as we have defined it is not quite a state of the
form in (\ref{DD_sigma}) because the encoder could be a general CPTP\ map.
Though, a few arguments demonstrate that a collection of isometric maps works
just as well as a general CPTP\ map, and it then follows that the state in
(\ref{eq:EQA-class-state}) is of the form in (\ref{DD_sigma}). First, consider
that a general CPTP\ map applied to a classical-quantum state of the form in
(\ref{eq:EQA-class-state}) merely acts as a collection of CPTP\ maps
$\{\mathcal{E}_{m}^{T_{A}\rightarrow AA_{1}}\}$ indexed by the classical
message $m$ (see Section~2.3.7 of Ref.~\cite{Yard05a}). Each of these
CPTP\ maps has an isometric extension to some purifying system $E^{\prime}$.
Alice can then perform a complete von Neumann measurement of this system,
producing a classical system $Y$. By the same arguments as in Theorem~7.8 of
Ref.~\cite{DHW05RI} and Appendix~E of Ref.~\cite{HW08GFP}, the protocol can
only improve under this measurement, so that it is sufficient to consider
isometric encoders. Thus, the state in (\ref{eq:EQA-class-state}) is a state
of the form in (\ref{DD_sigma}) and this concludes the converse proof for this octant.

\subsection{Discussion}

\label{sec:EA-insights}The proof of the octant $O^{-+-}$ is one of the more
interesting octants in the direct dynamic trade-off. Its proof directly
answers the following question concerning the use of entanglement-assisted
quantum error correction
\cite{BDH06,BDH06IEEE,HBD07,arx2007wildeEA,arx2007wildeEAQCC,HBD08QLDPC,arx2008wildeUQCC,arx2008wildeGEAQCC}%
:

\begin{quote}
\textit{Why even use entanglement-assisted coding if teleportation is a way to
consume entanglement for the purpose of transmitting quantum information}?
\end{quote}

The proof of the octant $O^{-+-}$ gives a practical answer to the above
question by showing exactly how entanglement-assisted quantum error correction
is useful. We illustrate our arguments by considering the specific case of a
dephasing qubit channel with dephasing parameter $p=0.2$. The quantum capacity
of this channel is around 0.5 qubits per channel use.

First, let us suppose that classical communication is a free resource. Then we
can project the boundary of the octant $O^{-+-}$ and the line of teleportation
into the quadrant $Q^{0+-}$ to compare entanglement-assisted quantum coding to
teleportation. Figure~\ref{fig:EA-vs-TP-both-plots}(a)\ illustrates this
projection. From this figure, we observe that the superior strategy is to
combine quantum communication (LSD) with teleportation or to combine the
classically-enhanced father protocol (CEF) with teleportation. If we do not
take advantage of coding quantum information over the channel, we have to
consume more entanglement in order to achieve the same amount of quantum
communication. The figure demonstrates that a na\"{\i}ve strategy employing
teleportation alone must consume around 0.5 more ebits per channel use for the
same amount of quantum communication that one can obtain by combining LSD\ or
CEF\ with teleportation---this result holds for the $0.2$ qubit dephasing
channel. In general, the extra amount of entanglement consumption necessary is
equal to the quantum capacity of the channel.%
%TCIMACRO{\FRAME{ftbpFU}{6.5518in}{2.8357in}{0pt}{\Qcb{(Color online) (a) The
%figure on the left shows the projection of the octant $O^{-+-}$ into the
%quadrant $Q^{0+-}$. It shows the projection of the line of teleportation in
%this plane and the projection of the line of teleportation starting at the
%classically-enhanced father achievable point. The plot demonstrates that a
%na\"{\i}ve strategy that only teleports and does not take advantage of channel
%coding requires a higher rate of entanglement consumption in order to achieve
%a given rate of quantum communication. The better strategy is to employ
%channel coding (either quantum coding or entanglement-assisted quantum
%coding). (b) The figure on the right shows the full capacity region of a
%strategy that exploits channel coding and the full capacity region of one that
%does not take advantage of channel coding. The latter capacity region is
%always strictly inside the former whenever the channel has a positive
%entanglement-assisted quantum capacity, demonstrating that it is more useful
%to take advantage of channel coding.}}{\Qlb{fig:EA-vs-TP-both-plots}%
%}{ea-vs-tp-both-plots.pdf}{\special{ language "Scientific Word";
%type "GRAPHIC";  maintain-aspect-ratio TRUE;  display "USEDEF";
%valid_file "F";  width 6.5518in;  height 2.8357in;  depth 0pt;
%original-width 11.1665in;  original-height 4.8066in;  cropleft "0";
%croptop "1";  cropright "1";  cropbottom "0";
%filename '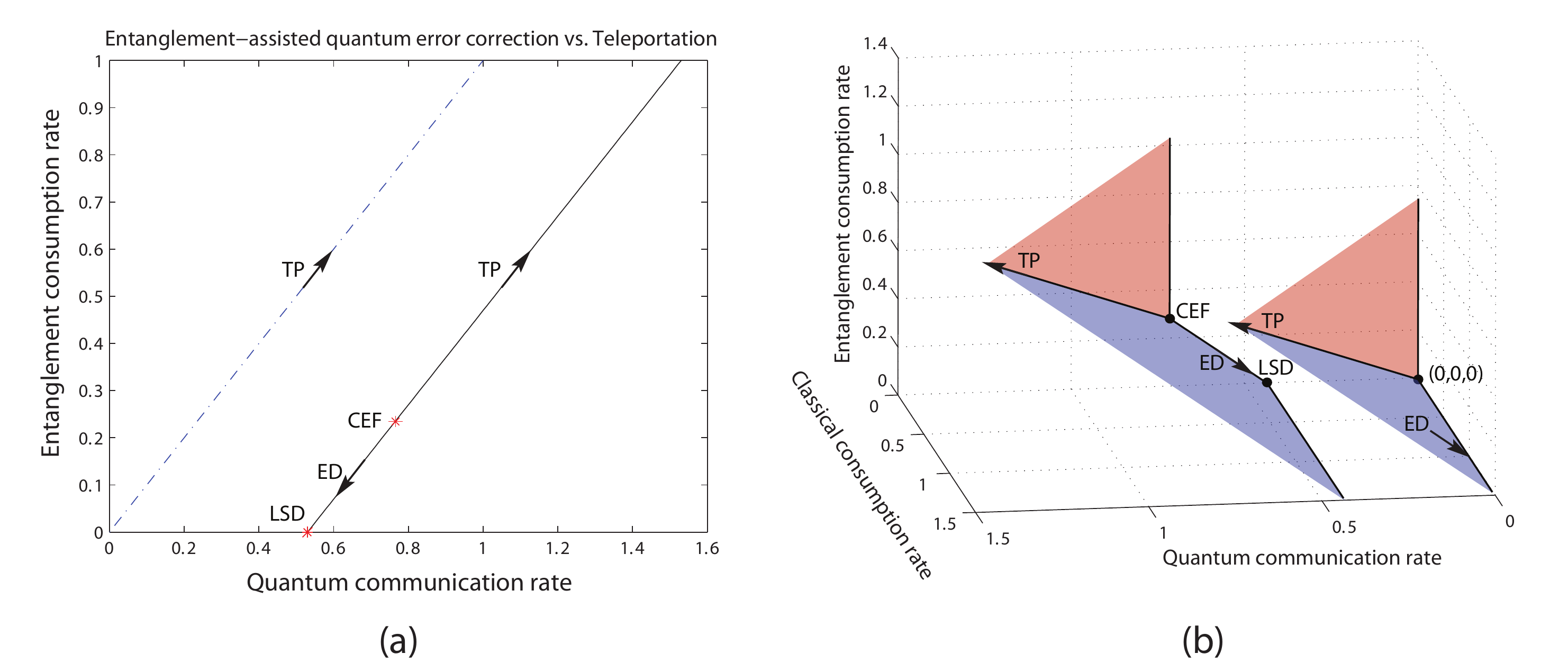';file-properties "XNPEU";}}}%
%BeginExpansion
\begin{figure*}
[ptb]
\begin{center}
\includegraphics[
natheight=4.806600in,
natwidth=11.166500in,
height=2.8357in,
width=6.5518in
]%
{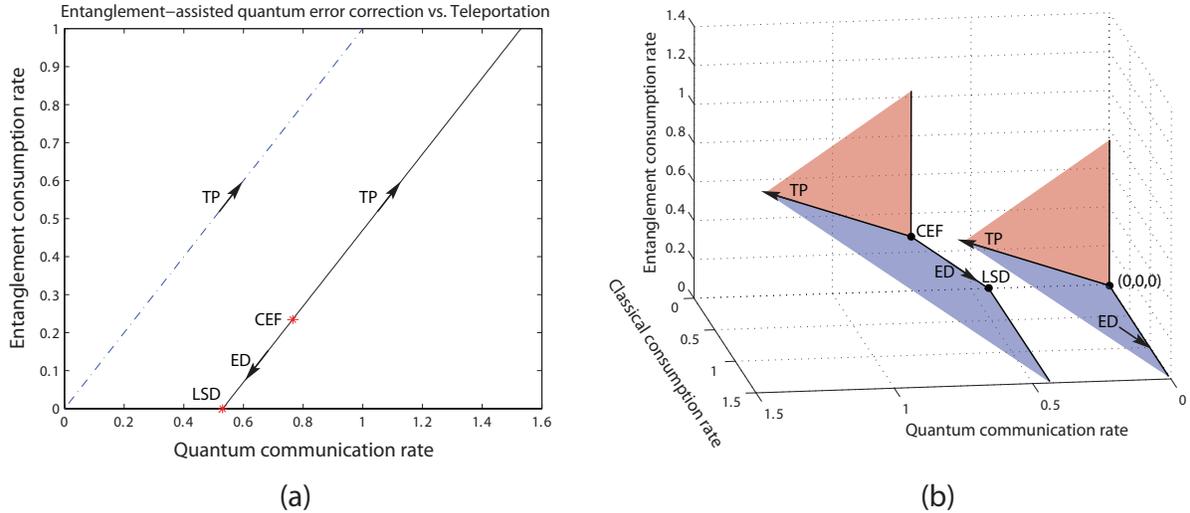}%
\caption{(Color online) (a) The figure on the left shows the projection of the
octant $O^{-+-}$ into the quadrant $Q^{0+-}$. It shows the projection of the
line of teleportation in this plane and the projection of the line of
teleportation starting at the classically-enhanced father achievable point.
The plot demonstrates that a na\"{\i}ve strategy that only teleports and does
not take advantage of channel coding requires a higher rate of entanglement
consumption in order to achieve a given rate of quantum communication. The
better strategy is to employ channel coding (either quantum coding or
entanglement-assisted quantum coding). (b) The figure on the right shows the
full capacity region of a strategy that exploits channel coding and the full
capacity region of one that does not take advantage of channel coding. The
latter capacity region is always strictly inside the former whenever the
channel has a positive entanglement-assisted quantum capacity, demonstrating
that it is more useful to take advantage of channel coding.}%
\label{fig:EA-vs-TP-both-plots}%
\end{center}
\end{figure*}
%EndExpansion

Next, let us consider the case when classical communication is not free. Then
we must consider the full achievable rate region in the octant $O^{-+-}$.
Figure~\ref{fig:EA-vs-TP-both-plots}(b) depicts this scenario by showing both
the achievable rate region that combines the classically-enhanced father
protocol with teleportation and entanglement distribution and the achievable
rate region that combines teleportation and entanglement distribution only. We
observe that the second achievable rate region is strictly inside the first
one whenever the channel has a positive entanglement-assisted quantum
capacity. Thus, the best strategy is not merely to teleport, but it is to
combine channel coding (the classically-enhanced father protocol) with the two
unit protocols of teleportation and entanglement distribution. Furthermore,
with channel coding, one consumes less entanglement or classical communication
in order to achieve a given rate of quantum communication (this improvement
can be dramatic if the entanglement-assisted quantum capacity of the channel
is large).

Sometimes, quantum Shannon theory gives insight into practical quantum error
correction schemes. Devetak's proof of the quantum channel coding theorem
shows that codes with a CSS-like structure are good enough for achieving
capacity \cite{Devetak03}. Another case occurs with the classically-enhanced
father protocol \cite{HW08GFP},\ regarding the structure of
classically-enhanced entanglement-assisted quantum codes, and yet another
occurs in multiple-access quantum coding \cite{YHD05MQAC}, regarding the
structure of multiple-access quantum codes. This octant proves to be another
case where quantum Shannon theory gives some interesting guidelines for the
optimal strategy of a quantum error correction scheme.

\section{Single-Letter Examples}

In this section, we give an example of a shared state for which the static
region single-letterizes and an example of a noisy channel for which the
dynamic region single-letterizes. Single-letterization implies that we do not
have to consider many copies or many channel uses to compute the respective
capacity regions---we only have to consider one copy or channel use, implying
that the computation of the region is a tractable problem.

\subsection{Static Case}

We consider an \textquotedblleft erased state\textquotedblright\ as our
example. We first show that the ($-,-,+$) octant single-letterizes. Then the
full static region single-letterizes by our previous arguments above.

Suppose that the state that Alice and Bob share is the following erased
version of a maximally entangled Bell state:%
\[
\rho^{AB}\equiv\left(  1-\epsilon\right)  \left(  \Phi^{+}\right)
^{AB}+\epsilon\pi^{A}\otimes\left\vert e\right\rangle \left\langle
e\right\vert ^{B},
\]
where%
\[
\left\vert \Phi^{+}\right\rangle ^{AB}\equiv\frac{1}{\sqrt{2}}\left(
\left\vert 00\right\rangle ^{AB}+\left\vert 11\right\rangle ^{AB}\right)  .
\]
This state arises from sending Alice's share of the state $\left\vert \Phi
^{+}\right\rangle ^{AB}$ through an erasure channel that acts as%
\[
\sigma\rightarrow\left(  1-\epsilon\right)  \sigma+\epsilon\left\vert
e\right\rangle \left\langle e\right\vert .
\]
In what follows, all entropies are with respect to the state $\rho$:%
\begin{align*}
H\left(  A\right)   &  =1,\\
H\left(  B\right)   &  =1-\epsilon+H_{2}\left(  \epsilon\right)  ,\\
H\left(  AB\right)   &  =H\left(  E\right)  =\epsilon+H_{2}\left(
\epsilon\right)  .
\end{align*}
Then the following information quantities appearing in the mother
protocol~\cite{DHW05RI} and entanglement distillation~\cite{BDSW96} are as
follows:%
\begin{align*}
I\left(  A\rangle B\right)   &  =1-2\epsilon,\\
\frac{1}{2}I\left(  A;B\right)   &  =1-\epsilon,\\
\frac{1}{2}I\left(  A;E\right)   &  =\epsilon.
\end{align*}

Let us first consider the plane of classically-assisted entanglement
distillation. We can achieve the point $\left(  2\epsilon,0,1-2\epsilon
\right)  $ by the hashing protocol \cite{BDSW96}\ (the classical communication
rate required to achieve this distillation yield is $I\left(  A;E\right)
=2\epsilon$). The rate of entanglement distillation can be no higher than
$1-2\epsilon$, which one can actually prove via the quantum capacity theorem
(the maximally entangled state maximizes the coherent information and
classical communication does not increase the entanglement generation
capacity). Thus, we know that the bound $E\leq1-2\epsilon$ applies for all
$C\geq2\epsilon$ and $Q=0$. Now we should prove that time-sharing between the
origin and the point $\left(  2\epsilon,0,1-2\epsilon\right)  $ is an optimal strategy.

Consider a scheme of entanglement distillation for an erased state with
erasure parameter $\epsilon$. If each party has $n$ halves of the shared
states, then Bob shares $n(1-\epsilon)$ ebits with Alice and the environment
shares $n\epsilon$ ebits with Alice (for the case of large $n$). From these
$n(1-\epsilon)$ shared ebits, Alice and Bob can perform local operations and
forward classical communication to distill $n(1-2\epsilon)$ logical ebits, by
the entanglement distillation result for the erased state. This implies an
optimal \textquotedblleft decoding ratio\textquotedblright\ of $n(1-2\epsilon
)$ decoded ebits for the $n(1-\epsilon)$ physical ebits: $(1-2\epsilon
)/(1-\epsilon)$. Now let us consider some strategy for the erased state that
mixes between the forward classical communication rate of $2\epsilon$ and no
forward classical communication. Suppose that they can achieve the rate triple
$(\lambda2\epsilon,0,\lambda(1-2\epsilon)+\delta)$ where $\delta$ is some
small positive number and $0\leq\lambda\leq1$ (so that this rate triple
represents any point that beats the time-sharing limit). Now if they share $n$
erased states, Alice and Bob share $n(1-\epsilon)$ ebits and the environment
again shares $n\epsilon$ of them with Alice. But this time, Alice and Bob are
not allowed to perform forward classical communication on $n\left(
1-\lambda\right)  (1-\epsilon)$ of them (or a subspace of them of this size).
Thus, these states are not available for decoding ebits. This leaves
$n(1-\epsilon)-n\left(  1-\lambda\right)  (1-\epsilon)=n\lambda(1-\epsilon)$
qubits available for decoding the ebits. If Alice and Bob could decode
$n(\lambda(1-2\epsilon)+\delta)$ logical ebits by local operations and forward
classical communication, this would contradict the optimality of the above
\textquotedblleft decoding ratio\textquotedblright\ because $n(\lambda
(1-2\epsilon)+\delta)/(n\lambda(1-\epsilon))=(1-2\epsilon)/(1-\epsilon
)+\delta/\lambda(1-\epsilon)$ is greater than the optimal decoding ratio
$(1-2\epsilon)/(1-\epsilon)$. Therefore, they must only be able to decode
$n(1-\lambda)(1-2\epsilon)$ logical ebits. This proves that time-sharing
between entanglement distillation and the origin is an optimal strategy for
the erased state.

The above argument then gives that the following region in the $-0+$ plane is
optimal (note that we keep the convention that the rate $R$ is positive even
though the protocol consumes it):%
\begin{align*}
E &  \leq1-2\epsilon\text{ if }C\geq2\epsilon,\\
E &  \leq\frac{C}{2\epsilon}\left(  1-2\epsilon\right)  \text{ if }%
C<2\epsilon.
\end{align*}
We can then obtain a bound on the whole ($-,-,+$) octant by extending this
region by \textquotedblleft inverse\textquotedblright\ teleportation. That is,
the above region, combined with inverse teleportation, gives a bound on all
points in the grandmother octant. Were it not so, then one could combine
points outside this bound with teleportation and achieve points outside the
above region, contradicting the optimality of the region.

For achievability, we can achieve all points in the ($-,-,+$) static octant of
the erased state by combining the mother point $\left(  \epsilon
,0,1-\epsilon\right)  $ with teleportation $\left(  2Q,-Q,-Q\right)  $ and the
wasting of classical communication and quantum communication.
Figure~\ref{fig:grandmother}\ plots this region for the case of an erased
state with parameter $\epsilon=1/4$. It follows that the full region
single-letterizes for the case of an erased state, by our characterization of
the static region in Theorem~\ref{thm:direct-static}.%

%TCIMACRO{\FRAME{ftbpFU}{3.4411in}{2.303in}{0pt}{\Qcb{(Color online) Plot of
%the ($-,-,+$) octant of the static capacity region for the case of an
%\textquotedblleft erased state.\textquotedblright\ The region does not exhibit
%a trade-off and time-sharing between the mother protocol (the point labeled
%\textquotedblleft Mother\textquotedblright), entanglement distillation (the
%point labeled \textquotedblleft Edisti\textquotedblright), and the origin is
%the optimal strategy.}}{\Qlb{fig:grandmother}}{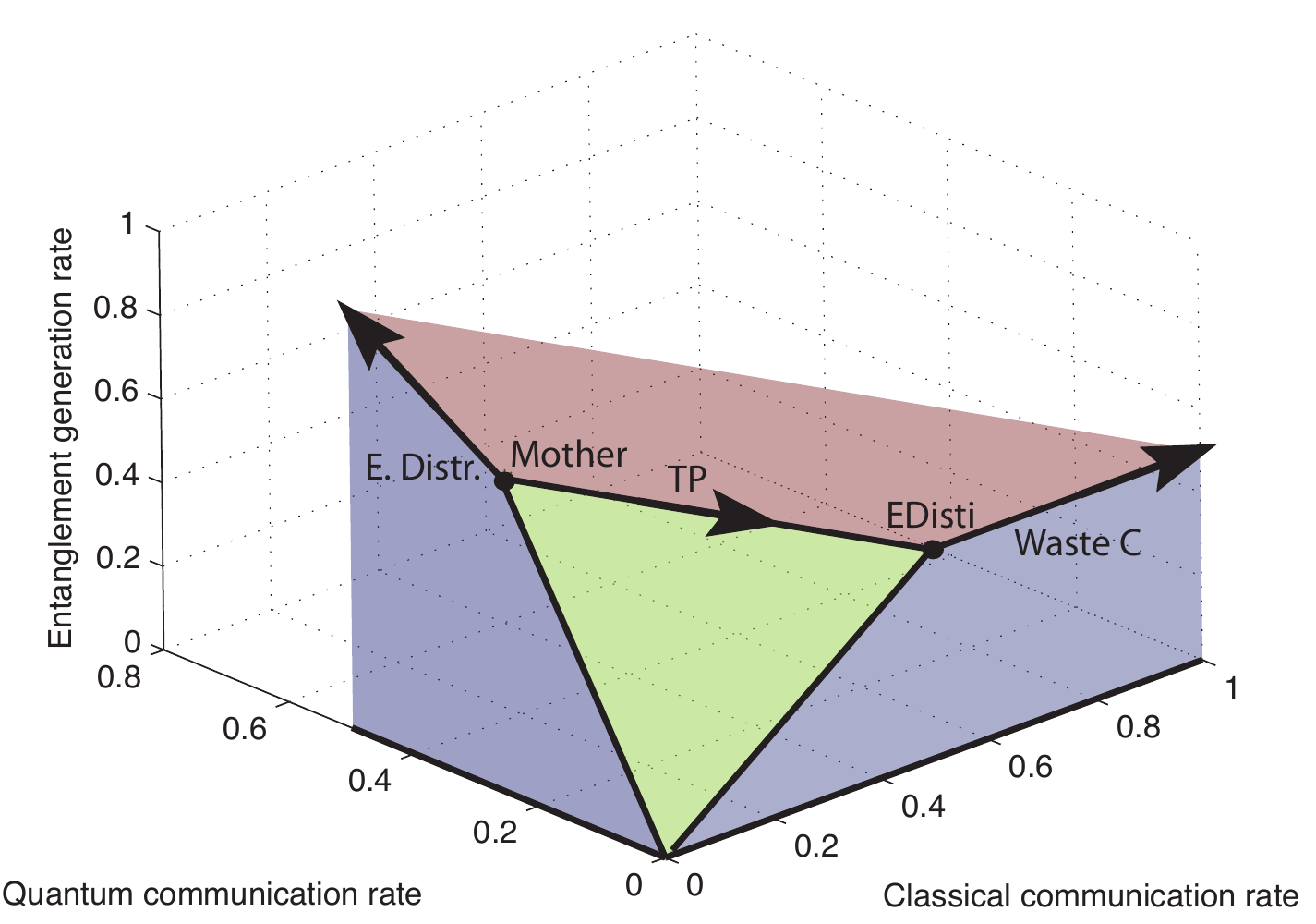}%
%{\special{ language "Scientific Word";  type "GRAPHIC";
%maintain-aspect-ratio TRUE;  display "USEDEF";  valid_file "F";
%width 3.4411in;  height 2.303in;  depth 0pt;  original-width 5.7804in;
%original-height 3.8536in;  cropleft "0";  croptop "1";  cropright "1";
%cropbottom "0";
%filename 'erased-state-grandmother.pdf';file-properties "XNPEU";}}}%
%BeginExpansion
\begin{figure}
[ptb]
\begin{center}
\includegraphics[
natheight=3.853600in,
natwidth=5.780400in,
width=3.4411in
]%
{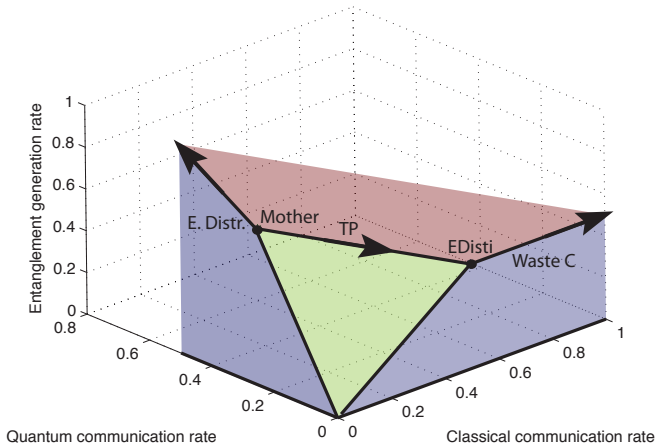}%
\caption{(Color online) Plot of the ($-,-,+$) octant of the static capacity
region for the case of an \textquotedblleft erased state.\textquotedblright%
\ The region does not exhibit a trade-off and time-sharing between the mother
protocol (the point labeled \textquotedblleft Mother\textquotedblright),
entanglement distillation (the point labeled \textquotedblleft
Edisti\textquotedblright), and the origin is the optimal strategy.}%
\label{fig:grandmother}%
\end{center}
\end{figure}
%EndExpansion

\subsection{Dynamic Case}

Our example for the dynamic case is the qubit dephasing channel with dephasing
parameter $p$. We showed in Refs.~\cite{BHTW10,HW08GFP} that the $\left(
+,+,-\right)  $ octant single-letterizes for this channel. The
characterization of the dynamic capacity region in
Theorem~\ref{thm:direct-dynamic-cap}\ shows that the classically-enhanced
father protocol combined with the unit protocols achieves the full region.
Thus, it is sufficient to determine the classically-enhanced father over one
use of the channel, and we thus only need a single use to get the full
capacity region. Though, Ref.~\cite{WH10} in fact gives a simplified, direct
argument that this is the case.%
%TCIMACRO{\FRAME{ftbpFU}{3.4411in}{2.5901in}{0pt}{\Qcb{(Color online) Plot of
%the full dynamic capacity region for all octants for the case of a qubit
%dephasing channel with dephasing parameter $p=0.2$. We can see that it is
%merely the unit resource capacity region translated along the
%classically-enhanced father trade-off curve.}}{\Qlb{fig:full-triple-dynamic}%
%}{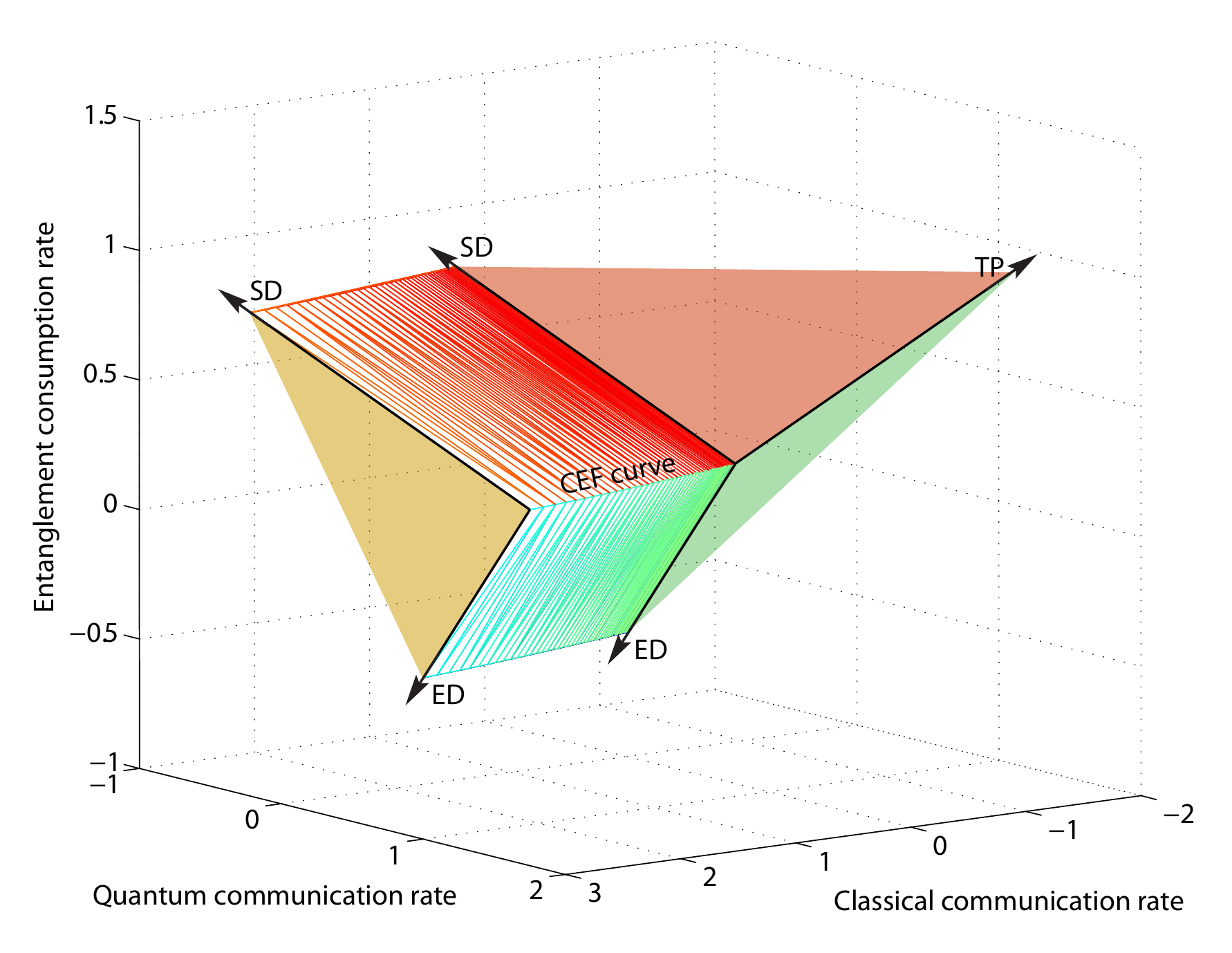}{\special{ language "Scientific Word";
%type "GRAPHIC";  maintain-aspect-ratio TRUE;  display "USEDEF";
%valid_file "F";  width 3.4411in;  height 2.5901in;  depth 0pt;
%original-width 6.7464in;  original-height 5.0669in;  cropleft "0";
%croptop "1";  cropright "1";  cropbottom "0";
%filename 'full-triple-dynamic.pdf';file-properties "XNPEU";}}}%
%BeginExpansion
\begin{figure}
[ptb]
\begin{center}
\includegraphics[
natheight=5.066900in,
natwidth=6.746400in,
height=2.5901in,
width=3.4411in
]%
{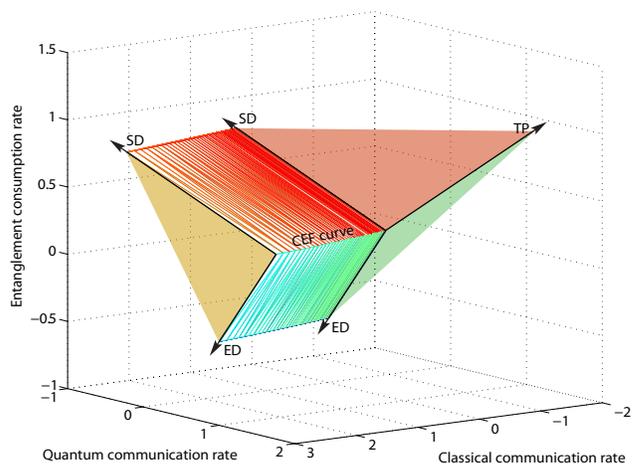}%
\caption{(Color online) Plot of the full dynamic capacity region for all
octants for the case of a qubit dephasing channel with dephasing parameter
$p=0.2$. We can see that it is merely the unit resource capacity region
translated along the classically-enhanced father trade-off curve.}%
\label{fig:full-triple-dynamic}%
\end{center}
\end{figure}
%EndExpansion

\section{Conclusion}

We have provided a unifying treatment of many of the important results in
quantum Shannon theory. Our first result is a solution of the unit resource
capacity region---the optimal strategy mixes super-dense coding,
teleportation, and entanglement distribution. Our next result is the full
triple trade-off for the static scenario where a sender and receiver share a
noisy bipartite state. The coding strategy combines the classically-assisted
state redistribution protocol with the three unit protocols. Our last result
is a solution of the direct dynamic capacity theorem---the scenario where a
sender and receiver have access to a large number of independent uses of a
noisy quantum channel. The coding strategy combines the classically-enhanced
father protocol with the three unit protocols.

The discussion in Section~\ref{sec:EA-insights} demonstrates another case
where quantum Shannon theory has practical implications for quantum error
correction schemes. We are able to determine how one benefits from
entanglement-assisted coding versus teleportation.

Our work was originally inspired by the work in \cite{AH03GRSP} in which the
authors solved a triple trade-off problem called generalized remote state
preparation (GRSP). The relation between our capacity regions and theirs is
yet unknown due to incompatible definitions of a resource \cite{DHW05RI}. The
GRSP uses ``pseudo-resources'' that resemble our definition of a resource but
fail to satisfy the quasi-i.i.d. requirement. We can possibly remedy this by
generalizing our definition of a resource.

In this article, we have discussed the triple trade-off scenario for when a
protocol consumes a noisy resource to generate noiseless resources. An
interesting open research question is the triple trade-off scenario for when a
protocol generates or \textit{simulates} a noisy resource rather than consumes
it. A special case of this type of triple trade-off is the quantum reverse
Shannon theorem, because the protocol corresponding to it consumes classical
communication and entanglement to simulate a noisy channel
\cite{BDHSW05,ADHW06FQSW,Devetak06Duality}. The discussion in the last section
of Ref.~\cite{BSST01}\ speaks of the usefulness of the quantum reverse Shannon
theorem and its role in simplifying quantum Shannon theory. One could imagine
several other protocols that would arise as special cases of the triple
trade-off for simulating a noisy resource, but the usefulness of such triple
trade-offs is unclear to us at this point.

An interesting open research question concerns the triple trade-offs for the
static and dynamic cases where the noiseless resources are instead public
classical communication, private classical communication, and a secret key. We
expect the proof strategies to be similar to those in this article, but the
capacity regions should be different from those found here. A useful protocol
is the publicly-enhanced secret-key-assisted private classical communication
protocol \cite{HW09}, an extension of the private father protocol
\cite{hsieh:042306}. This protocol gives the initial steps for finding the
full triple trade-off of the dynamic case. The static case should employ
previously found protocols, such as that for secret key distillation. As a
last suggestion, one might also consider using these techniques for
determining the optimal sextuple trade-offs in multiple-access coding
\cite{YHD05MQAC,itit2008hsieh}\ and broadcast channel coding
\cite{YHD2006,DH2006}.

Another interesting open research question concerning our results is the
long-standing \textquotedblleft single-letterizable\textquotedblright\ issue
that plagues most capacity results in quantum Shannon theory. Our capacity
formulas are regularized expressions---the implication of regularization is
that the evaluation of the rate region with a regularized expression is
intractable, requiring an optimization over an infinite number of uses of a
channel for the dynamic case and over all instruments for the static case.
Additionally, prior work shows that two different regularized capacity
expressions can coincide asymptotically \cite{YHD05MQAC}, even though the
corresponding finite capacity formulas trace out different finite
approximations of the rate region. Thus, regularized results in quantum
Shannon theory present a problem for determining the true characterization of
a capacity region. We have shown examples of states and channels for which the
regions single-letterize, but there is always the possibility of uncovering
some formulas for the region that give a single-letter characterization.

\section{Acknowledgements}

The authors thank Igor Devetak, Patrick Hayden, and Debbie Leung for initial
discussions during the development of this project, and the anonymous referee
for bringing the classically-assisted quantum state redistribution protocol to
our attention. M.M.W. acknowledges support from the National Research
Foundation \& Ministry of Education, Singapore, from the internal research and
development grant SAIC-1669 of Science Applications International Corporation,
and from the MDEIE (Qu\'{e}bec) PSR-SIIRI international collaboration grant.
M.M.W.~also acknowledges the hospitality of the ERATO-SORST\ project, where he
and M.-H.H. were able to complete this research.

\appendices

\section{$\boldsymbol{(-,-,+)}$ Octant of the Direct Static Capacity Region}

\label{sec:converse_--+_static}The converse proof for this octant corresponds
to the classically-assisted state redistribution protocol. We employ an
information-theoretic argument.

We consider the most general classically-assisted state redistribution
protocol for proving the converse theorem (illustrated in
Figure~\ref{fig:class-assist-mother}).%
%TCIMACRO{\FRAME{ftbpFU}{3.3399in}{2.2407in}{0pt}{\Qcb{(Color online) The
%figure above illustrates the most general protocol for classically- and
%quantum-communication-assisted entanglement distillation. Alice, Bob, and the
%reference system share a state $\left\vert \psi\right\rangle ^{A^{n}B^{n}%
%E^{n}}$. Alice performs a quantum instrument $\QTR{cal}{T}^{A^{n}\rightarrow
%A^{\prime}A_{1}ME^{\prime}}$ on her system $A^{n}$. Alice transmits $M$ and
%$A_{1}$ to Bob. Bob performs a decoding operation $\QTR{cal}{D}^{B^{n}%
%MA^{\prime}\rightarrow B_{1}}$ that outputs the system $B_{1}$. The result of
%the protocol is a state close to the maximally entangled state $\Phi
%^{A_{1}B_{1}}$ on systems $A_{1}$ and $B_{1}$.}}{\Qlb{fig:class-assist-mother}%
%}{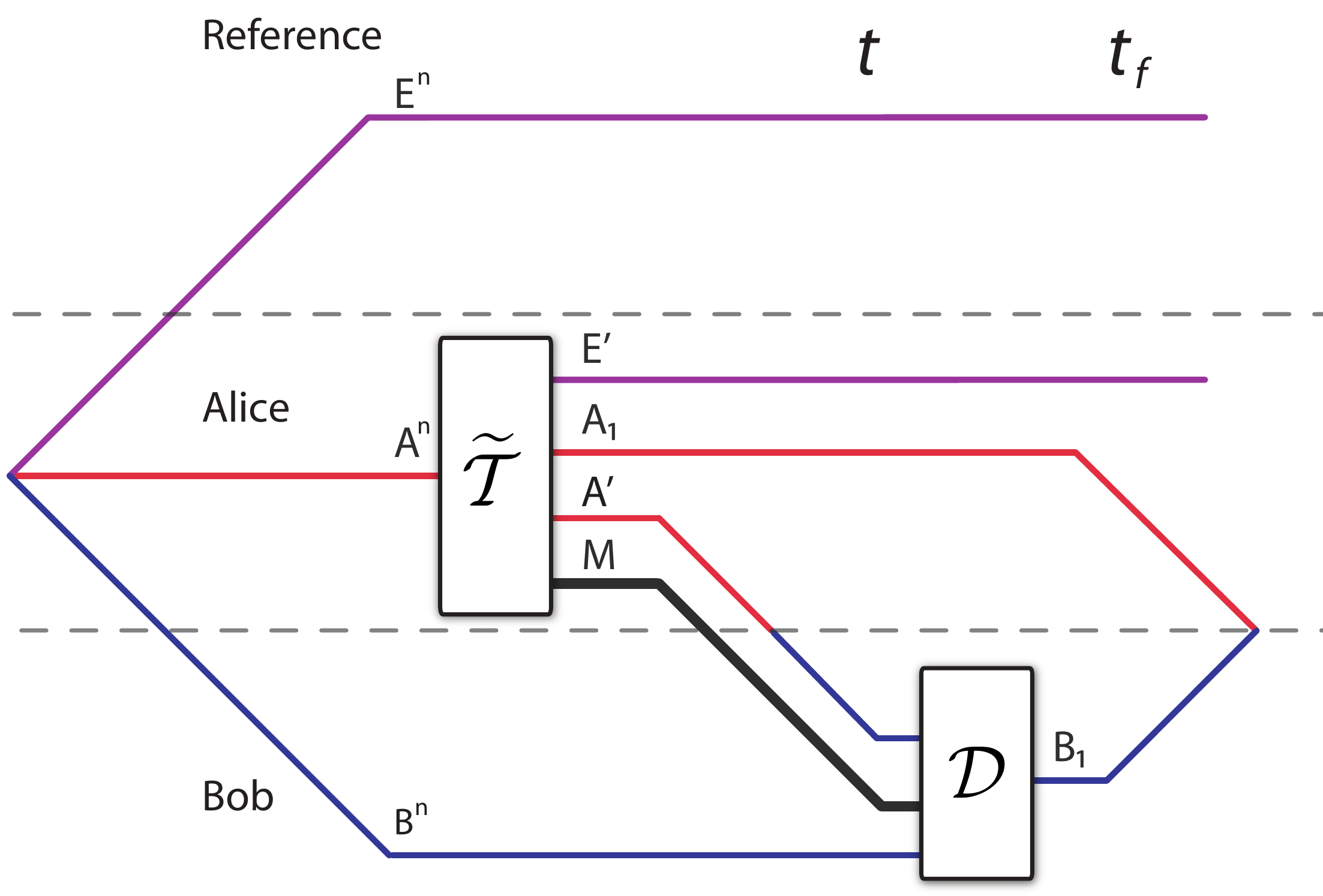}{\special{ language "Scientific Word";
%type "GRAPHIC";  maintain-aspect-ratio TRUE;  display "USEDEF";
%valid_file "F";  width 3.3399in;  height 2.2407in;  depth 0pt;
%original-width 8.8332in;  original-height 5.8997in;  cropleft "0";
%croptop "1";  cropright "1";  cropbottom "0";
%filename 'class-assisted-mother.pdf';file-properties "XNPEU";}%
%}}%
%BeginExpansion
\begin{figure}
[ptb]
\begin{center}
\includegraphics[
natheight=5.899700in,
natwidth=8.833200in,
height=2.2407in,
width=3.3399in
]%
{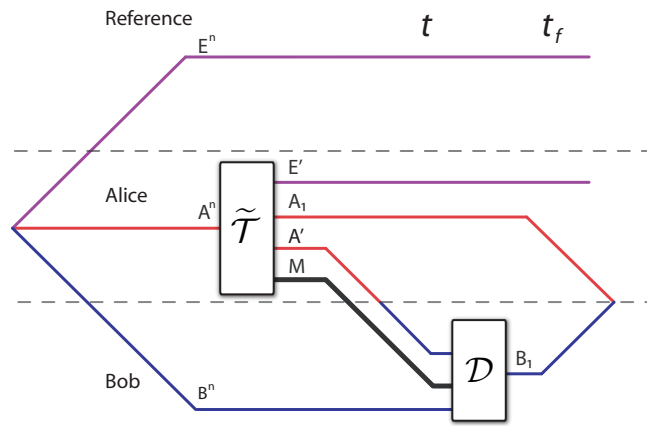}%
\caption{(Color online) The figure above illustrates the most general protocol
for classically- and quantum-communication-assisted entanglement distillation.
Alice, Bob, and the reference system share a state $\left\vert \psi
\right\rangle ^{A^{n}B^{n}E^{n}}$. Alice performs a quantum instrument
$\mathcal{T}^{A^{n}\rightarrow A^{\prime}A_{1}ME^{\prime}}$ on her system
$A^{n}$. Alice transmits $M$ and $A_{1}$ to Bob. Bob performs a decoding
operation $\mathcal{D}^{B^{n}MA^{\prime}\rightarrow B_{1}}$ that outputs the
system $B_{1}$. The result of the protocol is a state close to the maximally
entangled state $\Phi^{A_{1}B_{1}}$ on systems $A_{1}$ and $B_{1}$.}%
\label{fig:class-assist-mother}%
\end{center}
\end{figure}
%EndExpansion

Suppose Alice and Bob share many copies $\rho^{A^{n}B^{n}}$ of the noisy
bipartite state $\rho^{AB}$. The purification of the state $\rho^{A^{n}B^{n}}$
is $\psi^{A^{n}B^{n}E^{n}}$ where $E$ is the reference party. Alice performs a
quantum instrument $\widetilde{\mathcal{T}}^{A^{n}\rightarrow A^{\prime}%
A_{1}M}$ on her system $A^{n}$ and produces the quantum systems $A^{\prime}$
and $A_{1}$ and the classical system $M$. The quantum system $A^{\prime}$ has
size $2^{nQ}$, $A_{1}$ has size $2^{nE}$, and $M$ has size $2^{nC}$. We
consider an extension $\widetilde{\mathcal{T}}^{A^{n}\rightarrow A^{\prime
}A_{1}ME^{\prime}}$ of the quantum instrument $\widetilde{\mathcal{T}}%
^{A^{n}\rightarrow A^{\prime}A_{1}M}$ in what follows (see the discussion of
the CP\ formalism in Ref.~\cite{DHW05RI}). Alice sends $A^{\prime}$ through a
noiseless quantum channel and sends $M$ through a noiseless classical channel.
The resulting state, at time $t$ in Figure~\ref{fig:class-assist-mother}, is
as follows:%
\[
\omega^{A^{\prime}A_{1}ME^{\prime}B^{n}E^{n}}\equiv\widetilde{\mathcal{T}%
}^{A^{n}\rightarrow A^{\prime}A_{1}ME^{\prime}}\left(  \psi^{A^{n}B^{n}E^{n}%
}\right)  .
\]
Define $A\equiv A^{\prime}A_{1}$ so that the above state is a particular
$n^{\text{th}}$ extension of the state in (\ref{eq:instrument-state}). Bob
receives the systems $A^{\prime}$ and $M$. The most general decoding operation
that Bob can perform on his three registers $A^{\prime}$, $B^{n}$, and $M$ is
a conditional quantum decoder $\mathcal{D}^{MA^{\prime}B^{n}\rightarrow B_{1}%
}$ consisting of a collection $\{\mathcal{D}_{m}^{A^{\prime}B^{n}\rightarrow
B_{1}}\}_{m}$\ of CPTP maps. The state of Bob's system after the conditional
quantum decoder $\mathcal{D}^{MA^{\prime}B^{n}\rightarrow B_{1}}$ (at time
$t_{f}$ in Figure~\ref{fig:class-assist-mother}) is as follows:%
\[
\left(  \omega^{\prime}\right)  ^{A_{1}E^{\prime}B_{1}E^{n}}\equiv
\mathcal{D}^{MA^{\prime}B^{n}\rightarrow B_{1}}(\omega^{A^{\prime}%
A_{1}ME^{\prime}B^{n}E^{n}}).
\]

Suppose that an $\left(  n,C+\delta,Q+\delta,E-\delta,\epsilon\right)  $
classically-assisted state redistribution protocol as given above exists. We
prove that the following bounds apply to the elements of its rate triple
$\left(  C+\delta,Q+\delta,E-\delta\right)  $:%
\begin{align}
C+2Q+\delta &  \geq\frac{1}{n}\left(  I(M;E^{n}|B^{n})_{\omega}+I(A^{\prime
}A_{1};E^{n}|E^{\prime}M)_{\omega}\right)  ,\label{cgf1}\\
E-\delta &  \leq C+Q\label{cgf2}\\
&  +\frac{1}{n}\left(  I(A^{\prime}A_{1}\rangle B^{n}M)_{\omega}%
-I(M;E^{n}|B^{n})_{\omega}\right)  ,\nonumber\\
E-\delta &  \leq\frac{1}{n}I(A^{\prime}A_{1}\rangle B^{n}M)_{\omega
}+Q,\label{cgf4}%
\end{align}
for any $\epsilon,\delta>0$ and all sufficiently large $n$. We redefine the
system $A^{\prime}\equiv A^{\prime}A_{1}$ so that our expression above matches
that in the direct static capacity theorem.

In the ideal case, the protocol produces the maximally entangled state
$\Phi^{A_{1}B_{1}}$. So for our case, the following inequality%
\begin{equation}
\left\Vert \left(  \omega^{\prime}\right)  ^{A_{1}B_{1}}-\Phi^{A_{1}B_{1}%
}\right\Vert _{1}\leq\epsilon, \label{eq:converse-good-q-comm}%
\end{equation}
holds because the protocol is $\epsilon$-good for entanglement generation.

We first prove the bound in (\ref{cgf4}):%
\begin{align*}
n(E-\delta)  &  =I(A_{1}\rangle B_{1})_{\Phi^{A_{1}B_{1}}}\\
&  \leq I(A_{1}\rangle B_{1})_{\omega^{\prime}}+n\delta^{\prime}\\
&  \leq I(A_{1}\rangle A^{\prime}B^{n}M)_{\omega}+n\delta^{\prime}\\
&  \leq I(A_{1}A^{\prime}\rangle B^{n}M)_{\omega}+H\left(  A^{\prime
}|M\right)  _{\omega}+n\delta^{\prime}\\
&  \leq I(A_{1}A^{\prime}\rangle B^{n}M)_{\omega}+nQ+n(\delta^{\prime}+\delta)
\end{align*}
The first equality follows from evaluating the coherent information of the
maximally entangled state $\Phi^{A_{1}B_{1}}$, the first inequality follows
from the Alicki-Fannes' inequality \cite{0305-4470-37-5-L01} where we define
$\delta^{\prime}\equiv4E\epsilon+H(\epsilon)/n$, the second inequality follows
from quantum data processing \cite{SN96}, the third inequality follows because
conditioning reduces entropy ($H\left(  A^{\prime}|B^{n}M\right)  _{\omega
}\leq H\left(  A^{\prime}|M\right)  _{\omega}$), and the last inequality
follows because $H\left(  A^{\prime}|M\right)  _{\omega}\leq H(A^{\prime
})_{\omega}\leq n(Q+\delta)$ (the entropy of system $A^{\prime}$ has to be
less than the log of the dimension of the sytem).

We next prove the lower bound in (\ref{cgf1}) on the classical and quantum
communication consumption rate:%
\begin{align*}
&  n(C+2Q+\delta)\\
&  \geq H\left(  M\right)  _{\omega}+Q+E-I\left(  A^{\prime}A_{1}\rangle
B^{n}M\right)  _{\omega}\\
&  \geq H\left(  M|B^{n}\right)  _{\omega}+H\left(  A^{\prime}|M\right)
_{\omega}+H\left(  A_{1}|M\right)  _{\omega}\\
&  \ \ \ \ \ \ \ \ -H\left(  B^{n}|M\right)  _{\omega}+H\left(  A^{\prime
}A_{1}B^{n}|M\right)  _{\omega}\\
&  \geq H\left(  M|B^{n}\right)  _{\omega}-H\left(  M|B^{n}E^{n}\right)
_{\omega}+H\left(  A^{\prime}A_{1}|M\right)  _{\omega}\\
&  \ \ \ \ \ \ \ \ -H\left(  A^{\prime}A_{1}E^{n}E^{\prime}|M\right)
_{\omega}+H\left(  E^{n}E^{\prime}|M\right)  _{\omega}\\
&  =I\left(  M;E^{n}|B^{n}\right)  _{\omega}+I\left(  A^{\prime}A_{1}%
;E^{n}E^{\prime}|M\right)  _{\omega}\\
&  \geq I\left(  M;E^{n}|B^{n}\right)  _{\omega}+I\left(  A^{\prime}%
A_{1};E^{n}|E^{\prime}M\right)  _{\omega}.
\end{align*}
The first inequality follows because the entropy of $M$ is less than that of
the uniform distribution and by exploiting the inequality in (\ref{cgf4}). The
second inequality follows because conditioning reduces entropy ($H\left(
M\right)  \geq H\left(  M|B^{n}\right)  $), because $Q\geq H\left(  A^{\prime
}|M\right)  _{\omega}$ and $E\geq H\left(  A_{1}|M\right)  _{\omega}$, and by
expanding the coherent information $I\left(  A^{\prime}A_{1}\rangle
B^{n}M\right)  _{\omega}$. The third inequality follows because $H\left(
M|B^{n}E^{n}\right)  _{\omega}\geq0$, from subadditivity, and because the
state on $B^{n}A^{\prime}A_{1}E^{n}E^{\prime}$ is pure when conditioned on the
classical variable $M$. The sole equality follows by collecting terms, and the
last inequality follows because $I\left(  A^{\prime}A_{1};E^{n}E^{\prime
}|M\right)  _{\omega}=I\left(  A^{\prime}A_{1};E^{\prime}|M\right)  _{\omega
}+I\left(  A^{\prime}A_{1};E^{n}|E^{\prime}M\right)  _{\omega}$ $I\left(
A^{\prime}A_{1};E^{\prime}|M\right)  _{\omega}\geq0$.

The inequality in (\ref{cgf1}) follows by exploiting (\ref{cgf4}) and that
$nC\geq H\left(  M\right)  _{\omega}\geq I\left(  M;E^{n}|B^{n}\right)
_{\omega}$.

\section{$\boldsymbol{(-,+,+)}$ Octant of the Direct Static Capacity Region}

\label{sec:-++static}The technique for handling this octant is similar to the
technique for handling the octant $\left(  +,-,+\right)  $. We give the full
proof for completeness. Define%
\[
\mathcal{C}_{\text{DS}}^{-++}\left(  \rho\right)  \equiv\mathcal{C}%
_{\text{DS}}\left(  \rho\right)  \cap O^{-++},
\]
and recall the definition of $\mathcal{C}_{\text{DS}}^{-+0}\left(
\rho\right)  $ in (\ref{eq:def-cap-DS--+0}). Recall the line of teleportation
$L_{\text{TP}}$ as defined in (\ref{eq:line-TP}).

Define the following maps%
\begin{align*}
f  &  :S\rightarrow(S+L_{\text{TP}})\cap Q^{-+0},\\
\hat{f}  &  :S\rightarrow(S-L_{\text{TP}})\cap O^{-++}.
\end{align*}
The map $f$ translates the set $S$ in the teleportation direction and keeps
the points in the $Q^{-+0}$ quadrant. The map $\hat{f}$, in a sense, undoes
the effect of $f$ by moving the set $S$ back to the octant $O^{-++}$.

The inclusion $\mathcal{C}_{\text{DS}}^{-++}\left(  \rho\right)  \subseteq
\hat{f}(f(\mathcal{C}_{\text{DS}}^{-++}\left(  \rho\right)  ))$ holds because%
\begin{align}
&  \mathcal{C}_{\text{DS}}^{-++}\left(  \rho\right) \nonumber\\
&  =\mathcal{C}_{\text{DS}}^{-++}\left(  \rho\right)  \cap O^{-++}\nonumber\\
&  \subseteq(((\mathcal{C}_{\text{DS}}^{-++}\left(  \rho\right)
+L_{\text{TP}})\cap Q^{-+0})-L_{\text{TP}})\cap O^{-++}\nonumber\\
&  =(f(\mathcal{C}_{\text{DS}}^{-++}\left(  \rho\right)  )-L_{\text{TP}})\cap
O^{-++}\nonumber\\
&  =\hat{f}(f(\mathcal{C}_{\text{DS}}^{-++}\left(  \rho\right)  )).
\label{DSnpp1}%
\end{align}
The first set equivalence is obvious from the definition of $\mathcal{C}%
_{\text{DS}}^{-++}\left(  \rho\right)  $. The first inclusion follows from the
following logic. Pick any point $a\equiv(C,Q,E)\in\mathcal{C}_{\text{DS}%
}^{-++}\left(  \rho\right)  \cap O^{-++}$ and a particular point
$b\equiv\left(  -2E,E,-E\right)  \in L_{\text{TP}}$. It follows that the point
$a+b\in(\mathcal{C}_{\text{DS}}^{-++}\left(  \rho\right)  +L_{\text{TP}})\cap
Q^{-+0}$. We then pick a point $-b\equiv\left(  2E,-E,E\right)  \in
-L_{\text{TP}}$. It follows that $a+b-b\in(((\mathcal{C}_{\text{DS}}%
^{-++}\left(  \rho\right)  +L_{\text{TP}})\cap Q^{-+0})-L_{\text{TP}})\cap
O^{-++}$ and that $a+b-b=\left(  C,Q,E\right)  =a$. The first inclusion thus
holds because every point in $\mathcal{C}_{\text{DS}}^{-++}\left(
\rho\right)  \cap O^{-++}$ is in $(((\mathcal{C}_{\text{DS}}^{-++}\left(
\rho\right)  +L_{\text{TP}})\cap Q^{-+0})-L_{\text{TP}})\cap O^{-++}$. The
second set equivalence follows from the definition of $f$ and the third set
equivalence follows from the definition of $\hat{f}$.

It is operationally clear that the following inclusion holds
\begin{equation}
f(\mathcal{C}_{\text{DS}}^{-++}\left(  \rho\right)  )\subseteq\mathcal{C}%
_{\text{DS}}^{-+0}\left(  \rho\right)  , \label{DSnpp2}%
\end{equation}
because the mapping $f$ converts any achievable point $a\in\mathcal{C}%
_{\text{DS}}^{-++}\left(  \rho\right)  $ to an achievable point in
$\mathcal{C}_{\text{DS}}^{-+0}\left(  \rho\right)  $ by consuming all of the
entanglement in $a$ with teleportation.

The converse proof of the noisy teleportation \cite{DHW05RI} protocol is
useful for us:
\begin{equation}
\mathcal{C}_{\text{DS}}^{-+0}\left(  \rho\right)  \subseteq\widetilde
{\mathcal{C}}_{\text{DS}}^{-+0}\left(  \rho\right)  . \label{DSnpp4}%
\end{equation}
%by combining the relations in (\ref{eq:def-ach-DS--+0}),
%(\ref{eq:def-cap-DS--+0}), and (\ref{eq:mother-NTP}).

The inclusion $\hat{f}(\widetilde{\mathcal{C}}_{\text{DS}}^{-+0}\left(
\rho\right)  )\subseteq\widetilde{\mathcal{C}}_{\text{DS}}^{-++}\left(
\rho\right)  $ holds because%
\begin{align}
&  \hat{f}(\widetilde{\mathcal{C}}_{\text{DS}}^{-+0}\left(  \rho\right)
)\nonumber\\
&  =(((\widetilde{\mathcal{C}}_{\text{CASR}}\left(  \rho\right)
+L_{\text{TP}})\cap Q^{-+0})-L_{\text{TP}})\cap O^{-++}\nonumber\\
&  \subseteq((\widetilde{\mathcal{C}}_{\text{CASR}}\left(  \rho\right)
+L_{\text{TP}})-L_{\text{TP}})\cap O^{-++}\nonumber\\
&  =((\widetilde{\mathcal{C}}_{\text{CASR}}\left(  \rho\right)  +L_{\text{TP}%
})\cap O^{-++})\cup\nonumber\\
&  \ \ \ \ \ \ \ \ \ \ \ \ \ \ \ \ \ \ \ \ \ \ \ \ \ \ \ \ \ \ ((\widetilde
{\mathcal{C}}_{\text{CASR}}\left(  \rho\right)  -L_{\text{TP}})\cap
O^{-++})\nonumber\\
&  \subseteq\widetilde{\mathcal{C}}_{\text{DS}}^{-++}\left(  \rho\right)  .
\label{DSnpp3}%
\end{align}
The first set equivalence follows by definition. The first inclusion follows
by dropping the intersection with $Q^{-+0}$. The second set equivalence
follows because $(\widetilde{\mathcal{C}}_{\text{CASR}}\left(  \rho\right)
+L_{\text{TP}})-L_{\text{TP}}=(\widetilde{\mathcal{C}}_{\text{CASR}}\left(
\rho\right)  +L_{\text{TP}})\cup(\widetilde{\mathcal{C}}_{\text{CASR}}\left(
\rho\right)  -L_{\text{TP}})$, and the last inclusion follows because
$(\widetilde{\mathcal{C}}_{\text{CASR}}\left(  \rho\right)  -L_{\text{TP}%
})\cap O^{-++}=\left(  0,0,0\right)  $.

Putting (\ref{DSnpp1}), (\ref{DSnpp2}), (\ref{DSnpp4}), and (\ref{DSnpp3})
together, we have the following inclusion:%
\begin{multline*}
\mathcal{C}_{\text{DS}}^{-++}\left(  \rho\right)  \subseteq\hat{f}%
(f(\mathcal{C}_{\text{DS}}^{-++}\left(  \rho\right)  ))\\
\subseteq\hat{f}(\mathcal{C}_{\text{DS}}^{-+0}\left(  \rho\right)
)\subseteq\hat{f}(\widetilde{\mathcal{C}}_{\text{DS}}^{-+0}\left(
\rho\right)  )\subseteq\widetilde{\mathcal{C}}_{\text{DS}}^{-++}\left(
\rho\right)  .
\end{multline*}
The above inclusion $\mathcal{C}_{\text{DS}}^{-++}\subseteq\widetilde
{\mathcal{C}}_{\text{DS}}^{-++}$ is the statement of the converse theorem for
this octant.

\section{$\boldsymbol{(-,+,-)}$ Octant of the Direct Static Capacity Region}

\label{sec:-+-static}The proof of this octant is similar to the proof of the
octant $(+,-,-)$. We first need the following additivity lemma.

\begin{lemma}
\label{lem:additivity--+0}The following inclusion holds%
\[
\mathcal{C}_{\text{DS}}^{-+0}(\rho\otimes\Phi^{|E|})\subseteq\widetilde
{\mathcal{C}}_{\text{DS}}^{-+0}(\rho)+\widetilde{\mathcal{C}}_{\text{DS}%
}^{-+0}(\Phi^{|E|}).
\]

\end{lemma}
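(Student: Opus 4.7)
The plan is to mirror the structure of the proof of Lemma~\ref{lem:additivity-+-0}, replacing the role played by noisy super-dense coding with that of noisy teleportation. The key geometric ingredient will be that teleportation (\ref{TP}) induces a linear bijection
\[
f : \widetilde{\mathcal{C}}_{\text{DS}}^{0-+}(\rho) \to \widetilde{\mathcal{C}}_{\text{DS}}^{-+0}(\rho), \qquad f : (0,Q,E) \mapsto (-2E,\, Q+E,\, 0),
\]
between the mother achievable region and the noisy teleportation achievable region. Geometrically, this just consumes the $E$ ebits generated by the mother together with $2E$ bits of classical communication to synthesize $E$ extra qubits of forward quantum communication. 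That $f$ actually carries $\widetilde{\mathcal{C}}_{\text{DS}}^{0-+}(\rho)$ onto $\widetilde{\mathcal{C}}_{\text{DS}}^{-+0}(\rho)$ is the teleportation analog of the identity used in the dense-coding proof, and it follows directly from the relation (\ref{eq:mother-NTP}).

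With $f$ in hand, I would run the chain of set inclusions from the proof of Lemma~\ref{lem:additivity-+-0} essentially verbatim. The capacity regions of tensor products with pure ebits translate in the entanglement direction, so
\[
\mathcal{C}_{\text{DS}}^{0-+}(\rho\otimes\Phi^{|E|}) = \mathcal{C}_{\text{DS}}^{0-+}(\rho) + (0,0,|E|) \subseteq \mathcal{C}_{\text{DS}}^{0-+}(\rho) + \mathcal{C}_{\text{DS}}^{0-+}(\Phi^{|E|}).
\]
The mother capacity theorem (\ref{eq:mother-capacity}) then promotes each capacity region on the right to the corresponding achievable region, and linearity of $f$ gives $f^{-1}(A)+f^{-1}(B) = f^{-1}(A+B)$, so the right-hand side equals $f^{-1}(\widetilde{\mathcal{C}}_{\text{DS}}^{-+0}(\rho)+\widetilde{\mathcal{C}}_{\text{DS}}^{-+0}(\Phi^{|E|}))$. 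Applying $f$ to both sides and observing, via (\ref{eq:mother-capacity}) and the noisy teleportation capacity theorem (\ref{eq:NTP-capacity}), that $f(\mathcal{C}_{\text{DS}}^{0-+}(\rho\otimes\Phi^{|E|})) = \mathcal{C}_{\text{DS}}^{-+0}(\rho\otimes\Phi^{|E|})$, would yield the desired inclusion.

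The step I expect to require the most care is confirming that $f$ really does deliver a bijection onto the full noisy teleportation region, especially on the pure-entanglement factor $\Phi^{|E|}$. One has to verify that every point of $\widetilde{\mathcal{C}}_{\text{DS}}^{-+0}(\Phi^{|E|})$ is the teleportation image of some mother-achievable point for $\Phi^{|E|}$, i.e.\ that there is no more efficient strategy for noisy teleportation on pure ebits than ordinary teleportation. Once this is pinned down, the remaining arithmetic (translation by $(0,0,|E|)$, linearity of $f$, and pushforward of the chain through $f$) is immediate from the set-theoretic definitions, and the argument proceeds in lockstep with its dense-coding counterpart.
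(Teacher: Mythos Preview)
Your proposed bijection between the mother region $\widetilde{\mathcal{C}}_{\text{DS}}^{0-+}(\rho)$ and the noisy teleportation region $\widetilde{\mathcal{C}}_{\text{DS}}^{-+0}(\rho)$ does not hold, and this is the step where the argument breaks. The map $f:(0,Q,E)\mapsto(-2E,Q+E,0)$ fails in two ways. First, since the mother region contains points with $|Q|$ arbitrarily large (one may always waste qubit channels), the coordinate $Q+E$ can be negative, so $f$ does not even carry $\widetilde{\mathcal{C}}_{\text{DS}}^{0-+}(\rho)$ into $Q^{-+0}$. Second, and more decisively, the mother region has $E$ bounded above (by $\frac{1}{2}I(A;B)$), so the classical coordinate $-2E$ of the image is bounded below; but $\widetilde{\mathcal{C}}_{\text{DS}}^{-+0}(\rho)$ admits arbitrarily negative classical rate (wasting cbits). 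Hence $f$ cannot be surjective onto the noisy teleportation region. The identity (\ref{eq:mother-NTP}) you cite relates $\widetilde{\mathcal{C}}_{\text{DS}}^{-+0}(\rho)$ to $\widetilde{\mathcal{C}}_{\text{CASR}}(\rho)$, not to the mother region, and the explicit intersection with $Q^{-+0}$ there already signals that the relation is not a linear bijection of achievable regions.

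The paper takes the \emph{entanglement distillation} region $\widetilde{\mathcal{C}}_{\text{DS}}^{-0+}(\rho)$ as the source, with the teleportation-induced map $f:(C,0,E)\mapsto(C-2E,E,0)$. Here $C\leq 0$ and $E\geq 0$ force the image into $Q^{-+0}$, and the bijection between entanglement distillation and noisy teleportation is the one established in \cite{DHW05RI}. With this choice the rest of your chain goes through exactly as you sketched, except that the capacity theorem invoked on the source side is (\ref{eq:ED-capacity}) rather than (\ref{eq:mother-capacity}). Your instinct to mirror Lemma~\ref{lem:additivity-+-0} was right; the correct mirror of the mother/noisy-super-dense-coding pair is the entanglement-distillation/noisy-teleportation pair, not the mother/noisy-teleportation pair.
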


\begin{IEEEproof}
Teleportation induces a linear bijection $f:\widetilde{\mathcal{C}}_{\text{DS}}%
^{-0+}\left(  \rho\right)  \rightarrow\widetilde{\mathcal{C}}_{\text{DS}}^{-+0}\left(
\rho\right)  $ between the entanglement distillation achievable region
$\widetilde{\mathcal{C}}_{\text{DS}}^{-0+}\left(  \rho\right)  $\ and the noisy
teleportation achievable region $\widetilde{\mathcal{C}}_{\text{DS}}^{-+0}\left(
\rho\right)  $ \cite{DHW05RI}.\ The bijection $f$ behaves as follows for every
point $(C,0,E)\in\widetilde{\mathcal{C}}_{\text{DS}}^{-0+}\left(  \rho\right)  $,%
\[
f:(C,0,E)\rightarrow(C-2E,E,0).
\]
The following relation holds%
\begin{equation}
f(\widetilde{\mathcal{C}}_{\text{DS}}^{-0+}\left(  \rho\right)  )=\widetilde
{\mathcal{C}}_{\text{DS}}^{-+0}\left(  \rho\right)  , \label{DSnpn2}%
\end{equation}
because applying teleportation to entanglement distillation gives noisy
teleportation \cite{DHW05RI}. %The following relations hold%
%\begin{equation}%
%\begin{split}
%\mathcal{C}_{\text{DS}}^{-0+}\left(  \rho\right)   &  =\widetilde{\mathcal{C}}_{\text{DS}}%
%^{-0+}\left(  \rho\right)  ,\\
%\mathcal{C}_{\text{DS}}^{-+0}\left(  \rho\right)   &  =\widetilde{\mathcal{C}}_{\text{DS}}%
%^{-+0}\left(  \rho\right)  ,
%\end{split}
%\label{DSnpn1}%
%\end{equation}
%because the respective entanglement distillation and noisy teleportation
%achievable regions are optimal \cite{DHW05RI}.
The inclusion $\mathcal{C}_{\text{DS}}^{-0+}(\rho\otimes\Phi^{|E|})\subseteq
f^{-1}(\widetilde{\mathcal{C}}_{\text{DS}}^{-+0}(\rho)+\widetilde{\mathcal{C}}_{\text{DS}}%
^{-+0}(\Phi^{|E|}))$ holds because%
\begin{align*}
\mathcal{C}_{\text{DS}}^{-0+}(\rho\otimes\Phi^{|E|})  &  =\mathcal{C}_{\text{DS}}^{-0+}%
(\rho)+(0,0,E)\\
&  \subseteq \mathcal{C}_{\text{DS}}^{-0+}(\rho)+\mathcal{C}_{\text{DS}}^{-0+}(\Phi^{|E|})\\
&  =\widetilde{\mathcal{C}}_{\text{DS}}^{-0+}(\rho)+\widetilde{\mathcal{C}}_{\text{DS}}^{-0+}%
(\Phi^{|E|})\\
&  =f^{-1}(\widetilde{\mathcal{C}}_{\text{DS}}^{-+0}(\rho))+f^{-1}(\widetilde
{\mathcal{C}}_{\text{DS}}^{-+0}(\Phi^{|E|}))\\
&  =f^{-1}(\widetilde{\mathcal{C}}_{\text{DS}}^{-+0}(\rho)+\widetilde{\mathcal{C}}_{\text{DS}%
}^{-+0}(\Phi^{|E|})).
\end{align*}
The first set equivalence follows because the capacity region of the noisy
resource state $\rho$ combined with a rate $E$ maximally entangled state is
equivalent to a translation of the capacity region of the noisy resource state
$\rho$. The first inclusion follows because the capacity region of a rate $E$
maximally entangled state contains the rate triple $(0,0,E)$. The second set
equivalence follows from (\ref{eq:ED-capacity}), the third set equivalence from (\ref{DSnpn2}), and the fourth set equivalence from linearity of the map $f$.
The above inclusion implies the following one:
\[
f(\mathcal{C}_{\text{DS}}^{-0+}(\rho\otimes\Phi^{|E|}))\subseteq\widetilde{C}%
_{\text{DS}}^{-+0}(\rho)+\widetilde{\mathcal{C}}_{\text{DS}}^{-+0}(\Phi^{|E|}).
\]
The lemma follows because%
\begin{align*}
f(\mathcal{C}_{\text{DS}}^{-0+}(\rho\otimes\Phi^{|E|}))  &  =f(\widetilde{\mathcal{C}}_{\text{DS}%
}^{-0+}(\rho\otimes\Phi^{|E|}))\\
&  =\widetilde{\mathcal{C}}_{\text{DS}}^{-+0}(\rho\otimes\Phi^{|E|})\\
&  =\mathcal{C}_{\text{DS}}^{-+0}(\rho\otimes\Phi^{|E|}),
\end{align*}
where we apply the relations in (\ref{DSnpn2}) and (\ref{eq:ED-capacity}).
\end{IEEEproof}

Observe that%
\begin{equation}
\widetilde{\mathcal{C}}_{\text{DS}}^{-+0}(\Phi^{|E|})=\widetilde{\mathcal{C}%
}_{\text{U}}^{-+E}. \label{eq:-+0phi=-+EU}%
\end{equation}
Hence for all $E\leq0$,%
\begin{equation}
\mathcal{C}_{\text{DS}}^{-+E}(\rho)=\mathcal{C}_{\text{DS}}^{-+0}(\rho
\otimes\Phi^{|E|})\subseteq\widetilde{\mathcal{C}}_{\text{DS}}^{-+0}%
(\rho)+\widetilde{\mathcal{C}}_{\text{U}}^{-+E}, \label{DSnpn3}%
\end{equation}
where we apply Lemma~\ref{lem:additivity--+0} and (\ref{eq:-+0phi=-+EU}).
Thus,%
\begin{align*}
\mathcal{C}_{\text{DS}}^{-+-}(\rho)  &  =\bigcup_{E\leq0}\widetilde
{\mathcal{C}}_{\text{DS}}^{-+E}(\rho)\\
&  \subseteq\bigcup_{E\leq0}\left(  \widetilde{\mathcal{C}}_{\text{DS}}%
^{-+0}(\rho)+\widetilde{\mathcal{C}}_{\text{U}}^{-+E}\right) \\
&  =(\widetilde{\mathcal{C}}_{\text{DS}}^{-+0}(\rho)+\widetilde{\mathcal{C}%
}_{\text{U}})\cap O^{-+-}\\
&  \subseteq(\widetilde{\mathcal{C}}_{\text{CASR}}(\rho)+\widetilde
{\mathcal{C}}_{\text{U}})\cap O^{-+-}\\
&  =\widetilde{\mathcal{C}}_{\text{DS}}^{-+-}(\rho).
\end{align*}
The first set equivalence holds by definition. The first inclusion follows
from (\ref{DSnpn3}). The second set equivalence follows because $\bigcup
_{E\leq0}\widetilde{\mathcal{C}}_{\text{U}}^{-+E}=\widetilde{\mathcal{C}%
}_{\text{U}}\cap O^{-+-}$. The second inclusion holds because $\widetilde
{\mathcal{C}}_{\text{DS}}^{-+0}(\rho)$ is equivalent to noisy teleportation
and the classically-assisted state redistribution combined with the unit
resource region generates noisy teleportation. The above inclusion
$\mathcal{C}_{\text{DS}}^{-+-}(\rho)\subseteq\widetilde{\mathcal{C}%
}_{\text{DS}}^{-+-}(\rho)$ is the statement of the converse theorem for this octant.

\section{$\boldsymbol{(-,+,+)}$ Octant of the Direct Dynamic Capacity Region}

\label{sec:-++dynamic}The proof of this octant is similar to the proof of the
octant $\left(  +,+,+\right)  $. Define
\begin{align*}
\mathcal{C}_{\text{DD}}^{-++}\left(  \mathcal{N}\right)   &  \equiv
\mathcal{C}_{\text{DD}}\left(  \mathcal{N}\right)  \cap O^{-++},\\
\mathcal{C}_{\text{DD}}^{-0+}\left(  \mathcal{N}\right)   &  \equiv
\mathcal{C}_{\text{DD}}\left(  \mathcal{N}\right)  \cap Q^{-0+}.
\end{align*}
Recall the definition of the line of entanglement distribution $L_{\text{ED}}$
in (\ref{eq:line-ED}). Define the following maps
\begin{align*}
f  &  :S\rightarrow(S+L_{\text{ED}})\cap Q^{-0+},\\
\hat{f}  &  :S\rightarrow(S-L_{\text{ED}})\cap O^{-++}.
\end{align*}
The map $f$ translates the set $S$ in the entanglement distribution direction
and keeps the points that lie on the $Q^{-0+}$ quadrant. The map $\hat{f}$, in
a sense, undoes the effect of $f$ by moving the set $S$ back to the $O^{-++}$ octant.

The inclusion $\mathcal{C}_{\text{DD}}^{-++}\left(  \mathcal{N}\right)
\subseteq\hat{f}(f(\mathcal{C}_{\text{DD}}^{-++}\left(  \mathcal{N}\right)
))$ holds because%
\begin{align}
&  \mathcal{C}_{\text{DD}}^{-++}\left(  \mathcal{N}\right) \nonumber\\
&  =\mathcal{C}_{\text{DD}}^{-++}\left(  \mathcal{N}\right)  \cap
O^{-++}\nonumber\\
&  \subseteq(((\mathcal{C}_{\text{DD}}^{-++}\left(  \mathcal{N}\right)
+L_{\text{ED}})\cap Q^{-0+})-L_{\text{ED}})\cap O^{-++}\nonumber\\
&  =(f(\mathcal{C}_{\text{DD}}^{-++}\left(  \mathcal{N}\right)  )-L_{\text{ED}%
})\cap O^{-++}\nonumber\\
&  =\hat{f}(f(\mathcal{C}_{\text{DD}}^{-++}\left(  \mathcal{N}\right)  )).
\label{DDnpp1}%
\end{align}
The first set equivalence is obvious from the definition of $\mathcal{C}%
_{\text{DD}}^{-++}$. The first inclusion follows from the following logic.
Pick any point $a\equiv\left(  C,Q,E\right)  \in\mathcal{C}_{\text{DD}}%
^{-++}\left(  \mathcal{N}\right)  \cap O^{-++}$ and a particular point
$b\equiv\left(  0,-Q,Q\right)  \in L_{\text{ED}}$. It follows that the point
$a+b=\left(  C,0,E+Q\right)  \in(\mathcal{C}_{\text{DD}}^{-++}\left(
\mathcal{N}\right)  +L_{\text{ED}})\cap Q^{-0+}$. We then pick a point
$-b=\left(  0,Q,-Q\right)  \in-L_{\text{ED}}$. It follows that $a+b-b\in
(((\mathcal{C}_{\text{DD}}^{-++}\left(  \mathcal{N}\right)  +L_{\text{ED}%
})\cap Q^{-0+})-L_{\text{ED}})\cap O^{-++}$ and that $a+b-b=\left(
C,Q,E\right)  =a$. Thus, the first inclusion follows because every point in
$\mathcal{C}_{\text{DD}}^{-++}\cap O^{-++}$ belongs to $(((\mathcal{C}%
_{\text{DD}}^{-++}\left(  \mathcal{N}\right)  +L_{\text{ED}})\cap
Q^{-0+})-L_{\text{ED}})\cap O^{-++}$. The second set equivalence follows from
the definition of $f$, and the third set equivalence follows from the
definition of $\hat{f}$.

It is operationally clear that the following inclusion holds%
\begin{equation}
f(\mathcal{C}_{\text{DD}}^{-++}\left(  \mathcal{N}\right)  )\subseteq
\mathcal{C}_{\text{DD}}^{-0+}\left(  \mathcal{N}\right)  , \label{DDnpp2}%
\end{equation}
because the mapping $f$ converts any achievable point in $\mathcal{C}%
_{\text{DD}}^{-++}\left(  \mathcal{N}\right)  $ to an achievable point in
$\mathcal{C}_{\text{DD}}^{-0+}\left(  \mathcal{N}\right)  $ by combining it
with entanglement distribution.

Forward classical communication does not increase the entanglement generation
capacity \cite{BDSW96,BKN98}. Thus, the following result from
(\ref{eq:converse-forward-class-comm-ent-gen}) applies%
\begin{equation}
\mathcal{C}_{\text{DD}}^{-0+}(\mathcal{N})=\widetilde{\mathcal{C}}_{\text{DD}%
}^{-0+}(\mathcal{N}). \label{DDnpp4}%
\end{equation}
It then follows that%
\begin{align}
&  \widetilde{\mathcal{C}}_{\text{DD}}^{-0+}(\mathcal{N})\nonumber\\
&  =\widetilde{\mathcal{C}}_{\text{DD}}^{00+}(\mathcal{N})+L^{-00}\nonumber\\
&  =(((\widetilde{\mathcal{C}}_{\text{CEF}}(\mathcal{N})\cap Q^{0+-}%
)+L_{\text{ED}})\cap L^{00+})+L^{-00}\nonumber\\
&  \subseteq\widetilde{\mathcal{C}}_{\text{CEF}}(\mathcal{N})+L_{\text{ED}%
}+L^{-00}\nonumber\\
&  =\widetilde{\mathcal{C}}_{\text{CEF}}(\mathcal{N})+L_{\text{ED}%
}+L_{\text{TP}}. \label{DDnpp5}%
\end{align}
The first set equivalence follows from
(\ref{eq:converse-forward-class-comm-ent-gen}). The second set equivalence
follows from (\ref{eq:CAEG-from-father-ED}). The first inclusion follows by
dropping the intersections with $Q^{0+-}$ and $L^{00+}$, and the last
inclusion follows because entanglement distribution and teleportation can
generate any point along $L^{-00}$.

The inclusion $\hat{f}(\widetilde{\mathcal{C}}_{\text{DD}}^{-0+}%
(\mathcal{N}))\subseteq\widetilde{\mathcal{C}}_{\text{DD}}^{-++}(\mathcal{N})$
holds because
\begin{align}
&  \hat{f}(\widetilde{\mathcal{C}}_{\text{DD}}^{-0+}(\mathcal{N}))\nonumber\\
&  \subseteq((\widetilde{\mathcal{C}}_{\text{CEF}}(\mathcal{N})+L_{\text{TP}%
}+L_{\text{ED}})-L_{\text{ED}})\cap O^{-++}\nonumber\\
&  =((\widetilde{\mathcal{C}}_{\text{CEF}}(\mathcal{N})+L_{\text{TP}%
}+L_{\text{ED}})\cap O^{-++})\cup\nonumber\\
&  \ \ \ \ \ \ \ \ \ \ \ \ \ \ \ \ \ \ \ \ \ \ \ \ \ \ ((\widetilde
{\mathcal{C}}_{\text{CEF}}(\mathcal{N})+L_{\text{TP}}-L_{\text{ED}})\cap
O^{-++})\nonumber\\
&  \subseteq\widetilde{\mathcal{C}}_{\text{DD}}^{-++}\left(  \mathcal{N}%
\right)  . \label{DDnpp3}%
\end{align}
The first inclusion follows from (\ref{DDnpp5}) and the definition of $\hat
{f}$. The first set equivalence follows because $(\widetilde{\mathcal{C}%
}_{\text{CEF}}(\mathcal{N})+L_{\text{TP}}+L_{\text{ED}})-L_{\text{ED}%
}=(\widetilde{\mathcal{C}}_{\text{CEF}}(\mathcal{N})+L_{\text{TP}%
}+L_{\text{ED}})\cup(\widetilde{\mathcal{C}}_{\text{CEF}}(\mathcal{N}%
)+L_{\text{TP}}-L_{\text{ED}})$, and the last inclusion follows because
$(\widetilde{\mathcal{C}}_{\text{CEF}}(\mathcal{N})+L_{\text{TP}}%
-L_{\text{ED}})\cap O^{-++}=(0,0,0)$ and (\ref{CEF_DD}).

Putting (\ref{DDnpp1}), (\ref{DDnpp2}), (\ref{DDnpp4}), and (\ref{DDnpp3})
together, the following inclusion holds%
\begin{multline*}
\mathcal{C}_{\text{DD}}^{-++}\left(  \mathcal{N}\right)  \subseteq\hat
{f}(f(\mathcal{C}_{\text{DD}}^{-++}\left(  \mathcal{N}\right)  ))\\
\subseteq\hat{f}(\mathcal{C}_{\text{DD}}^{-0+}\left(  \mathcal{N}\right)
)\subseteq\hat{f}(\widetilde{\mathcal{C}}_{\text{DD}}^{-0+}\left(
\mathcal{N}\right)  )\subseteq\widetilde{\mathcal{C}}_{\text{DD}}^{-++}\left(
\mathcal{N}\right)  .
\end{multline*}
The above inclusion $\mathcal{C}_{\text{DD}}^{-++}\left(  \mathcal{N}\right)
\subseteq\widetilde{\mathcal{C}}_{\text{DD}}^{-++}\left(  \mathcal{N}\right)
$ is the statement of the converse theorem for this octant.

\section{$\boldsymbol{(-,-,+)}$ Octant of the Direct Dynamic Capacity Region}

\label{sec:--+dynamic}The proof technique for this octant is similar to that
for the $\left(  +,-,-\right)  $ octant in the static case. We exploit the
bijection between the quantum communication achievable rate region
$\widetilde{\mathcal{C}}_{\text{DD}}^{0+0}$ and the entanglement generation
achievable rate region $\widetilde{\mathcal{C}}_{\text{DD}}^{00+}$. We need
the following lemma.

\begin{lemma}
\label{lem:additivity--0+}The following inclusion holds%
\[
\mathcal{C}_{\text{DD}}^{-0+}(\mathcal{N}\otimes\text{id}^{\otimes
|Q|})\subseteq\widetilde{\mathcal{C}}_{\text{DD}}^{-0+}(\mathcal{N}%
)+\widetilde{\mathcal{C}}_{\text{DD}}^{-0+}(\text{id}^{\otimes|Q|}).
\]

\end{lemma}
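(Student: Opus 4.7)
\begin{IEEEproof}[Proof Proposal for Lemma~\ref{lem:additivity--0+}]
The plan is to mimic the proof of Lemma~\ref{lem:additivity-+0+} almost verbatim, replacing the bijection between the classically-enhanced quantum communication region and the classically-enhanced entanglement generation region by a bijection between the classically-assisted quantum communication region $\widetilde{\mathcal{C}}_{\text{DD}}^{-+0}(\mathcal{N})$ and the classically-assisted entanglement generation region $\widetilde{\mathcal{C}}_{\text{DD}}^{-0+}(\mathcal{N})$. Entanglement distribution again supplies the bijection: for every point $(C,Q,0)\in\widetilde{\mathcal{C}}_{\text{DD}}^{-+0}(\mathcal{N})$, combining the underlying protocol with entanglement distribution produces the point $(C,0,Q)$, so I will define the linear map
\[
f:(C,Q,0)\to(C,0,Q)
\]
and argue that $f(\widetilde{\mathcal{C}}_{\text{DD}}^{-+0}(\mathcal{N}))=\widetilde{\mathcal{C}}_{\text{DD}}^{-0+}(\mathcal{N})$, exactly as in (\ref{eq:ent-dist-CEQC-CEEG}). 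Surjectivity here follows from (\ref{eq:CAEG-from-father-ED}) and (\ref{eq:CAQG-from-father-ED}), which show that both regions arise from the father achievable region in $Q^{0+-}$ composed with entanglement distribution.

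Next I would combine this bijection with the additivity trick. Observing that a noiseless qubit channel of size $|Q|$ adds the rate triple $(0,Q,0)$ to the classically-assisted quantum communication capacity region of $\mathcal{N}$, and that $(0,Q,0)\in\mathcal{C}_{\text{DD}}^{-+0}(\text{id}^{\otimes|Q|})$, I obtain the chain
\begin{align*}
\mathcal{C}_{\text{DD}}^{-+0}(\mathcal{N}\otimes\text{id}^{\otimes|Q|})
&=\mathcal{C}_{\text{DD}}^{-+0}(\mathcal{N})+(0,Q,0)\\
&\subseteq\mathcal{C}_{\text{DD}}^{-+0}(\mathcal{N})+\mathcal{C}_{\text{DD}}^{-+0}(\text{id}^{\otimes|Q|})\\
&=\widetilde{\mathcal{C}}_{\text{DD}}^{-+0}(\mathcal{N})+\widetilde{\mathcal{C}}_{\text{DD}}^{-+0}(\text{id}^{\otimes|Q|})\\
&=f^{-1}\!\left(\widetilde{\mathcal{C}}_{\text{DD}}^{-0+}(\mathcal{N})+\widetilde{\mathcal{C}}_{\text{DD}}^{-0+}(\text{id}^{\otimes|Q|})\right),
\end{align*}
where the third equality invokes the classically-assisted quantum communication capacity theorem (\ref{DDNP0CR}) for both $\mathcal{N}$ and the noiseless qubit channel, and the last equality uses linearity of $f$.

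Applying $f$ to both sides and then using that $f$ sends $\mathcal{C}_{\text{DD}}^{-+0}(\cdot)=\widetilde{\mathcal{C}}_{\text{DD}}^{-+0}(\cdot)$ to $\widetilde{\mathcal{C}}_{\text{DD}}^{-0+}(\cdot)=\mathcal{C}_{\text{DD}}^{-0+}(\cdot)$ (again by (\ref{DDNP0CR}) and (\ref{DDnpp4})), I conclude
\[
\mathcal{C}_{\text{DD}}^{-0+}(\mathcal{N}\otimes\text{id}^{\otimes|Q|})\subseteq\widetilde{\mathcal{C}}_{\text{DD}}^{-0+}(\mathcal{N})+\widetilde{\mathcal{C}}_{\text{DD}}^{-0+}(\text{id}^{\otimes|Q|}),
\]
which is the desired inclusion. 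The main obstacle is really the bookkeeping around verifying that $f$ is a well-defined linear bijection on the appropriate two-dimensional subregions; this rests on the fact that entanglement distribution is an invertible (up to the discarded qubit channel) linear transformation between these two regions, together with the capacity theorems for classically-assisted quantum communication and entanglement generation. Once those ingredients are in place, no additional information-theoretic argument is required beyond repeating the algebraic manipulation of Lemma~\ref{lem:additivity-+0+}.
\end{IEEEproof}
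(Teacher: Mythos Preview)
Your proposal is correct and follows essentially the same route as the paper's own proof: both use the entanglement-distribution bijection $f:(C,Q,0)\mapsto(C,0,Q)$ between $\widetilde{\mathcal{C}}_{\text{DD}}^{-+0}$ and $\widetilde{\mathcal{C}}_{\text{DD}}^{-0+}$, run the same chain of inclusions via the translation $\mathcal{C}_{\text{DD}}^{-+0}(\mathcal{N}\otimes\text{id}^{\otimes|Q|})=\mathcal{C}_{\text{DD}}^{-+0}(\mathcal{N})+(0,Q,0)$, and finish by applying the capacity theorems (\ref{DDNP0CR}) and (\ref{eq:converse-forward-class-comm-ent-gen}). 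Your extra remark that surjectivity of $f$ can be read off from (\ref{eq:CAEG-from-father-ED}) and (\ref{eq:CAQG-from-father-ED}) together with the $L^{-00}$ extensions is a nice explicit justification that the paper leaves implicit.
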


\begin{IEEEproof}
Entanglement distribution induces a bijective mapping $f:\mathcal{C}_{\text{DD}}%
^{-+0}\left(  \mathcal{N}\right)  \rightarrow \mathcal{C}_{\text{DD}}^{-0+}\left(
\mathcal{N}\right)  $ between the classically-assisted quantum communication
achievable region and the classically-assisted entanglement generation
achievable region. It behaves as follows for every point $(C,Q,0)\in
\mathcal{C}_{\text{DD}}^{-+0}\left(  \mathcal{N}\right)  $:%
\[
f:(C,Q,0)\rightarrow(C,0,Q).
\]
The following relation holds%
\begin{equation}
f(\widetilde{\mathcal{C}}_{\text{DD}}^{-+0}\left(  \mathcal{N}\right)  )=\widetilde
{\mathcal{C}}_{\text{DD}}^{-0+}\left(  \mathcal{N}\right)  , \label{eq:ED-CAQ-CAE}%
\end{equation}
because applying entanglement distribution to the classically-assisted quantum
communication protocol produces classically-assisted entanglement generation.
The inclusion $\mathcal{C}_{\text{DD}}^{-+0}(\mathcal{N}\otimes$id$^{\otimes
|Q|})\subseteq f^{-1}(\widetilde{\mathcal{C}}_{\text{DD}}^{-0+}(\mathcal{N}%
)+\widetilde{\mathcal{C}}_{\text{DD}}^{-0+}($id$^{\otimes|Q|}))$ holds because%
\begin{align*}
&  \mathcal{C}_{\text{DD}}^{-+0}(\mathcal{N}\otimes\text{id}^{\otimes|Q|})\\
&  =\mathcal{C}_{\text{DD}}^{-+0}(\mathcal{N})+(0,Q,0)\\
&  \subseteq \mathcal{C}_{\text{DD}}^{-+0}(\mathcal{N})+\mathcal{C}_{\text{DD}}^{-+0}%
(\text{id}^{\otimes|Q|})\\
&  =\widetilde{\mathcal{C}}_{\text{DD}}^{-+0}(\mathcal{N})+\widetilde{\mathcal{C}}_{\text{DD}%
}^{-+0}(\text{id}^{\otimes|Q|})\\
&  =f^{-1}(\widetilde{\mathcal{C}}_{\text{DD}}^{-0+}(\mathcal{N}))+f^{-1}(\widetilde
{\mathcal{C}}_{\text{DD}}^{-0+}(\text{id}^{\otimes|Q|}))\\
&  =f^{-1}(\widetilde{\mathcal{C}}_{\text{DD}}^{-0+}(\mathcal{N})+\widetilde
{\mathcal{C}}_{\text{DD}}^{-0+}(\text{id}^{\otimes|Q|})).
\end{align*}
The first set equivalence follows because the capacity region of the noisy
channel $\mathcal{N}$ combined with a rate $Q$ noiseless qubit channel is
equivalent to a translation of the capacity region of the noisy channel
$\mathcal{N}$. The first inclusion follows because the capacity region of a
rate $Q$ noiseless qubit channel contains the rate triple $(0,Q,0)$. The
second set equivalence follows from the classically-assisted quantum
communication theorem in (\ref{eq:converse-forward-class-comm-ent-gen}), the
third set equivalence from (\ref{eq:ED-CAQ-CAE}), and the fourth set
equivalence from linearity of the map $f$. The above inclusion implies the
following one:
\[
f(\mathcal{C}_{\text{DD}}^{-+0}(\mathcal{N}\otimes\text{id}^{\otimes|Q|}))\subseteq
\widetilde{\mathcal{C}}_{\text{DD}}^{-0+}(\mathcal{N})+\widetilde{\mathcal{C}}_{\text{DD}}%
^{-0+}(\text{id}^{\otimes|Q|}).
\]
The lemma follows because%
\begin{align*}
f(\mathcal{C}_{\text{DD}}^{-+0}(\mathcal{N}\otimes\text{id}^{\otimes|Q|}))  &
=f(\widetilde{\mathcal{C}}_{\text{DD}}^{-+0}(\mathcal{N}\otimes\text{id}^{\otimes
|Q|}))\\
&  =\widetilde{\mathcal{C}}_{\text{DD}}^{-0+}(\mathcal{N}\otimes\text{id}^{\otimes
|Q|})\\
&  =\mathcal{C}_{\text{DD}}^{-0+}(\mathcal{N}\otimes\text{id}^{\otimes|Q|}),
\end{align*}
where we apply the relations in (\ref{eq:converse-forward-class-comm-ent-gen}%
), (\ref{eq:ED-CAQ-CAE}), and (\ref{DDNP0CR}).
\end{IEEEproof}

Observe that
\begin{equation}
\widetilde{\mathcal{C}}_{\text{DD}}^{-0+}(\text{id}^{\otimes|Q|}%
)=\widetilde{\mathcal{C}}_{\text{U}}^{-Q+}. \label{DDnnp1}%
\end{equation}
Hence, for all $Q\leq0$,
\begin{equation}
\mathcal{C}_{\text{DD}}^{-Q+}(\mathcal{N})=\mathcal{C}_{\text{DD}}%
^{-0+}(\mathcal{N}\otimes\text{id}^{\otimes|Q|})\subseteq\widetilde
{\mathcal{C}}_{\text{DD}}^{-0+}(\mathcal{N})+\widetilde{\mathcal{C}}%
_{\text{U}}^{-Q+}, \label{DDnnp2}%
\end{equation}
where we apply Lemma \ref{lem:additivity--0+} and (\ref{DDnnp1}). The
inclusion $\mathcal{C}_{\text{DD}}^{--+}(\mathcal{N})\subseteq\widetilde
{\mathcal{C}}_{\text{DD}}^{--+}(\mathcal{N})$ holds because%
\begin{align*}
\mathcal{C}_{\text{DD}}^{--+}(\mathcal{N})  &  =\bigcup_{Q\leq0}%
\mathcal{C}_{\text{DD}}^{-Q+}(\mathcal{N})\\
&  \subseteq\bigcup_{Q\leq0}(\widetilde{\mathcal{C}}_{\text{DD}}%
^{-0+}(\mathcal{N})+\widetilde{\mathcal{C}}_{\text{U}}^{-Q+})\\
&  =(\widetilde{\mathcal{C}}_{\text{DD}}^{-0+}(\mathcal{N})+\widetilde
{\mathcal{C}}_{\text{U}})\cap O^{--+}\\
&  \subseteq(\widetilde{\mathcal{C}}_{\text{CEF}}(\mathcal{N})+\widetilde
{\mathcal{C}}_{\text{U}})\cap O^{--+}\\
&  =\widetilde{\mathcal{C}}_{\text{DD}}^{--+}(\mathcal{N}).
\end{align*}
The first set equivalence holds by definition. The first inclusion follows
from (\ref{DDnnp2}). The second set equivalence follows because $\bigcup
_{Q\leq0}\widetilde{\mathcal{C}}_{\text{U}}^{-Q+}=\widetilde{\mathcal{C}%
}_{\text{U}}\cap O^{--+}$. The second inclusion follows because combining the
classically-enhanced father region with entanglement distribution and
teleportation gives the region for classically-assisted entanglement
generation. The above inclusion $\mathcal{C}_{\text{DD}}^{--+}(\mathcal{N}%
)\subseteq\widetilde{\mathcal{C}}_{\text{DD}}^{--+}(\mathcal{N})$ is the
statement of the converse theorem for this octant.

\section{Information-Theoretic Argument for the Converse of the
$\boldsymbol{(-,+,-)}$ Dynamic Octant}

\label{sec:converse_-+-_dynamic}We provide an information-theoretic proof of
the following bounds for all rate triples $\left(  -\left\vert C\right\vert
,Q,-\left\vert E\right\vert \right)  $ in the $\left(  -,+,-\right)  $ dynamic
octant (classical- and entanglement-assisted quantum communication):%
\begin{align}
2Q &  \leq I\left(  AX;B\right)  +\left\vert C\right\vert ,\label{eq:-+-1}\\
Q &  \leq I\left(  A\rangle BX\right)  +\left\vert E\right\vert
.\label{eq:-+-2}%
\end{align}
%

%TCIMACRO{\FRAME{ftbpFU}{3.5397in}{2.4007in}{0pt}{\Qcb{(Color online) The most
%general protocol for quantum communication with the help of a noisy channel,
%noiseless entanglement, and noiseless classical communication. Alice wishes to
%communicate a quantum register $A_{1}$ to Bob. She shares entanglement with
%Bob in the form of maximally entangled states. Her half of the entanglement is
%in system $T_{A}$ and Bob's half is in the system $T_{B}$. Alice performs some
%quantum instrument $T$ on her quantum register and her half of the
%entanglement. The output of this instrument is a classical message in some
%register $M$ and a large number of systems $A^{\prime n}$ that are input to
%the channel. She transmits $A^{\prime n}$ through the noisy channel and the
%system $M$ over noiseless classical channels. Bob receives the outputs $B^{n}$
%of the channel and the register $M$ from the noiseless classical channels. He
%combines these with his half of the entanglement and decodes the quantum state
%that Alice transmits.}}{\Qlb{fig:CEA-quantum-comm}}{cea-quantum-comm.pdf}%
%{\special{ language "Scientific Word";  type "GRAPHIC";
%maintain-aspect-ratio TRUE;  display "USEDEF";  valid_file "F";
%width 3.5397in;  height 2.4007in;  depth 0pt;  original-width 8.6533in;
%original-height 5.8461in;  cropleft "0";  croptop "1";  cropright "1";
%cropbottom "0";  filename '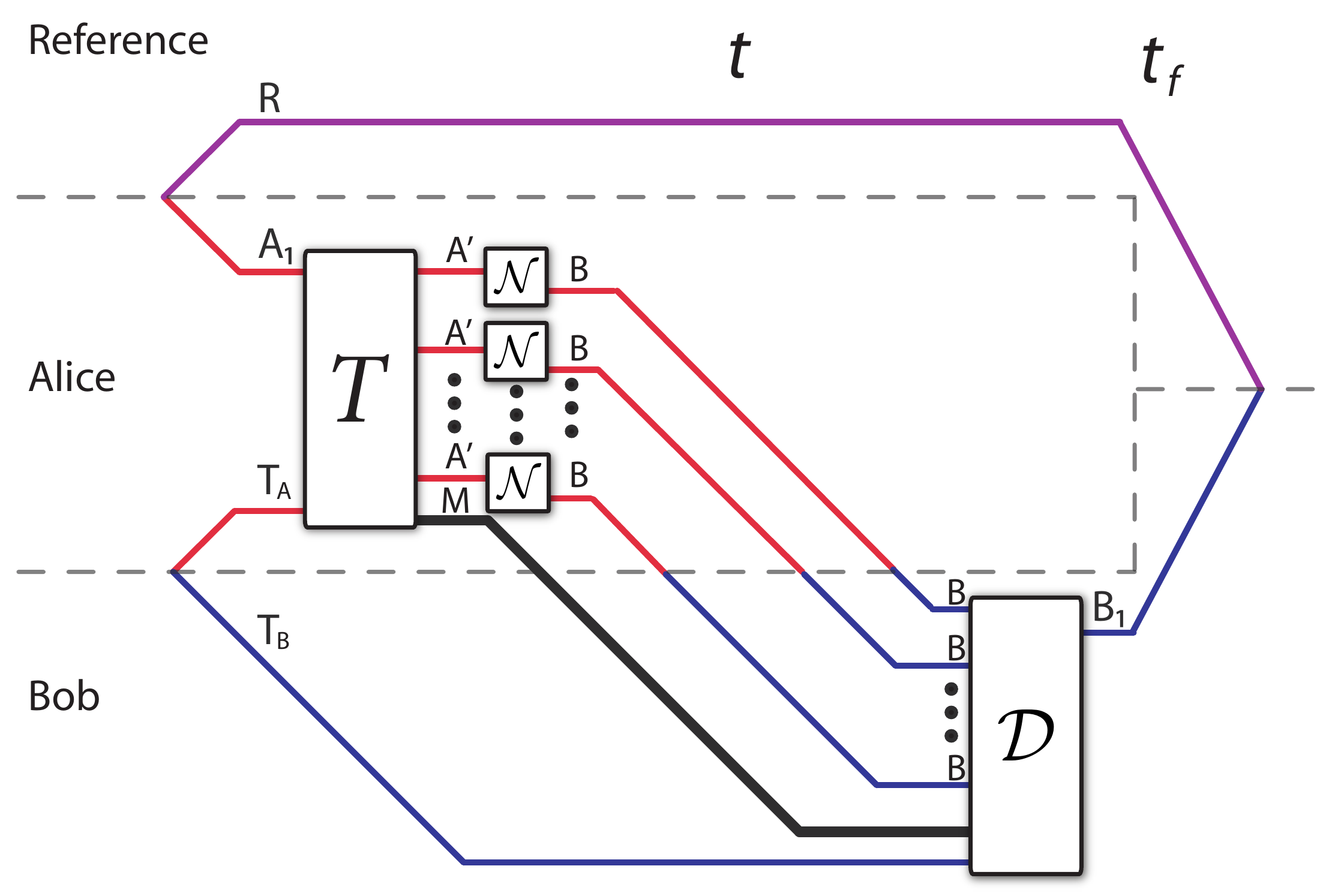';file-properties "XNPEU";}}}%
%BeginExpansion
\begin{figure}
[ptb]
\begin{center}
\includegraphics[
natheight=5.846100in,
natwidth=8.653300in,
height=2.4007in,
width=3.5397in
]%
{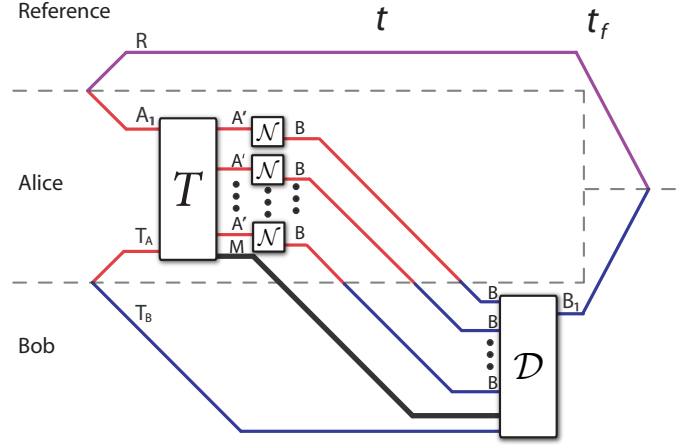}%
\caption{(Color online) The most general protocol for quantum communication
with the help of a noisy channel, noiseless entanglement, and noiseless
classical communication. Alice wishes to communicate a quantum register
$A_{1}$ to Bob. She shares entanglement with Bob in the form of maximally
entangled states. Her half of the entanglement is in system $T_{A}$ and Bob's
half is in the system $T_{B}$. Alice performs some quantum instrument $T$ on
her quantum register and her half of the entanglement. The output of this
instrument is a classical message in some register $M$ and a large number of
systems $A^{\prime n}$ that are input to the channel. She transmits $A^{\prime
n}$ through the noisy channel and the system $M$ over noiseless classical
channels. Bob receives the outputs $B^{n}$ of the channel and the register $M$
from the noiseless classical channels. He combines these with his half of the
entanglement and decodes the quantum state that Alice transmits.}%
\label{fig:CEA-quantum-comm}%
\end{center}
\end{figure}
%EndExpansion
Figure~\ref{fig:CEA-quantum-comm}\ depicts the most general protocol for
classical- and entanglement-assisted quantum communication. Alice wishes to
transmit the $A_{1}$ system of a maximally entangled state $\Phi^{RA_{1}}$ and
shares a maximally entangled state $\Phi^{T_{A}T_{B}}$\ with Bob on systems
$T_{A}$ and $T_{B}$. Her initial state is as follows:%
\[
\omega^{RA_{1}T_{A}T_{B}}\equiv\Phi^{RA_{1}}\otimes\Phi^{T_{A}T_{B}}.
\]
She performs a quantum instrument $T^{A_{1}T_{A}\rightarrow A^{\prime n}M}%
$\ on systems $A_{1}$ and $T_{A}$ to produce a quantum system $A^{\prime n}$
and a classical system $M$ where $A^{\prime n}$ goes to the noisy quantum
channel and $M$ goes to the noiseless classical channel. The state is then:%
\[
\omega^{RA^{\prime n}MT_{B}}\equiv T^{A_{1}T_{A}\rightarrow A^{\prime n}%
M}(\omega^{RA_{1}T_{A}T_{B}}).
\]
The channel $\mathcal{N}$\ transforms $A^{\prime n}$ to $B^{n}$ to produce the
following state:%
\begin{equation}
\omega^{RB^{n}MT_{B}}\equiv\mathcal{N}^{A^{\prime n}\rightarrow B^{n}}%
(\omega^{RA^{\prime n}MT_{B}}). \label{eq:CEA-quantum-state}%
\end{equation}
At this point, the state is almost a state of the form in (\ref{DD_sigma})
with $A\equiv RT_{B}$ (more on this later). Bob combines the classical system
$M$ and the quantum systems $B^{n}$ and $T_{B}$ at a conditional quantum
channel $\mathcal{D}^{MB^{n}T_{B}\rightarrow B_{1}}$\ to produce the state
$B_{1}$ giving the following state:%
\[
(\omega^{\prime})^{RB_{1}}\equiv\mathcal{D}^{MB^{n}T_{B}\rightarrow B_{1}%
}(\omega^{RB^{n}MT_{B}}).
\]
The protocol is $\epsilon$-good if the state $\omega^{\prime}$ is close in
trace distance to Alice's original state $\Phi^{RA_{1}}$:%
\begin{equation}
\left\Vert (\omega^{\prime})^{RB_{1}}-\Phi^{RA_{1}}\right\Vert _{1}%
\leq\epsilon. \label{eq:good-CEA-quantum-code}%
\end{equation}

We first prove the bound in (\ref{eq:-+-1}). Consider the following chain of inequalities:%

\begin{align*}
n2Q  &  =I\left(  R;B_{1}\right)  _{\Phi}\\
&  \leq I\left(  R;B_{1}\right)  _{\omega^{\prime}}+n\delta^{\prime}\\
&  \leq I\left(  R;B^{n}T_{B}M\right)  _{\omega}+n\delta^{\prime}\\
&  =I\left(  R;B^{n}|T_{B}M\right)  _{\omega}+I\left(  R;T_{B}M\right)
_{\omega}+n\delta^{\prime}\\
&  =I\left(  RT_{B}M;B^{n}\right)  _{\omega}-I\left(  T_{B}M;B^{n}\right)
_{\omega}\\
&  \ \ \ \ \ \ +I\left(  R;T_{B}M\right)  _{\omega}+n\delta^{\prime}\\
&  \leq I\left(  RT_{B}M;B^{n}\right)  _{\omega}+I\left(  R;T_{B}\right)
_{\omega}\\
&  \ \ \ \ \ \ +I\left(  R;M|T_{B}\right)  _{\omega}+n\delta^{\prime}\\
&  =I\left(  RT_{B}M;B^{n}\right)  _{\omega}+H\left(  M|T_{B}\right)
_{\omega}\\
&  \ \ \ \ \ \ -H\left(  M|T_{B}R\right)  _{\omega}+n\delta^{\prime}\\
&  \leq I\left(  AM;B^{n}\right)  _{\omega}+n\left\vert C\right\vert
+n\delta^{\prime}%
\end{align*}
The first equality follows by evaluating the quantum mutual information on the
maximally entangled state $\Phi^{RB_{1}}$. The first inequality follows from
the condition in (\ref{eq:good-CEA-quantum-code}) and from a variation of the
Alicki-Fannes' inequality with $\delta^{\prime}\equiv5\left\vert Q\right\vert
\epsilon+3H_{2}\left(  \epsilon\right)  /n$ (Corollary~1 of
Ref.~\cite{HW08GFP}). The second inequality follows from the quantum data
processing inequality \cite{NC00}. The third and fourth equalities follow by
expanding the quantum mutual information $I\left(  R;B^{n}T_{B}M\right)
_{\omega}$\ with the chain rule. The third inequality follows because
$I\left(  T_{B}M;B^{n}\right)  _{\omega}\geq0$ and by expanding the quantum
mutual information $I\left(  R;T_{B}M\right)  _{\omega}$ with the chain rule.
The fourth equality follows because $I\left(  R;T_{B}\right)  _{\omega}=0$ for
this protocol and by rewriting the mutual information $I\left(  R;M|T_{B}%
\right)  _{\omega}$. The last inequality follows because $H\left(
M|T_{B}\right)  _{\omega}\leq n\left\vert C\right\vert $ and $I\left(
A_{1};T_{B}|M\right)  _{\omega}=H\left(  A_{1}|M\right)  _{\omega}-H\left(
A_{1}|T_{B}M\right)  _{\omega}$. The final inequality follows from the
definition $A\equiv A_{1}T_{B}$ and because $H\left(  M|T_{B}R\right)
_{\omega}\geq0$.

We now prove the bound in (\ref{eq:-+-1}). Consider the following chain of
inequalities:%
\begin{align*}
nQ  &  =I\left(  R\rangle B_{1}\right)  _{\Phi}\\
&  \leq I\left(  R\rangle B_{1}\right)  _{\omega^{\prime}}+n\delta^{\prime}\\
&  \leq I\left(  R\rangle B^{n}T_{B}M\right)  _{\omega}+n\delta^{\prime}\\
&  =H\left(  B^{n}T_{B}M\right)  _{\omega}-H\left(  RB^{n}T_{B}M\right)
_{\omega}+n\delta^{\prime}\\
&  =H\left(  B^{n}M\right)  _{\omega}+H\left(  T_{B}|B^{n}M\right)  _{\omega
}\\
&  \ \ \ \ \ \ -H\left(  RB^{n}T_{B}M\right)  _{\omega}+n\delta^{\prime}\\
&  \leq I\left(  A\rangle B^{n}M\right)  _{\omega}+n\left\vert E\right\vert
+n\delta^{\prime}.
\end{align*}
The first equality follows by evaluating the coherent information of the
maximally entangled state $\Phi^{RB_{1}}$. The first inequality follows from
the condition in (\ref{eq:good-CEA-quantum-code}) and from the Alicki-Fannes'
inequality with $\delta^{\prime}\equiv4\left\vert Q\right\vert \epsilon
+2H_{2}\left(  \epsilon\right)  /n$. The third inequality follows from quantum
data processing \cite{NC00}. The next two equalities follow by expanding the
coherent information. The final inequality follows from the definition
$A\equiv RT_{B}$ and because $H\left(  T_{B}|B^{n}M\right)  _{\omega}\leq
n\left\vert E\right\vert $.

We should make some final statements concerning this proof. The state in
(\ref{eq:CEA-quantum-state}) as we have defined it is not quite a state of the
form in (\ref{DD_sigma}) because the instrument has an environment. Though, a
few arguments demonstrate that a particular type of instrument works just as
well as a general instrument, and it then follows that the state in
(\ref{eq:CEA-quantum-state}) is of the form in (\ref{DD_sigma}). First,
consider that a general instrument $T^{A_{1}T_{A}\rightarrow A^{\prime n}M}%
$\ has a realization as an isometry $U_{T}^{A_{1}T_{A}\rightarrow A^{\prime
n}E^{\prime}E_{M}}$ followed by a von Neumann measurement of the system
$E_{M}$ in the basis $\{\left\vert m\right\rangle \left\langle m\right\vert
^{M}\}$ (see the discussion of the CP\ formalism in Ref.~\cite{DHW05RI}). The
system $E^{\prime}$ is not involved in any of the entropic expressions in
(\ref{eq:-+-1}-\ref{eq:-+-2}). Thus, Alice can measure the system $E^{\prime}$
in some classical basis $\left\vert l\right\rangle \left\langle l\right\vert
$, obtaining a classical variable $L$, and a new state $\sigma$ whose
entropies in (\ref{eq:-+-1}-\ref{eq:-+-2})\ are the same as those of the
original state $\omega$. Additionally, the action of Alice's von Neumann
measurement of $E^{\prime}$ makes the state $\sigma$ be a state of the form in
(\ref{DD_sigma}). The following inequalities then hold by the quantum data
processing inequality:%
\begin{align*}
I\left(  AM;B^{n}\right)  _{\omega}  &  \leq I\left(  AML;B^{n}\right)
_{\sigma},\\
I\left(  A\rangle B^{n}M\right)  _{\omega}  &  \leq I\left(  A\rangle
B^{n}ML\right)  _{\sigma},
\end{align*}
demonstrating that it is sufficient to consider states of the form in
(\ref{DD_sigma}) for determining the capacity region for this octant.

\bibliographystyle{IEEEtran}
\bibliography{Ref}

\end{document}